\documentclass[prd,aps,amsfonts,eqsecnum,nofootinbib,notitlepage,longbibliography]{revtex4-1}

\usepackage{graphicx}
\usepackage{tabularx}
\usepackage{braket}
\usepackage{amssymb}
\usepackage{amsmath}
\usepackage{bm}
\usepackage[driverfallback=dvipdfm]{hyperref}
\hypersetup{colorlinks=true,breaklinks,urlcolor=blue,linkcolor=blue,citecolor=blue}

\newcommand{\red}[1]{\textcolor{black}{#1}}

\newcommand{\comment}[1]{}

\newcommand{\lr}[1]{\left( #1\right)}
\newcommand{\mlr}[1]{\left[ #1\right]}

\newcommand{\norm}[1]{\left\lVert#1\right\rVert}

\newcommand{\ee}{\mathrm{e}}

\makeatletter
\renewcommand{\p@subsection}{}
\renewcommand{\p@subsubsection}{}
\makeatother

\usepackage{amsthm}
\newtheorem{thm}{Theorem}
\numberwithin{thm}{section}
\newtheorem{cor}[thm]{Corollary}
\newtheorem{lem}[thm]{Lemma}

\newtheorem{prop}[thm]{Proposition}

\begin{document} 

\title{Finite speed of quantum information in models of interacting bosons at finite density}

\author{Chao Yin}
\email{chao.yin@colorado.edu}
\affiliation{Department of Physics and Center for Theory of Quantum Matter, University of Colorado, Boulder CO 80309, USA}

\author{Andrew Lucas}
\email{andrew.j.lucas@colorado.edu}
\affiliation{Department of Physics and Center for Theory of Quantum Matter, University of Colorado, Boulder CO 80309, USA}

\date{May 17, 2022}

\begin{abstract}
We prove that quantum information propagates with a finite velocity in any model of interacting bosons whose (possibly time-dependent) Hamiltonian contains spatially local single-boson hopping terms along with arbitrary local density-dependent interactions.  More precisely, with density matrix $\rho \propto \exp[-\mu N]$ (with $N$ the total boson number), ensemble averaged correlators of the form $\langle [A_0,B_r(t)]\rangle $, along with out-of-time-ordered correlators, must vanish as the distance $r$ between two local operators grows, unless $t \ge r/v$ for some finite speed $v$.  In one dimensional models, we give a useful extension of this result that demonstrates the smallness of all matrix elements of the commutator $[A_0,B_r(t)]$ between finite density states if $t/r$ is sufficiently small.    Our bounds are relevant for physically realistic initial conditions in experimentally realized models of interacting bosons. In particular, we prove that $v$ can scale no faster than linear in number density in the Bose-Hubbard model: this scaling matches previous results in the high density limit.  The quantum walk formalism underlying our proof provides an alternative method for bounding quantum dynamics in models with unbounded operators and infinite-dimensional Hilbert spaces, where Lieb-Robinson bounds have been notoriously challenging to prove.
\end{abstract}

\maketitle

\tableofcontents

\section{Introduction}
In Einstein's theory of relativity, information cannot travel faster than the speed of light $c$.  However, there can also be emergent speed limits (such as a speed of sound which controls auditory signaling) which are much slower than $c$.  In quantum mechanical systems, it was first proved by Lieb and Robinson \cite{Lieb1972} that there is a finite speed of quantum information in local lattice models with finite-dimensional Hilbert spaces (on any given site).   Analogously to the relativistic setting, it is said that these local lattice models have a ``Lieb-Robinson light cone" -- information propagates with a finite velocity $v$, and signals cannot be sent between ``spacelike separated" qubits, separated by a distance $x>vt$. Especially in recent years, many authors have qualitatively improved upon the original bounds of Lieb and Robinson, both in local lattice models \cite{Hastings:2005pr,Nachtergaele_2006,Bentsen_2019,chen2019operator,PRXQuantum.1.010303}, in dissipative and non-unitary dynamics \cite{Poulin_2010}, models with power-law interactions \cite{fossfeig,else,Tran_2019_polyLC,chen2019finite,Kuwahara:2019rlw,Tran:2020xpc,Kuwahara_OTOC,chen2021concentration,tran2021optimal,chen2021optimal}, in all-to-all interacting models \cite{guo2019signaling,Yin:2020pjd}, in semiclassical spin models \cite{PRXQuantum.1.010303,Yin:2020oze}, and even in microscopic toy models of quantum gravity \cite{Lucas:2019cxr,Lucas:2020pgj}.

However, it has proven notoriously difficult to find rigorous bounds on quantum dynamics in models with infinite dimensional Hilbert spaces.  This is not a simple mathematical curiosity, avoidable in any practical physical setting: any quantum mechanical system with conventional bosonic degrees of freedom, such as photons or phonons, has an infinite dimensional Hilbert space arising from the bosonic degrees of freedom.  Indeed, a simple model demonstrates that quantum information can propagate arbitrarily fast in certain bosonic systems \cite{gross}, and so any bound on dynamics must be restricted to special kinds of bosonic models.  Nevertheless, the model of \cite{gross} is somewhat unusual -- the ``hopping terms" in the Hamiltonian can annihilate or create two bosons, rather than moving a single boson from one site to another.  Could it be the case that in more physically relevant bosonic models, there \emph{is} a finite speed of information?   

While initial progress towards answering this question (ideally in the affirmative) was restricted to the analysis of systems with interacting bosons with bounded interactions \cite{Nachtergaele_2008}, or to classical models \cite{Raz_2009}, more recent work has been able to bound special classes of commutators in interacting models which have boson-spin interactions \cite{LRion} of a very special kind, relevant to cavity quantum electrodynamics \cite{Leroux_2010} or trapped ion crystals \cite{Britton_2012}.   Attempts to derive a finite velocity on information propagation have also been successfully made when restricting to states with a finite number of \emph{total} bosons \cite{schuch} (yet vanishing boson density in the thermodynamic limit).  In macroscopic quantum states with sufficiently low number density of bosons, a recently derived bound shows that the shortest time $t$ in which information can propagate a distance $r$ is $t\sim r/\log^2 r$ \cite{kuwahara2021liebrobinson}, in models where the interactions are density-dependent.   The result of \cite{kuwahara2021liebrobinson}, which is relevant to most physically realized models of interacting boson models, roughly suggests that the velocity of information grows with time as $v \lesssim \log^2 t$.  This is almost -- though not quite -- a ``linear light cone" in the same spirit as the Lieb-Robinson bounds on local spin chains.

Despite the very longstanding theoretical challenge in establishing the finiteness of the speed of information rigorously in a model of interacting bosons, more practical work has seemed to clearly confirm that physically relevant Bose gases do have a ``linear light cone" -- namely, a finite velocity with which quantum correlations and information can spread.  In fact, the first crisp experimental observation of a finite velocity of quantum correlations took place in an experiment on one-dimensional ultracold Bose gases \cite{nature12}.  Indeed, many authors \cite{Light08,barmettler,Light14,Light18,Takasueaba9255,kennett} have observed strict light cones in numerical simulations of these Bose gases, all while no rigorous results have been able to generalize the mathematically precise Lieb-Robinson bounds to interacting bosons.  (Of course, due to the challenge of proving a Lieb-Robinson bound for these models, one may not know with mathematical certainty that these simulations are guaranteed to have controllable error!).

This paper closes the longstanding gap between experiment and simulation on the one hand, and mathematical physics on the other.  We prove that correlation functions of interest in physical problems remain small outside of an emergent light cone which propagates with a finite velocity in ``thermal" states with infinite temperature, but a finite number density of bosons, in interacting boson models with density-dependent interactions on any lattice or graph.   In one dimensional models, we prove stronger results:  there is a finite velocity of quantum information in \emph{every} finite density state or ensemble.  As a consequence of this stronger 1d result, we also prove that simulating Bose-Hubbard-like models in 1d is not asymptotically more difficult than simulating a 1d model with a finite-dimensional Hilbert space. Similarly, like in models with finite-dimensional Hilbert spaces \cite{Hastings04,Hastings:2005pr,Nachtergaele_2006}, models with a gapped ground state have correlation functions (in said ground state) which exponentially decay with distance.   These results, along with the mathematical method we use to prove them (which differs somewhat from \cite{schuch,kuwahara2021liebrobinson}) form the key results of this paper. A schematic depiction of our result is provided in Figure \ref{fig:schematic}.

\begin{figure}[t]
\includegraphics[width=0.7\textwidth]{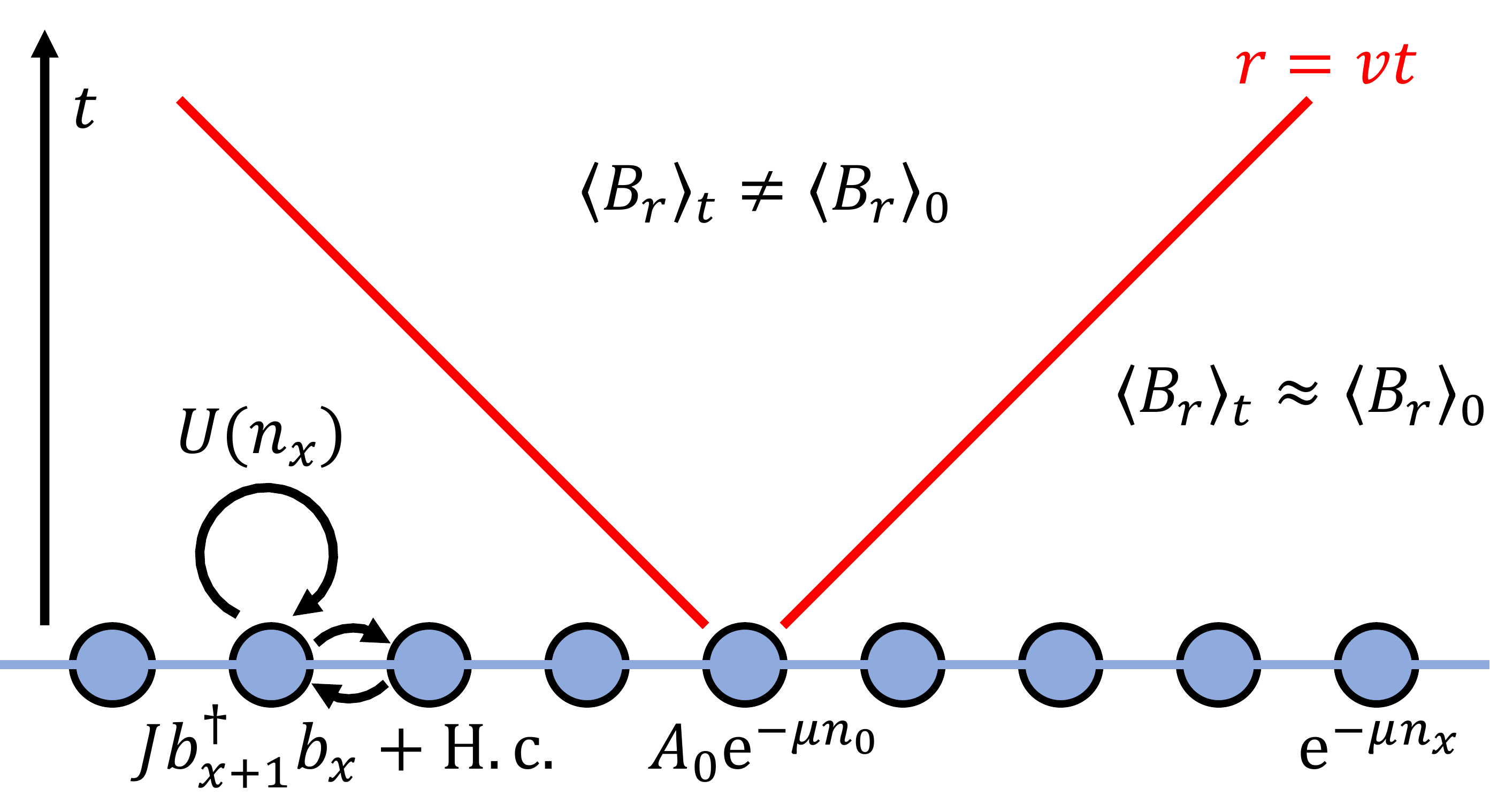}
\caption{\label{fig:schematic}  A schematic depiction of the linear light cone in a model of interacting bosons with single-body hopping terms, as in (\ref{eq:introBH}).   A local perturbation at the origin ($x=0$) can only affect expectation values of observables at position $x=r$ in a grand canonical, finite density ensemble, after a time $t\ge r/v$.  Our proof that the velocity $v$ is finite represents the first rigorous proof that quantum correlations and information must propagate with bounded velocities in a broad family of interacting bosonic models, including (but not limited to) the canonical Bose-Hubbard model.  }

\end{figure}

\section{Intuition behind our results and methods}
In this section, we present a non-rigorous overview of the key results, along with the mathematical techniques we use to prove them.  The following sections then contain the many technical details outlined here.

While our formal results are actually rather broad in scope, by far the most recognizable model to which they apply is the canonical Bose-Hubbard model \cite{gersch,giamarchi,fisher,jaksch,greiner}.  For pedagogical purposes, let us focus here on the one-dimensional version of this model, whose Hamiltonian is \begin{equation}
     H = \sum_{x=-\infty}^\infty \left(J b^\dagger_x b_{x+1} +  J b^\dagger_{x+1} b_{x} + Un_x(n_x-1)\right), \label{eq:introBH}
\end{equation}
where $b^\dagger_x$ and $b_x$ are bosonic creation and annihilation operators on site $x$, \begin{equation}
    n_x = b^\dagger_x b_x 
\end{equation}is the boson number operator, and \begin{equation}
    [b_x,b^\dagger_y] = \delta_{xy}.
\end{equation} 
In (\ref{eq:introBH}), we further assumed that the model is one-dimensional with nearest-neighbor hopping terms. 

Theorem \ref{thm51} proves that in bosonic models like this -- independent of the spatial dimensionality or other details of the lattice -- ``thermal" correlation functions in a finite density grand canonical ensemble are super-exponentially small outside of a ``linear light cone", just as they are in local spin chains.   If $A_0$ and $B_r$ represent two spatially local operators separated by distance $r$, and $\mathcal{O}(t) := \mathrm{e}^{\mathrm{i}Ht} \mathcal{O} \mathrm{e}^{-\mathrm{i}Ht}$ denotes Heisenberg time evolution of an operator,  \begin{equation}
    \frac{\mathrm{tr}\left(\mathrm{e}^{-\mu N} [B_r(t),A_0]\right)}{\mathrm{tr}\left(\mathrm{e}^{-\mu N}\right)} \le c \left(\frac{vt}{r}\right)^{c^\prime r}. \label{eq:main}
\end{equation}  
In this equation $c$ and $c^\prime$ are constants, $v$ is an upper bound on the ``speed of quantum information", and $\mu$ represents a chemical potential for the conserved number of bosons $N$.
\footnote{In a conventional statistical mechanics setting, one usually defines the density matrix as $\rho \propto \mathrm{e}^{-\beta H + \beta \mu_* N}$, with $\mu_*$ defined as the conventional chemical potential.   However, we will consider systems with infinite temperature, or $\beta=0$, where $\mu := -\beta \mu_*$ is held fixed.   Our sign convention on $\mu$ is also changed as it is far more convenient to have $\mu>0$ enforce a finite density of bosons.}
Note that $v$ and $c$ can depend on $\mu$, and in our bound can depend on the observables $A$ and $B$ as well (though this may be an artifact of our bound, and not a physical effect).   We emphasize that in this grand canonical ``thermal" ensemble, the number of bosons $N$ is macroscopically large: indeed, the average occupancy of bosons on a single site is \begin{equation}
    \langle n_x\rangle = \frac{1}{\mathrm{e}^\mu -1 } := \bar{n}.
\end{equation}
Our bound, which proves that $v$ is finite, holds for \emph{any} $0<\mu<\infty$, and thus any finite density $\bar n$.  This result provides a definitive negative answer to the question of whether physically realistic, number-conserving, models of interacting bosons can propagate quantum correlations and information infinitely fast in (typical) finite density states, and settles a decades-old problem in mathematical physics.

To motivate the form of (\ref{eq:main}), consider the following scenario.  We pick a random state at a given chemical potential $\mu$ (let's call it $|\psi\rangle$), and then apply a local perturbation to $|\psi\rangle$:  \begin{equation}
    |\psi^\prime \rangle := |\psi \rangle + \mathrm{i}\epsilon A_0|\psi\rangle + \cdots ,
\end{equation}
with $A_0$ a local operator.   We take the parameter $\epsilon$ to be small and real, and $A_0$ to be Hermitian, for pedagogical purposes here.  How much might this perturbation affect an observable $B_r$, located a distance $r$ away, by time $t$?   This is captured by \begin{align}
    \langle \psi^\prime(t)| B_r|\psi^\prime(t)\rangle - \langle \psi^\prime(0)| B_r|\psi^\prime(0)\rangle &=  \langle \psi| (1-\mathrm{i}\epsilon A_0^\dagger) \mathrm{e}^{\mathrm{i}Ht}    B_r \mathrm{e}^{-\mathrm{i}Ht} (1+\mathrm{i}\epsilon A_0)|\psi\rangle - \langle \psi| (1-\mathrm{i}\epsilon A_0^\dagger)     B_r  (1+\mathrm{i}\epsilon A_0)|\psi\rangle + \cdots \notag \\
    &= \langle \psi | \left(B_r(t)-B_r + \mathrm{i}\epsilon [B_r(t)-B_r, A_0]\right)|\psi\rangle + \cdots \notag \\
    &\approx \mathrm{i}\epsilon \langle \psi | [B_r(t),A_0]|\psi\rangle,
\end{align}
where in the last step we have assumed that $\langle \psi|B_r(t)|\psi\rangle$ is essentially time-independent (thus all time dependence arises entirely from our perturbation), and we have used that two operators which are spatially separated commute:  $[B_r,A_0]=0$.   Under time evolution, $B_r(t)$ becomes a highly non-local operator which can badly fail to commute with $A_0$.   (\ref{eq:main}) shows that the time $t$ required for this to happen is at least as large as $r/v$, for some finite velocity $v$.  

Let us now ascertain whether or not our bound has optimal scaling.  Assuming that operators $A$ and $B$ in (\ref{eq:main}) are creation or annihilation operators (e.g. $A_0 = b_0^\dagger$ and $B_r = b_x$ with $r=a\times x$ where $a$ represents the physical spacing between lattice sites), we find
\begin{equation}
    v \le (496 + 384 \bar n) \frac{Ja}{\hbar}. \label{eq:BHbound}
\end{equation}
Analytical and numerical studies of this particular model \cite{Light08,barmettler,Light14,Light18,Takasueaba9255,kennett} (albeit in studies of slightly different states or ensembles) suggest that \cite{barmettler} \begin{equation}
    v \lesssim (2 + 4 \bar n) \frac{Ja}{\hbar}, \label{eq:BHv}
\end{equation} 
with this bound believed to be tight both  when $\bar n \ll 1$ and when $\bar n \gg 1$.   In the former limit, the tightness of (\ref{eq:BHv}) is seen by noting that the bosonic problem is essentially non-interacting and the maximal velocity set by the dispersion relation of the hopping $J$ terms is $2Ja/\hbar$.  In the latter limit, one can justify the scaling $v\sim \bar n J $ by noting that in a high density state with strong interactions $U \gg J \bar n$, the boson creation/annihilation operators scale as $b,b^\dagger \sim \sqrt{\bar n}$.   Comparing our bound (\ref{eq:BHbound}) to (\ref{eq:BHv}), we see that it is around two orders of magnitude too large, but does capture the right \emph{scaling} of the density dependence both at high and low density.  Moreover, the functional form of our bound (\ref{eq:main}) is easily seen to be optimal by studying the hopping of even a single boson \cite{chen2019operator}.   As a consequence, our bound might be quantitatively, \emph{but not qualitatively}, improved.

As promised above, (\ref{eq:main}) holds in far more than simply the Bose-Hubbard model.  We will prove that our linear light cone bound remains valid for arbitrary spatially local density-dependent interactions, for time-dependent Hamiltonians, and with single-boson hopping terms on any mathematical graph (which of course includes physical lattices in one, two or three dimensions).

Let us briefly outline the steps required to obtain (\ref{eq:BHbound}), where they can be found in the paper, along with our broad strategy of proof.  In Section \ref{sec:models}, we will formally define the space of models which we study.  In Section \ref{sec:qw}, we will define a normalizable ``operator Hilbert space" for bosonic systems, where the grand canonical ensemble $\rho \sim \mathrm{e}^{-\mu N}$ is built into a natural inner product on this ``operator Hilbert space".  To motivate this construction, we first observe that time-dependence in (\ref{eq:main}) is most naturally phrased in the language of growing operators.  This suggests that, as in standard Lieb-Robinson approaches, it will be more natural to think of Heisenberg operator dynamics rather than Schr\"odinger state evolution.  However, a key shortcoming of studying operator dynamics -- and indeed, the critical challenge which has foiled prior attempts to derive bounds on bosonic models -- is that the natural operators of interest, such as $b_x$ and $b^\dagger_y$, is that: (\emph{1}) these operators are \emph{infinite dimensional} (since there are arbitrarily large numbers of bosons that can exist on each site), and (\emph{2}) even more alarmingly, these operators are unbounded.  Mathematically, we write $\lVert b_x\rVert=\infty$ -- the operator norm of $b_x$ does not exist.  Intuitively, this unboundedness just follows from the fact that, even for just one boson, $b|n\rangle = \sqrt{n}|n-1\rangle$, where $|n\rangle, |n-1\rangle$ are normalized: the coefficient $\sqrt{n}$ can be arbitrarily large.   To bound dynamics, we need to demonstrate that these $\sqrt{n}$ factors cannot contribute to ``dangerously fast" Heisenberg dynamics.  Given that prior numerics have already suggested $v\sim \bar n$, resolving this issue is not only techncial, but essential to understanding the physics of how locality might even be possible in a bosonic model.

The way that we overcome this technical challenge is to use the ``many-body quantum walk" formalism for operator growth \cite{Lucas:2019cxr,Yin:2020oze,Yin:2020pjd,Tran:2020xpc}.  In this approach, we take the ``operator Hilbert space" intuition seriously, and think about the operator $b_x(t)$ as a ``quantum state" in some new ``Hilbert space".  Since this new vector space is our own abstract construction, we might as well choose it cleverly, and in particular we find it convenient to define the following inner product between operators: \begin{equation}
    (A|B) := \mathrm{tr}\left(\sqrt{\rho} A^\dagger \sqrt{\rho}B\right), \label{eq:innerproduct}
\end{equation} 
with $\rho \propto \exp[-\mu N]$ for $0<\mu<\infty$.  The notation here is inspired by Dirac's bra-ket notation, but we use parentheses to emphasize that this ``Hilbert space" is not the physical one, but rather exists for operators.  The key feature of (\ref{eq:innerproduct}) is that states with a large number of bosons will have an exponentially small inner product. Therefore, we expect that the unboundedly fast quantum dynamics hinted at in the previous paragraph will be so suppressed by $\sqrt{\rho}$ that we can prove exact bounds on operator dynamics using this inner product.  

To get further intuition for this idea, obesrve that in the operator quantum walk, we write \cite{Chen:2020bmq}
\begin{equation} \label{eq:sec2cartoon}
    b_x(t) = \sum_{i} c_i(t) b_i + \sum_{ijk}c_{ij,k}(t) b_i b_j b^\dagger_k + \sum_{ijklm}c_{ijk,lm}(t) b_i b_j b_k b_l^\dagger b^\dagger_m  + \cdots.
\end{equation}
The coefficients $c_i(t)$, $c_{ij,k}(t)$, etc., are the coefficients of a ``quantum state", but the ``states" such as $b_i b_j b^\dagger_k$ are \emph{not normalized} in the inner product (\ref{eq:innerproduct}).  In fact, we could estimate that e.g. \begin{equation}
   \sqrt{ (b_i b_j b_k^\dagger | b_i b_j b_k^\dagger)} \sim \bar n^{3/2}.
\end{equation}
This means that, as we adjust the thermodynamic density $\bar n$ of interest, the \emph{same} Heisenberg operator $b_x(t)$ will be interpreted quite differently: when $\bar n \gg 1$, long operator strings will be more important than when $\bar n \ll 1$.  To get very rough insight into how this can give rise to a $\bar n$-dependent velocity (\ref{eq:BHv}), imagine that \begin{equation} \label{eq:roughbt}
    b(t) \sim \sum_{m=0}^\infty \frac{t^m}{m!} b_x \prod_{j=1}^m b_{x+j}b_{x+j}^\dagger.
\end{equation}
Since the length of each $b,b^\dagger \sim \sqrt{\bar n}$, we could estimate that the dominant term in the series above arose when $(\bar n t)^n/n!$ is maximal, or when $\bar n t \sim m$.  Since $m$ would correspond to the distance traveled, this would give us velocity $v\sim \bar n$.  In reality, the origin of (\ref{eq:BHv}) is a little more complicated in the Bose-Hubbard model, but this simple cartoon illustrates how a quantum walk formalism can crisply capture $\bar n$-dependent dynamics in an interacting boson model.

To actually \emph{prove} (\ref{eq:BHv}), note that non-vanishing commutators in (\ref{eq:main}) can only arise from the spatial growth of operators.  Therefore, we can actually bound (\ref{eq:main}) by carefully understanding how operator strings of $b$ and $b^\dagger$ evolve using the quantum walk.  To obtain exact results, we bound the growth of operator strings by defining well-chosen ``superobservables" $\mathcal{F}$ on the ``operator Hilbert space".  In a nutshell, we will choose \begin{equation}
    \mathcal{F} \sim \sum_{x=-\infty}^\infty \mathrm{e}^{\lambda |x|} \mathbb{P}_x,
\end{equation}
where $\mathbb{P}_x$ is a projection onto operator strings with at least one $b_x$ or $b^\dagger_x$, and then prove that (via Markov's inequality) \begin{equation}
    \text{if } ([A_0(t),B_x] | [A_0(t),B_x]) \sim 1, \; \text{then } (A_0(t)|\mathcal{F}_x|A_0(t)) \gtrsim \mathrm{e}^{\lambda x}.
\end{equation}
The precise implementation of this idea is detailed in Section \ref{sec:otoc}.


In Section \ref{sec:proof}, we then prove the linear light cone (\ref{eq:main}) by showing that \begin{equation} \label{eq:A0F}
    (A_0(t)|\mathcal{F}|A_0(t)) \lesssim \mathrm{e}^{\kappa t},
\end{equation}
for some finite constant $\kappa$.  This implies that the velocity in (\ref{eq:main}) is \begin{equation}
    v \le \frac{\kappa}{\lambda}.
\end{equation}
Intuitively, this is done by noting that with each step in time, the locality in $H$ means that $\mathcal{F}$ cannot increase too much.  A bit more precisely, we evaluate (\ref{eq:A0F}) in an interaction picture where the \emph{hopping} terms ($J$, in (\ref{eq:introBH})) in the Hamiltonian are treated as the ``perturbation", and the interactions ($U$, in (\ref{eq:introBH})) are the ``unperturbed" terms.  This is because we use a basis for ``operator Hilbert space" where the $U$-terms almost don't contribute to time dependence in $(A_0(t)| \mathcal{F} |A_0(t))$.  And if only hopping terms were present, a linear light cone would exist since the problem would reduce to a single-particle system where Lieb-Robinson bounds are well-established.  The large majority of our proof of the linear light cone amounts to characterizing the extent to which the interactions \emph{can} modify $(A_0(t)| \mathcal{F} |A_0(t))$.  The density-dependent interactions cause the accumulation of many powers of $b^\dagger b$ in (\ref{eq:sec2cartoon}), albeit all on the same lattice site.  It therefore becomes critical to carefully re-sum these contributions.  
Eventually these effects lead to an enhancement in $\kappa$, and hence the velocity of the light cone, beyond what the single-particle hopping terms alone could achieve.  Remarkably (\ref{eq:BHv}) shows this enhancement \emph{is a physical effect}.

The operator growth picture above immediately leads to both bounds on ordinary correlators such as $\mathrm{tr}(\rho [A_0(t),B_r])$, and bounds on out-of-time-ordered correlators $\mathrm{tr}(\sqrt{\rho} [A_0(t),B_r]\sqrt{\rho} [A_0(t),B_r])$: see Corollary \ref{cor45}.  Bounds on these correlators exist in any spatial dimension.  While our light cone is stronger than that in \cite{kuwahara2021liebrobinson}, our bound does not (as of now) apply to correlators in  the thermal state $\rho \sim \mathrm{e}^{-\beta H}$.

Our second main result is the proof of a much stricter notion of light cone in one dimensional models. Theorem \ref{thm71} proves that all matrix elements of $[A_0(t),B_r]$ between finite density quantum states are bounded by a light cone of the form (\ref{eq:main}).  This means that not only a typical finite density state, but \emph{all} finite density states, obey a linear light cone.  Intuitively, the proof of this result is straightforward.  In a chain of length $L$, the number of finite density states scales as $\mathrm{e}^{\mathrm{O}(L)}$.  In the worst case scenario, a bound on $\mathrm{tr}(\sqrt{\rho} [A_0(t),B_L]\sqrt{\rho} [A_0(t),B_L])$ is large because of a \emph{single matrix element} where the commutator is large.  Thus, given any density matrix $\tilde \rho$ corresponding  to a finite density state must have bounded entries: \begin{equation}
    \mathrm{tr}\left(\tilde \rho [A_0(t),B_L]\right) \lesssim \left(\mathrm{e}^{\mathrm{O}(L)}\right)^2 \mathrm{tr}(\sqrt{\rho} [A_0(t),B_L]\sqrt{\rho} [A_0(t),B_L]) \sim \left(\frac{\mathrm{O}(1) \cdot vt}{L}\right)^L.
\end{equation}
The super-exponential decay of (\ref{eq:main}) with $L$ is so strong that it allows us to safely salvage our bound: the number of states factor is fairly negligible.   However, we also need to modify the proof above to deal with the case where the two operators $A_0$ and $B_x$ are separated by distances $x\ll L$; given the picture of local operator growth sketched above, we are able to obtain this result with a bit of further work. 

We prove two important applications of this stronger bound in 1d models.  First,  we bound the classical computational complexity of simulating the Bose-Hubbard model, and prove in Section \ref{sec:complexity} that this task is asymptotically no harder in one dimension than simulating a local 1d spin chain.  This demonstrates a simple and practical application of our formal bound in condensed matter and atomic physics.  Second, we prove in Section \ref{sec:gs} that in any 1d interacting Bose gas with time-independent Hamiltonian and density-dependent interactions, correlation functions in the ground state $|E_0\rangle$ obey \begin{equation}
   \langle E_0| A_0 B_r|E_0\rangle - \langle E_0|A_0|E_0\rangle \langle E_0|B_r|E_0\rangle \lesssim \mathrm{e}^{-r/\xi} \label{eq:maings}
\end{equation}
whenever there is a finite energy gap to the first excited state.  (Here $\xi$ is a finite number, independent of $r$, and $A_0$ and $B_r$ denote two local operators separated by a distance $r$.)  The exponential decay with $r$ in (\ref{eq:maings}) is just as strong as it is in local models.   These two results rigorously show that at least in certain ways, models of interacting bosons -- despite their formally infinite dimensional Hilbert space -- can share many of the same physical properties as models of interacting spins or fermions.

The results highlighted above have many implications.  Here we highlight a few interesting ones, spanning atomic and condensed matter physics, together with quantum information. (\emph{1}) It is common when simulating a Bose-Hubbard model to truncate the Hilbert space, not allowing for arbitrarily large boson number fluctuations on any given site.  Our rigorous results can formally justify such assumptions; indeed, we will describe strong bounds on the computational complexity of classically simulating the Bose-Hubbard model in one dimension in Section \ref{sec:complexity}.  (\emph{2}) Section \ref{sec:gs} demonstrates that (at least in 1d) a simple feature of a phase of matter -- a gapped ground state -- will lead to a finite correlation length in correlation functions, independently of whether the local Hilbert space is finite or not.  Indeed, one would not expect such a mathematical detail to have a profound physical consequence, and our methods lead to a first rigorous demonstration of this expectation.  (\emph{3}) Our results demonstrate that it is not feasible to use Bose gases to asymptotically improve on the operating speed of a future quantum information processor: signals propagate at finite velocities in any physically realizable finite density state. Even though interactions can become arbitrarily strong if one engineers all of the bosons to clump together under the quantum dynamics, our result proves that these enhanced interactions cannot in fact form the basis for rapid spreading of quantum information or correlations.    



\section{Bosonic models with number conservation}\label{sec:models}
Let us now provide technical definitions of the models we will study in this paper. Consider an undirected graph $G=(V,E)$ with vertex set $V$ and edge set $E$ consisting of pairs of vertices.  We do not require $V$ or $E$ to be finite sets, but we will require that the degree of each vertex \begin{equation}
    \deg(v) = |\lbrace e \in E : v\in e \rbrace | \le K
\end{equation}for some finite number $K$; this simply means that each vertex has a finite number of neighbors. 

On each vertex, we place a single bosonic degree of freedom, corresponding to an infinite-dimensional Hilbert space spanned by the states $|n\rangle_v$ for $n\in \mathbb{Z}_{\ge 0}$.  The bosonic raising operator $b^\dagger_v$ and lowering operator $b_v$ on each site are defined as usual: \begin{subequations}\begin{align}
    b^\dagger_v |n\rangle_v &= \sqrt{n+1} |n+1\rangle_v, \\
    b_v |n\rangle_v &= \sqrt{n}|n-1\rangle_v.
\end{align}
\end{subequations}
The global Hilbert space $\mathcal{H}$ of the model contains all normalizable wave functions written in a product basis $\bigotimes_{v \in V} |n\rangle_v$.  Bosonic operators on different sites commute: \begin{equation}
    [b_u, b^\dagger_v] = \delta_{uv}.
\end{equation}
The number operator \begin{equation}
    n_v = b^\dagger_v b_v
\end{equation}
counts the number of bosons on vertex $v$.

In this paper, we will bound quantum dynamics generated by the time-dependent Hamiltonians of the generic form \begin{equation}
    H(t) = \sum_{\lbrace x,y\rbrace \in E} J_{xy}(t) b^\dagger_x b_y   + \sum_{S \subset V: \mathrm{diam}(S) \le \ell} U_S(n_{v \in S}, t)
\end{equation}
with $J_{xy}(t)$ a Hermitian matrix ($J_{xy} = \overline{J_{yx}}$, with overbar denoting complex conjugation), and $U_{S}(n_{v\in S},t)$ an \emph{arbitrary polynomial potential} in the density operators acting in a given subset $S\subset V$ with the property that all sites within $S$ are within a distance $\ell$ of each other.  Here the distance between vertices $u$ and $v$ is defined in the Manhattan sense -- the minimal number of edges traversed to get from one to the other. The dependence on $t$ in the Hamiltonian does not need to be continuous.  

The canonical example of such a model is the Bose-Hubbard model \cite{gersch}, in which after an appropriate choice of units for time: \begin{subequations}\begin{align}
    J_{xy}(t) &= 1, \\
    U_{\lbrace x\rbrace}(n,t) &= U_0 n(n-1) .
\end{align}\end{subequations}
with $U_0>0$ a constant.   However in this paper, the only requirement we will impose is that \begin{equation}
    J_{xy}(t) \le 1.
\end{equation}

A key property of these  models of interacting bosons is: 
\begin{prop}[Number conservation] \label{prop:numcons}
Let the total number of bosons be \begin{equation}
    N := \sum_{x\in V} b^\dagger_v b_v.
\end{equation}
Then \begin{equation}
    [N,H(t)] = 0.
\end{equation}
\end{prop}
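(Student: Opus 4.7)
The plan is to reduce the claim to checking that $N$ commutes with each individual term appearing in $H(t)$, and then dispatch each of the two families of terms (hopping and density-dependent interaction) by a direct application of the canonical commutation relations.

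First I would derive the elementary identities $[N,b_x^\dagger] = b_x^\dagger$ and $[N,b_x] = -b_x$, which follow immediately from $[b_u,b_v^\dagger] = \delta_{uv}$ by summing over sites. In plain words, these say that $b_x^\dagger$ and $b_x$ raise and lower the total boson number by one, respectively. Applied to a hopping term via the Leibniz rule,
\begin{equation}
[N,b_x^\dagger b_y] = [N,b_x^\dagger]\,b_y + b_x^\dagger\,[N,b_y] = b_x^\dagger b_y - b_x^\dagger b_y = 0,
\end{equation}
so every term $J_{xy}(t)\,b_x^\dagger b_y$ (and, by Hermitian conjugation, its partner) commutes with $N$.

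Second, for an interaction term $U_S(n_{v\in S},t)$, I would invoke the fact that the site number operators mutually commute: $[n_u,n_v]=0$ for all $u,v\in V$ (trivially when $u=v$, and because $b_u,b_u^\dagger$ commute with $b_v,b_v^\dagger$ when $u\neq v$). Hence any polynomial in the $\{n_v\}_{v\in S}$ commutes with every $n_w$, and therefore with $N=\sum_w n_w$. So each interaction term also commutes with $N$, and summing the vanishing contributions gives $[N,H(t)]=0$.

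The proof is essentially bookkeeping; no step is genuinely hard. The only subtle point worth flagging is that on an infinite graph $N$ is an unbounded operator and $H(t)$ is formally an infinite sum, so the identity $[N,H(t)]=0$ should be interpreted either term by term in the formal sum or as an equation of operators on the dense subspace spanned by product states with only finitely many excited sites, on which all manipulations above are manifestly well-defined and the sums reduce to finite ones.
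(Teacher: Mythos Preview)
Your proposal is correct; the paper itself does not prove this proposition at all, simply stating it as a ``well-known result'' and moving on, so your term-by-term verification via the Leibniz rule and the mutual commutativity of the $n_v$ supplies exactly what the paper omits. Your closing remark about interpreting the commutator on the dense finite-excitation subspace is a sensible and appropriate caveat.
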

This well-known result will be at the heart of our approach.  In particular, we will now describe a many-body quantum walk formalism which allows us to cleanly control the dynamics of ``thermal" correlators in a finite chemical potential grand canonical ensemble.

\section{Operator Hilbert space for bosons at finite density}\label{sec:qw}
Following \cite{Chen:2020bmq}, we now describe a many-body ``quantum walk formalism" for describing the growth of operators, and ultimately bounding thermal correlation functions.   Another approach which derived state-dependent commutator bounds can be found in \cite{Han:2018bqy}. We do so by defining the inner product (\ref{eq:innerproduct}) on the Hilbert space of operators, with $\rho$ the (grand canonical) thermal density matrix at infinite temperature and finite chemical potential $\mu$: \begin{equation}
    \rho = \bigotimes_{v\in V} (1-\mathrm{e}^{-\mu})\mathrm{e}^{-\mu n_v}. \label{eq:rho}
\end{equation}
We assume $0<\mu<\infty$.   We use the notation $|A)$, $|B)$ for operators to emphasize that the inner product space structure will be essential in the framework that follows.

If we were studying a single bosonic degree of freedom (graph $G$ has one vertex), a useful basis for operator Hilbert space would correspond to $\lbrace |n\rangle\langle n^\prime| : n,n^\prime \in \mathbb{Z}_{\ge 0} \rbrace$.
The Hilbert space of operators would consist of all states which have finite length: if \begin{equation}
    \mathcal{O} := \sum_{n,n^\prime=0}^\infty c_{nn^\prime} |n\rangle\langle n^\prime|,
\end{equation}
then \begin{equation}
    (\mathcal{O}|\mathcal{O}) = (1-\mathrm{e}^{-\mu}) \sum_{n,n^\prime=0}^\infty |c_{nn^\prime}|^2 \mathrm{e}^{-\mu (n+n^\prime)/2} < \infty.
\end{equation}
We will often use the notation \begin{equation}
    |nn^\prime) :=  \frac{\mathrm{e}^{\mu (n+n^\prime)/4}}{\sqrt{1-\mathrm{e}^{-\mu}}} |n\rangle\langle n^\prime|.
\end{equation}
The normalization constant is chosen so that these vectors are orthonormal:\begin{equation}
    (n_1n_1^\prime | n_2 n_2^\prime )  := \delta_{n_1n_2}\delta_{n_2n_2^\prime}.
\end{equation}
Note that in particular, the identity matrix \begin{equation}
    I := \sum_{n=0}^\infty |n\rangle \langle n|
\end{equation}
is a normalizable state and hence exists in the operator Hilbert space, so long as $\mu>0$: \begin{equation}
    |I) := \sqrt{1-\mathrm{e}^{-\mu}} \sum_{n=0}^\infty \mathrm{e}^{-\mu n/2} |nn). \label{eq:Inorm}
\end{equation}  We then define the projection superoperator \begin{equation}
    \mathbb{P}|\mathcal{O}) := |\mathcal{O}) - (I|\mathcal{O})|I) 
\end{equation}
to project any operator off of the identity, the projection operators \begin{equation}
    \mathbb{P}^{nn^\prime} = |nn^\prime)(nn^\prime|,
\end{equation}
and the ``identity superoperator" \begin{equation}
    \mathcal{I} := \sum_{n,n^\prime=0}^\infty \mathbb{P}^{nn^\prime}.
\end{equation}

Our choice of operator basis is a balancing act between two ``competing interests."  On the one hand, since $I$ commutes with all operators, it is ideal to separate out the identity, especially when bounding operator growth and the spreading of quantum information.   On the other, an operator basis such as $I,b,b^\dagger, b^\dagger b, \cdots$ turns out to be quite unwieldy.  Moreover, we will see that the  basis vectors $|nn^\prime)$ only pick up phases under time evolution under the density-dependent interactions $U_S$; this property will be particularly valuable in proving the light cone.  Ultimately, after some tinkering, we  found that working in the $|nn^\prime)$ operator basis, but with projecting out the identity, was the most effective strategy for describing growing operators that we could find.

Now, let us explain the straightforward generalization of this basis to a multi-site problem (vertex set $V$ now has more than one element).    We will typically use subscripts to denote that the objects defined above act on particular vertices:  for example, the projector off of operators that correspond to the identity on vertex $v$ is \begin{equation}
    \mathbb{P}_v := \underbrace{\mathbb{P}}_{\text{site }v} \otimes \underbrace{\bigotimes_{x\in V-v} \mathcal{I}}_{\text{other sites}}.
\end{equation}
Since $\rho$ is a tensor product between vertices, the inner product is well-behaved.   We will find it useful to define the projector onto operators which are not the identity on a subset $R\subset V$: \begin{equation}
    \mathbb{P}_R := 1 - \prod_{v\in R} (1-\mathbb{P}_v).  \label{eq:PR}
\end{equation}

We define the Liouvillian \begin{equation}
    \mathcal{L}(t) := \mathrm{i}[H(t),\cdot]
\end{equation}
to be a ``superoperator" (a linear transformation on the Hilbert space of operators).  The time evolution automorphism on this operator Hilbert space is defined by the equation \begin{equation}
    \frac{\mathrm{d}}{\mathrm{d}t} |A(t)) := \mathcal{L}(t)|A(t)). \label{eq:ddtA}
\end{equation}  We now state a number of useful formal properties of $\mathcal{L}$, and of this inner product space.

\begin{prop}
$\mathcal{L}(t)$ is anti-Hermitian: $\mathcal{L}^\dagger = -\mathcal{L}$, or  $(A|\mathcal{L}|B) = -\overline{(B|\mathcal{L}|A)}$ for any operators $A$ and $B$. \label{prop:antihermitian}
\end{prop}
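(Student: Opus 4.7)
The plan is a direct computation: expand both sides of the claimed identity $(A|\mathcal{L}|B)=-\overline{(B|\mathcal{L}|A)}$ using the trace definition of the inner product, then use number conservation to commute $\sqrt{\rho}$ past $H(t)$ and reshuffle via cyclicity of the trace. The only non-trivial ingredient beyond bookkeeping is the fact that $\rho$ from (\ref{eq:rho}) commutes with $H(t)$, which is \emph{exactly} the content of Proposition \ref{prop:numcons}: since $\rho\propto\mathrm{e}^{-\mu N}$ is a function of $N$ alone and $[N,H(t)]=0$, we have $[\rho,H(t)]=0$ and therefore $[\sqrt{\rho},H(t)]=0$.

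Concretely, starting from
\begin{equation}
 (A|\mathcal{L}|B)=\mathrm{tr}\!\left(\sqrt{\rho}\,A^\dagger\sqrt{\rho}\,\mathrm{i}[H(t),B]\right),
\end{equation}
I would first expand the commutator and, in the term $\sqrt{\rho}A^\dagger\sqrt{\rho}\,BH(t)$, use cyclicity of the trace to rotate $H(t)$ to the front, then slide it through the left-most $\sqrt{\rho}$ using $[\sqrt{\rho},H]=0$. This converts the action of $[H,\cdot]$ on $B$ into an action of $[\cdot,H]$ on $A^\dagger$, producing
\begin{equation}
 (A|\mathcal{L}|B)=-\mathrm{i}\,\mathrm{tr}\!\left(\sqrt{\rho}\,[H(t),A^\dagger]\,\sqrt{\rho}\,B\right).
\end{equation}
Independently, computing $\overline{(B|\mathcal{L}|A)}$ by conjugating the trace (and using $[H,A]^\dagger=-[H,A^\dagger]$ because $H$ is Hermitian) yields the same trace with the opposite overall sign, which is precisely the anti-Hermiticity statement.

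I do not expect a real obstacle here; the entire content of the proposition is that $\mathcal{L}=\mathrm{i}[H,\cdot]$ is anti-Hermitian with respect to \emph{this particular} inner product, and the only fact that makes the KMS-like weight $\sqrt{\rho}$ compatible with cyclic manipulation is $[\sqrt{\rho},H]=0$. If anything deserves care, it is the issue of unbounded operators: $\sqrt{\rho}A^\dagger\sqrt{\rho}H B$ could in principle fail to be trace class even when $(A|A),(B|B)<\infty$. This can be handled by restricting to the dense subspace spanned by the finite linear combinations of the basis vectors $|nn')$ built in Section \ref{sec:qw}, on which every matrix element is a finite sum of absolutely convergent terms thanks to the $\mathrm{e}^{-\mu(n+n')/2}$ Boltzmann suppression; $\mathcal{L}(t)$ extends by continuity from this core. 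The anti-Hermiticity identity, being an algebraic equality between two finite sums on this core, then holds on the closure.
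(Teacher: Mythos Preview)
Your proposal is correct and follows essentially the same route as the paper: both arguments rely on cyclicity of the trace together with $[\sqrt{\rho},H(t)]=0$ (a direct consequence of Proposition~\ref{prop:numcons}) to transfer the commutator with $H$ from $B$ onto $A^\dagger$ and then take the complex conjugate. Your additional remark about working on the dense core spanned by finite combinations of $|nn^\prime)$ and extending by continuity is a nice piece of analytic hygiene that the paper itself omits.
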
\begin{proof}
This result immediately follows from Proposition \ref{prop:numcons}: \begin{equation}
    (A|\mathcal{L}|B) = \mathrm{tr}\left(\sqrt{\rho} A^\dagger \sqrt{\rho} \mathrm{i}[H,B]\right) = \mathrm{tr}\left(\mathrm{i}\left[\sqrt{\rho} A^\dagger \sqrt{\rho}, H\right] B\right) = \mathrm{i} \times \overline{\mathrm{tr}\left(B^\dagger\left[H,\sqrt{\rho} A \sqrt{\rho}\right] \right)}=  \overline{\mathrm{tr}\left(-\mathrm{i} B^\dagger\sqrt{\rho} [H,A] \sqrt{\rho} \right)}
\end{equation}
where the second and third equalities follows from the cyclicity of the trace, and the fourth equality follows from the fact that for any operator $f(N)$, $[H,f(N)]=0$.
\end{proof}
From this result we immediately find the following useful results: 
\begin{cor}
Let $\mathcal{F}$ be a superoperator.  Then the expectation value of $\mathcal{F}$ in operator $|A(t))$ obeys the following equation: \begin{equation}
    \frac{\mathrm{d}}{\mathrm{d}t} (A(t)|\mathcal{F}|A(t)) = (A(t)| [\mathcal{F},\mathcal{L}(t)]|A(t)).
\end{equation} \label{cor:FL}
\end{cor}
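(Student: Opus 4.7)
The plan is to derive the Heisenberg-like evolution equation for the ``expectation value'' $(A(t)|\mathcal{F}|A(t))$ by a direct application of the product rule together with the anti-Hermiticity of $\mathcal{L}$ established in Proposition \ref{prop:antihermitian}. Since $\mathcal{F}$ is a fixed superoperator (not depending on $t$), only the two copies of $|A(t))$ carry time dependence, so I first differentiate under both arguments of the inner product.

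Concretely, I would write
\begin{equation}
\ddt{} (A(t)|\mathcal{F}|A(t)) = \lr{\ddt{A(t)}\,\big|\, \mathcal{F}\, A(t)} + \lr{A(t)\,\big|\, \mathcal{F}\, \ddt{A(t)}},
\end{equation}
and then substitute the operator equation of motion \eqref{eq:ddtA}, $\ddt{|A(t))} = \mathcal{L}(t)|A(t))$, to obtain
\begin{equation}
\ddt{} (A(t)|\mathcal{F}|A(t)) = (\mathcal{L}(t) A(t) \,|\, \mathcal{F}\, A(t)) + (A(t) \,|\, \mathcal{F}\, \mathcal{L}(t)\, A(t)).
\end{equation}
The next step is to move the $\mathcal{L}(t)$ in the first term from the bra onto the ket using Proposition \ref{prop:antihermitian}, applied with ket $\mathcal{F}|A(t))$; this produces a minus sign and yields $-(A(t)|\mathcal{L}(t) \mathcal{F}|A(t))$. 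Combining with the second term immediately gives $(A(t)|[\mathcal{F},\mathcal{L}(t)]|A(t))$, as claimed.

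The only place that really requires any thought is the bra-to-ket manipulation of $\mathcal{L}(t)$, since Proposition \ref{prop:antihermitian} is written as the sesquilinear identity $(X|\mathcal{L}|Y) = -\overline{(Y|\mathcal{L}|X)}$; one has to confirm that this implies $(\mathcal{L}X\,|\,Y) = -(X\,|\,\mathcal{L}\,Y)$ with no surviving complex conjugation, which follows from the general definition of an adjoint on the operator Hilbert space together with the fact that $\mathcal{L}^\dagger = -\mathcal{L}$. Apart from this small check, there is no obstacle — the corollary is essentially a formal consequence of the anti-Hermitian generator of the inner-product-preserving flow defined in \eqref{eq:ddtA}. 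No convergence or unbounded-operator subtleties arise at this stage, because we have restricted attention to $|A(t))$ lying in the operator Hilbert space and $\mathcal{F}$ acts as a superoperator on it; any domain issues would be inherited from the (tacit) assumption that both sides of the equation are well defined.
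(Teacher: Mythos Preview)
Your proof is correct and follows essentially the same route as the paper: differentiate with the product rule, insert \eqref{eq:ddtA}, and use $\mathcal{L}^\dagger=-\mathcal{L}$ from Proposition~\ref{prop:antihermitian} to move $\mathcal{L}(t)$ off the bra with a sign, yielding the commutator. The paper's version is simply the one-line statement that $(\mathcal{L}(t)|A(t)))^\dagger = -(A(t)|\mathcal{L}(t)$, which is exactly the check you spell out.
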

\begin{proof}
This follows from (\ref{eq:ddtA}), and (by Proposition \ref{prop:antihermitian}) $(\mathcal{L}(t)|A(t))^\dagger = (A(t)|\mathcal{L}(t)^\dagger= -(A(t)|\mathcal{L}$.
\end{proof}

\begin{cor}
The length of states in operator Hilbert space does not change with time: \begin{equation}
    (A|A) = (A(t)|A(t)).
\end{equation}
\end{cor}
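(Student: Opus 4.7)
The plan is to derive this as an immediate specialization of the preceding Corollary \ref{cor:FL} to the trivial superobservable. Concretely, I would choose $\mathcal{F} = \mathcal{I}$, the identity superoperator on operator Hilbert space. Since $\mathcal{I}$ commutes with every linear map on that space, and in particular with $\mathcal{L}(t)$, the commutator $[\mathcal{I},\mathcal{L}(t)]$ vanishes identically. Corollary \ref{cor:FL} then gives
\begin{equation}
    \frac{\dd}{\dd t}(A(t)|A(t)) = \frac{\dd}{\dd t}(A(t)|\mathcal{I}|A(t)) = (A(t)|[\mathcal{I},\mathcal{L}(t)]|A(t)) = 0,
\end{equation}
so $(A(t)|A(t))$ is constant in $t$ and in particular equals its value at $t=0$, namely $(A|A)$.

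If one prefers not to route through Corollary \ref{cor:FL}, the same conclusion follows directly from the equation of motion (\ref{eq:ddtA}) and the anti-Hermiticity of $\mathcal{L}(t)$ (Proposition \ref{prop:antihermitian}):
\begin{equation}
    \frac{\dd}{\dd t}(A(t)|A(t)) = (\mathcal{L}(t) A(t) | A(t)) + (A(t) | \mathcal{L}(t) A(t)) = \overline{(A(t)|\mathcal{L}(t)|A(t))} + (A(t)|\mathcal{L}(t)|A(t)),
\end{equation}
which vanishes because anti-Hermiticity forces $(A(t)|\mathcal{L}(t)|A(t))$ to be purely imaginary. The only conceivable subtlety, which I would flag briefly rather than belabor, is that the argument presupposes $|A(t))$ remains in the normed operator space throughout the evolution so that the inner products and time derivatives are well defined; this is automatic from the fact that the Heisenberg equation (\ref{eq:ddtA}) is generated by an anti-Hermitian $\mathcal{L}(t)$, which is the very content being proved here—so it suffices to note that on any finite truncation of the graph the statement is elementary, and the infinite case follows by the usual limiting argument together with the uniform bound just established.
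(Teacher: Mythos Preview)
Your proposal is correct and matches the paper's own treatment: the paper states this corollary without an explicit proof, placing it immediately after Corollary~\ref{cor:FL} precisely because it follows by taking $\mathcal{F}=\mathcal{I}$ (equivalently, directly from anti-Hermiticity of $\mathcal{L}(t)$). Your closing caveat about normability and truncation is unnecessary here but harmless.
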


These three simple facts show us that it is possible to study operator growth in this system by thinking about $|A(t))$ as a normalizable quantum mechanical state in operator Hilbert space, undergoing a quantum walk.  Indeed, physical operators of interest such as $b_v$ and $b^\dagger_v$ are \emph{normalized} states in operator Hilbert space at any $\mu>0$:  for example, \begin{equation}
    |b_v) = \sum_{n=1}^\infty \sqrt{n}|n-1\rangle\langle n|_v = \sum_{n=1}^\infty \sqrt{n(1-\mathrm{e}^{-\mu})} \mathrm{e}^{-\mu (2n-1)/4}  |n-1,n)_v. \label{eq:bvnorm}
\end{equation}

\section{Bounding correlators and commutators} \label{sec:otoc}
In this section, our main purpose is to explain why the notion of normalizability in (\ref{eq:bvnorm}) is all that is required to bound thermal correlators.  We emphasize that it does \emph{not} matter that the conventional operator norm is unbounded.  In order to relate this quantum walk formalism to the questions most conventionally addressed in the literature, it is useful to introduce some auxiliary superoperators.  For simplicity, we start by working in the Hilbert space of a single boson -- as above, it will be straightforward to generalize using tensor products.
Define the superoperator \begin{equation}
    F^\beta = \sum_{n,n^\prime=0}^\infty \max(n+\beta,n^\prime+\beta)^\beta |nn^\prime)(nn^\prime|,
\end{equation}
together with \begin{equation}
    \mathcal{F}^\beta := \mathbb{P}F^\beta\mathbb{P}.
\end{equation}
The following technical proposition shows us the extent to which projecting onto or off of the identity can modify the operator weight in a given $|nn^\prime)$:
\begin{prop}\label{prop:projections}
On a single vertex, consider a normalizable operator \begin{equation}
    |\mathcal{O}) = \sum_{n,n^\prime=0}^\infty \mathcal{O}_{nn^\prime} |nn^\prime)
\end{equation}
obeying $(\mathcal{O}|\mathcal{O})=1$.   Then\begin{subequations}
    \begin{align}
        |(nn|1-\mathbb{P}|\mathcal{O})| &\le \sqrt{1-\mathrm{e}^{-\mu}}\mathrm{e}^{-\mu n/2} = (nn|I), \label{eq:projection1P} \\
        |(nn|\mathbb{P}|\mathcal{O})| &\le |\mathcal{O}_{nn}| + \sqrt{1-\mathrm{e}^{-\mu}}\mathrm{e}^{-\mu n/2}. \label{eq:projectionP} \\
        (I|F^\beta|I) &\le \beta^\beta (1-\mathrm{e}^{-\mu})^{-\beta} \label{eq:IFI}
    \end{align}
\end{subequations}
\end{prop}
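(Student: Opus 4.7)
The three bounds rest on the single algebraic identity $\mathbb{P}|\mathcal{O}) = |\mathcal{O}) - (I|\mathcal{O})|I)$, together with two explicit facts read off from (\ref{eq:Inorm}): the unit normalization $(I|I) = (1-\mathrm{e}^{-\mu})\sum_n \mathrm{e}^{-\mu n} = 1$, and the diagonal overlap $(nn|I) = \sqrt{1-\mathrm{e}^{-\mu}}\,\mathrm{e}^{-\mu n/2}$. For (\ref{eq:projection1P}), the plan is to apply $1-\mathbb{P}$ to $|\mathcal{O})$, giving $(1-\mathbb{P})|\mathcal{O}) = (I|\mathcal{O})|I)$, and then pair against $(nn|$. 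Cauchy--Schwarz combined with $(I|I) = (\mathcal{O}|\mathcal{O}) = 1$ gives $|(I|\mathcal{O})|\le 1$, so $|(nn|1-\mathbb{P}|\mathcal{O})| \le (nn|I)$, exactly the claimed bound. For (\ref{eq:projectionP}), I would apply $\mathbb{P}$ directly: $(nn|\mathbb{P}|\mathcal{O}) = \mathcal{O}_{nn} - (I|\mathcal{O})(nn|I)$, and invoke the triangle inequality together with the same $|(I|\mathcal{O})|\le 1$ estimate used above.

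For (\ref{eq:IFI}), I would substitute the explicit form of $|I)$ and use the diagonality of $F^\beta$ in the $|nn')$ basis; only the $n=n'$ diagonal of $|I)$ survives, so
\begin{equation}
(I|F^\beta|I) = (1-\mathrm{e}^{-\mu})\sum_{n=0}^\infty \mathrm{e}^{-\mu n}(n+\beta)^\beta.
\end{equation}
The entire bound then reduces to the combinatorial inequality
\begin{equation}
(n+\beta)^\beta \;\le\; \beta^\beta\,\binom{n+\beta}{n},\qquad \beta\in\mathbb{Z}_{\ge 1},\ n\ge 0,
\end{equation}
because the negative-binomial generating function $\sum_{n\ge 0}\mathrm{e}^{-\mu n}\binom{n+\beta}{n} = (1-\mathrm{e}^{-\mu})^{-\beta-1}$ then produces the advertised $\beta^\beta(1-\mathrm{e}^{-\mu})^{-\beta}$. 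To prove the combinatorial inequality I would write $\binom{n+\beta}{n} = \prod_{k=1}^{\beta}(n+k)/k$, so that the claim becomes $\prod_{k=1}^{\beta} \frac{k(n+\beta)}{\beta(n+k)}\le 1$; each factor is $\le 1$ because on cross-multiplying $k(n+\beta)\le \beta(n+k)$ collapses to $kn\le \beta n$, which holds thanks to $k\le \beta$.

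The only nontrivial step is this last combinatorial inequality; everything else is bookkeeping with the projector $\mathbb{P}$ and Cauchy--Schwarz. The main conceptual point is that picking the basis $|nn')$ makes $F^\beta$ diagonal and makes $(I|\cdot|I)$ collapse to a geometric-type series, after which one only needs an elementary factor-by-factor comparison against the binomial coefficient; if the paper needs $\beta$ non-integer as well, the same argument extends with minor care by using the Gamma-function form $\binom{n+\beta}{n} = \prod_{k=1}^n (1+\beta/k)$ and verifying the corresponding factor inequality, with the combinatorial generating function replaced by its analytic continuation.
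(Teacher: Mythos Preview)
Your proposal is correct and follows essentially the same route as the paper: the first two bounds via $(1-\mathbb{P})|\mathcal{O})=(I|\mathcal{O})|I)$ with Cauchy--Schwarz and the triangle inequality, and the third via the same reduction $(n+\beta)^\beta\le \beta^\beta\binom{n+\beta}{n}$ followed by the negative-binomial generating function. In fact you supply the clean factor-by-factor verification of the combinatorial inequality, which the paper states but does not prove.
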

\begin{proof}
Observe that since $(I|I)=1$, \begin{align}
        (nn|1-\mathbb{P}|\mathcal{O}) &= (nn|I)(I|\mathcal{O}) \le (nn|I) \sqrt{(I|I)(\mathcal{O}|\mathcal{O})} = (nn|I).
    \end{align}
    (\ref{eq:Inorm}) then gives us (\ref{eq:projection1P}), and (\ref{eq:projectionP}) then follows from the triangle inequality. For (\ref{eq:IFI}), \begin{equation}\label{eq:sum_na}
    (I|F^\beta|I) = (1-\mathrm{e}^{-\mu})\sum_{n= 0}^\infty \mathrm{e}^{-\mu n}(n+\beta)^\beta \le (1-\mathrm{e}^{-\mu})\frac{\beta^\beta}{\beta!} \sum_{n= 0}^\infty \mathrm{e}^{-\mu n}\frac{(n+\beta)!}{n!} =  \left(\frac{\beta}{1-\ee^{-\mu}}\right)^\beta,
\end{equation}
\end{proof}

The basic strategy for studying operator dynamics in the quantum walk formalism is to use Corollary \ref{cor:FL} to efficiently bound operator growth, by choosing a clever superoperator $\mathcal{F}^\beta$ which can constrain the correlation functions of interest.  Because bosonic operators are unbounded, some care is required in order to choose such a superoperator.  Luckily, the following proposition shows us that $\mathcal{F}^\beta$ is sufficient to bound the operator length of commutators:
\begin{prop}\label{prop33}
Let $R\subset V$, and define \begin{equation}\label{eq:O'}
    \mathcal{O}^\prime :=  \prod_{x \in R} \left(b^\dagger_x\right)^{\eta_x} b_x^{\zeta_x}.
\end{equation}
Then if \begin{subequations}\label{eq:beta_gamma}
    \begin{align}
        \beta &= \sum_{x\in R} (\eta_x+\zeta_x), \\
        \gamma &= \sum_{x\in R} (\eta_x-\zeta_x),
    \end{align}
\end{subequations}
we have the inequality
\begin{equation}
  ([\mathcal{O},\mathcal{O}^\prime]|[\mathcal{O},\mathcal{O}^\prime]) \le 8\beta^\beta \cosh\frac{\mu\gamma}{2} \left(1+\beta\left(\frac{\beta}{1-\mathrm{e}^{-\mu}}\right)^{\beta}\right) \times \sum_{x\in R}(\mathcal{O}|\mathcal{F}_x^\beta|\mathcal{O}) \label{eq:prop33} 
\end{equation}. 
\end{prop}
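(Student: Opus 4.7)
The plan is to expand $\mathcal{O}$ in the orthonormal operator basis $|\{n_v n_v'\})$, compute the action of $\mathcal{O}'$ on basis vectors explicitly, and then convert the resulting weight into the projected form $(\mathcal{O}|\mathcal{F}_x^\beta|\mathcal{O})$ appearing in the statement.

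As a preliminary reduction, $(1-\mathbb{P}_R)\mathcal{O}$ is, by (\ref{eq:PR}), proportional to the identity on every site of $R$ and hence commutes with $\mathcal{O}'$ (which is supported in $R$); thus $[\mathcal{O},\mathcal{O}']=[\mathbb{P}_R\mathcal{O},\mathcal{O}']$ and I may assume $\mathcal{O}=\mathbb{P}_R\mathcal{O}$. Using $(b^\dagger)^\eta b^\zeta|n\rangle=\Phi^{(v)}_n|n-\zeta+\eta\rangle$ with $\Phi^{(v)}_n:=\sqrt{n!(n-\zeta_v+\eta_v)!/((n-\zeta_v)!)^2}$ and the normalization of the $|nn')$ basis, a direct calculation shows that right- and left-multiplication by $\mathcal{O}'$ act on each basis vector $\prod_v|n_v n_v')_v$ as a scalar times a shifted basis vector:
\begin{align*}
\prod_v|n_v n_v')_v\cdot\mathcal{O}' &= \left[\prod_v\Phi^{(v)}_{n_v'-\gamma_v}\mathrm{e}^{\mu\gamma_v/4}\right]\prod_v|n_v, n_v'-\gamma_v)_v,\\
\mathcal{O}'\cdot\prod_v|n_v n_v')_v &= \left[\prod_v\Phi^{(v)}_{n_v}\mathrm{e}^{-\mu\gamma_v/4}\right]\prod_v|n_v+\gamma_v, n_v')_v,
\end{align*}
with $\gamma_v:=\eta_v-\zeta_v$.

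Writing $\mathcal{O}=\sum_{\{n,n'\}}c_{\{n,n'\}}\prod_v|n_v n_v')_v$ and applying $\|a-b\|^2\le 2\|a\|^2+2\|b\|^2$ to the two (each orthonormal) families of images above, I obtain
\begin{equation*}
([\mathcal{O},\mathcal{O}']|[\mathcal{O},\mathcal{O}']) \le 2\sum_{\{n,n'\}}|c|^2\left[\mathrm{e}^{\mu\gamma/2}\prod_v(\Phi^{(v)}_{n_v'-\gamma_v})^2+\mathrm{e}^{-\mu\gamma/2}\prod_v(\Phi^{(v)}_{n_v})^2\right].
\end{equation*}
Bounding $(\Phi^{(v)}_n)^2$ as a product of $\beta_v=\eta_v+\zeta_v$ integers each at most $n+\beta_v$ gives $(\Phi^{(v)}_n)^2\le(n+\beta_v)^{\beta_v}$; then $\prod_v(n_v+\beta_v)^{\beta_v}\le(\max_{v\in R}n_v+\beta)^\beta\le\sum_{x\in R}\max(n_x+\beta,n_x'+\beta)^\beta$, which collapses the bound to $([\mathcal{O},\mathcal{O}']|[\mathcal{O},\mathcal{O}'])\le 4\cosh(\mu\gamma/2)\sum_{x\in R}(\mathbb{P}_R\mathcal{O}|F_x^\beta|\mathbb{P}_R\mathcal{O})$.

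Finally, to replace $F_x^\beta$ by $\mathcal{F}_x^\beta=\mathbb{P}_xF^\beta_x\mathbb{P}_x$, I split $\mathbb{P}_R\mathcal{O}=\mathbb{P}_x\mathcal{O}+(1-\mathbb{P}_x)\mathbb{P}_R\mathcal{O}$ (using $\mathbb{P}_x\mathbb{P}_R=\mathbb{P}_x$ for $x\in R$) and apply Cauchy-Schwarz together with the site-$x$ identity $(1-\mathbb{P}_x)F^\beta(1-\mathbb{P}_x)=(I|F^\beta|I)(1-\mathbb{P}_x)$ to obtain $(\mathbb{P}_R\mathcal{O}|F_x^\beta|\mathbb{P}_R\mathcal{O})\le 2(\mathcal{O}|\mathcal{F}_x^\beta|\mathcal{O})+2(I|F^\beta|I)\|\mathbb{P}_R\mathcal{O}\|^2$. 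The residual norm is controlled by the commuting-projector inequality $\mathbb{P}_R\le\sum_{v\in R}\mathbb{P}_v$ (valid since distinct $\mathbb{P}_v$ act on disjoint tensor factors) together with $\mathcal{F}_v^\beta\ge\beta^\beta\mathbb{P}_v$ (from $\max(n+\beta,n'+\beta)\ge\beta$), giving $\|\mathbb{P}_R\mathcal{O}\|^2\le\beta^{-\beta}\sum_{v\in R}(\mathcal{O}|\mathcal{F}_v^\beta|\mathcal{O})$. Applying Proposition \ref{prop:projections} to bound $(I|F^\beta|I)$ and using $|R|\le\beta$ then assembles the stated bound. The main technical obstacle is this final projector bookkeeping: the identity-on-$x$ contribution to $F_x^\beta$ (absent from $\mathcal{F}_x^\beta$) must be absorbed into the $\mathcal{F}_v^\beta$-weight on other sites of $R$, which requires the commuting-projector inequality and the lower bound $\mathcal{F}_v^\beta\ge\beta^\beta\mathbb{P}_v$ to work in tandem.
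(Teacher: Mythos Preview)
Your proof follows essentially the same strategy as the paper: reduce to $\mathbb{P}_R\mathcal{O}$, split the commutator by the triangle inequality, bound the left/right multiplication coefficients by $(n+\beta)^\beta$-type expressions, and then pass from $F_x^\beta$ to $\mathcal{F}_x^\beta$ via projector bookkeeping (the paper packages this last step as its Proposition~\ref{propOFO}). Your intermediate estimates are in fact sharper than the paper's---you bound $\prod_v(n_v+\beta_v)^{\beta_v}$ by $(\max_v n_v+\beta)^\beta$ directly rather than by $\beta^\beta\sum_x(n_x+\beta)^\beta$, and you use $\mathcal{F}_v^\beta\ge\beta^\beta\mathbb{P}_v$ to absorb the identity piece---so your argument would actually deliver a constant smaller than the stated $8\beta^\beta\cosh(\mu\gamma/2)\bigl(1+\beta(\beta/(1-\mathrm{e}^{-\mu}))^\beta\bigr)$, which of course still implies the proposition.

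One small slip to patch: for the right-multiplication term the argument of $\Phi^{(v)}$ is $n_v'-\gamma_v$, and your generic bound $(\Phi^{(v)}_n)^2\le(n+\beta_v)^{\beta_v}$ applied there gives $(n_v'+2\zeta_v)^{\beta_v}$, which need not be $\le(\max_w n_w'+\beta)^\beta$ when $\zeta_v>\eta_v$. The fix is immediate: the right-multiplication coefficient equals $\Phi^{(v)}$ for the adjoint monomial $(b^\dagger)^{\zeta_v}b^{\eta_v}$ evaluated at $n_v'$ (i.e.\ with $\eta\leftrightarrow\zeta$), so the same factor-by-factor count yields $(\Phi^{(v)}_{n_v'-\gamma_v})^2\le(n_v'+\beta_v)^{\beta_v}$ directly, and the rest of your chain goes through unchanged.
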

\begin{proof}
To avoid unnecessary clutter, in what follows we will typically drop the $\beta$ superscript on $\mathcal{F}$ below.  First, observe that since operators supported on disjoint sets commute, we may freely write \begin{equation}
    [\mathcal{O},\mathcal{O}^\prime ] = [\mathbb{P}_R\mathcal{O},\mathcal{O}^\prime ],
\end{equation}
with $\mathbb{P}_R$ defined in (\ref{eq:PR}).   Then, we apply the triangle inequality: \begin{equation}
    ([\mathbb{P}_R\mathcal{O},\mathcal{O}^\prime]|[\mathbb{P}_R\mathcal{O},\mathcal{O}^\prime]) \le 2 (\mathcal{O}^\prime (\mathbb{P}_R\mathcal{O}) | \mathcal{O}^\prime (\mathbb{P}_R\mathcal{O}) ) + 2((\mathbb{P}_R\mathcal{O})\mathcal{O}^\prime|  (\mathbb{P}_R\mathcal{O})\mathcal{O}^\prime ).  \label{eq:OOprimereverse}
\end{equation}
The analysis of each term is similar, so we focus on the first term.  Writing out \begin{equation}
   \mathbb{P}_R \mathcal{O} = \sum_{\mathbf{n}}\mathcal{O}_{\mathbf{n}}|\mathbf{n})
\end{equation}
where here and in the remainder of this paper, we will use $\mathbf{n}$ as a quick shorthand for ``all possible $|nn^\prime)_v$ on all vertices $v$", and defining $\mathbf{a}_u$ and $\mathbf{a}^\prime_u$ to be ``unit vectors" corresponding to $n_u=1$ or $n_u^\prime=1$ respectively (with all other components zero) we see that \begin{align}
    \mathcal{O}^\prime \mathbb{P}_R \mathcal{O} = \sum_{\mathbf{n}}\mathcal{O}_{\mathbf{n}} \mathrm{e}^{-\mu \gamma/4}  |\mathbf{n} + \mathbf{g}) \prod_{x \in R} \left(\prod_{j=1}^{\zeta_x} \sqrt{n_x+1-j} \times \prod_{k=1}^{\eta_x} \sqrt{n_x-\zeta_x + k}\right). \label{eq:412}
\end{align}
where \begin{equation}
    \mathbf{g} := \sum_{x\in R} (\eta_x-\zeta_x)\mathbf{a}_x.
\end{equation}
Note that we are being lazy about terms where $\zeta_x>n_x$, because there is a factor of 0 in the product above, so such terms will not be counted anyway. Now, observe that \begin{equation}
    \prod_{x \in R} \left(\prod_{j=1}^{\zeta_x} \sqrt{n_x+1-j} \times \prod_{k=1}^{\eta_x} \sqrt{n_x-\zeta_x + k}\right) \le \prod_{x\in R} (\sqrt{n_x+\eta_x})^{\zeta_x+\eta_x} \le \left(\beta + \sum_{x\in R} n_x\right)^{\beta/2}.  \label{eq:413}
\end{equation}
Combining (\ref{eq:412}) and (\ref{eq:413}), we see that \begin{equation}
    (\mathcal{O}^\prime (\mathbb{P}_R\mathcal{O}) | \mathcal{O}^\prime (\mathbb{P}_R\mathcal{O}) ) \le \sum_{\mathbf{n}} \left|\mathcal{O}_{\mathbf{n}}\right|^2 \mathrm{e}^{-\mu \gamma/2}\left(\beta + \sum_{x\in R} n_x\right)^{\beta} 
    . \label{eq:OprimeFgamma}
\end{equation}
Now, we will use a series of generally loose inequalities to simplify even further, and reduce this expectation value to sums over $(\mathcal{O}|\mathcal{F}_x|\mathcal{O})$.  Firstly, we observe that \begin{equation}\label{eq:sum_nx}
    \left(\beta + \sum_{x\in R} n_x\right)^\beta \le \beta^\beta \sum_{x\in R} (n_x+\beta)^\beta \le\beta^\beta \sum_{x\in R} \max(n_x+\beta,n_x^\prime+\beta)^\beta .
\end{equation}  Secondly, let us observe that $\mathbb{P}_R|\mathcal{O})$ is not the same as $\mathbb{P}_x|\mathcal{O})$, and therefore $(\mathcal{O}|\mathbb{P}_RF_x\mathbb{P}_R|\mathcal{O}) \ne (\mathcal{O}|\mathcal{F}_x|\mathcal{O})$.  However, we have the following proposition to handle this (we present a more general statement for later use).

\begin{prop}\label{propOFO}
Suppose $|\mathcal{O}) = \mathbb{P}_R|\mathcal{O})$, and let $|\tilde{\mathcal{O} }) = \mathbb{P}_v\mathbb{Q}|\mathcal{O}) + c (1-\mathbb{P}_v)\mathbb{Q}|\mathcal{O})$, where $c\in \mathbb{C}$, superoperator $\mathbb{Q} = \mathcal{I}_v \otimes \mathbb{Q}_{-v}$ is trivial on $v\in R$. Then
\begin{align}\label{eq:sum_nbp}
(\tilde{\mathcal{O}}|F_v|\tilde{\mathcal{O}}) =
\sum_{nn'}\max(n+\beta,n'+\beta)^\beta \lVert  \mathbb{P}_v^{nn^\prime} |\tilde{\mathcal{O}})\rVert_2^2 
    \le (2-\delta_{c=0})\lVert\mathbb{Q}\rVert^2 \left[ (\mathcal{O}|\mathcal{F}_v|\mathcal{O}) + |c|^2\left(\frac{\beta}{1-\mathrm{e}^{-\mu}}\right)^{\beta} (\mathcal{O}|\mathbb{P}_R|\mathcal{O})\right].
\end{align}
where we can further replace
\begin{align}\label{eq:PPF}
    (\mathcal{O}|\mathbb{P}_R|\mathcal{O}) \le \sum_{x\in R} (\mathcal{O}|\mathbb{P}_x|\mathcal{O}) \le \sum_{x\in R} (\mathcal{O}|\mathcal{F}_x|\mathcal{O})
\end{align}
\end{prop}
\begin{proof}
The triangle inequality implies that \begin{equation}
    \lVert  \mathbb{P}_v^{nn^\prime}|\tilde{\mathcal{O}})\rVert_2^2 
    \le  (2-\delta_{c=0}) \left(  \lVert \mathbb{P}_v^{nn^\prime}\mathbb{P}_v\mathbb{Q}|\mathcal{O})\rVert_2^2 + |c|^2 \lVert \mathbb{P}_v^{nn^\prime}(1-\mathbb{P}_v)\mathbb{Q}|\mathcal{O})\rVert_2^2\right) .\label{eq:propOFO}
\end{equation}
Using that \begin{equation}
   \lVert \mathbb{P}_v^{nn^\prime}\mathbb{P}_v\mathbb{Q}|\mathcal{O})\rVert_2 = \lVert\mathbb{Q} \mathbb{P}_v^{nn^\prime}\mathbb{P}_v|\mathcal{O})\rVert_2 \le \lVert \mathbb{Q}\rVert \lVert \mathbb{P}_v^{nn^\prime}\mathbb{P}_v|\mathcal{O})\rVert_2, \label{eq:commuteQ}
\end{equation} 
the first term on the right hand side of (\ref{eq:propOFO}) after summed over $n,n'$ is bounded by $(2-\delta_{c=0})\lVert \mathbb{Q}\rVert^2 (\mathcal{O}|\mathcal{F}_v|\mathcal{O})$.   For the second term, we analogously pull out the factor $\lVert \mathbb{Q}\rVert$ and then use (\ref{eq:IFI}) and (\ref{eq:Inorm}). Suppose $R=\{x_i:i=1,\cdots,|R|\}$, (\ref{eq:PPF}) comes from
\begin{align}
    \mathbb{P}_R = \sum^{|R|}_{i=1}\mathbb{P}_{x_i} \prod^{i-1}_{j=1} (1-\mathbb{P}_{x_j})
\end{align}
and $\lVert 1-\mathbb{P}_{x}\rVert \le 1 \le \lVert F_x\rVert $.
\end{proof}

\comment{
\begin{equation}
    (\mathcal{O}|\mathbb{P}_R\mathcal{F}_x\mathbb{P}_R|\mathcal{O}) \le 2(\mathcal{O}|\mathcal{F}_x|\mathcal{O}) + 2(\mathcal{O}|\mathbb{P}_R(1-\mathbb{P}_x)F_x (1-\mathbb{P}_x)\mathbb{P}_R|\mathcal{O}). \label{eq:PRFxsplit}
\end{equation}
This result follows from the triangle inequality and the fact that $\mathcal{F}_x$ is a positive semi-definite and Hermitian superoperator.  Finally, observe that \begin{equation}
    (\mathcal{O}|\mathbb{P}_R(1-\mathbb{P}_x)F_x (1-\mathbb{P}_x)\mathbb{P}_R|\mathcal{O}) = \left\lVert (1-\mathbb{P}_x) \mathbb{P}_R |\mathcal{O})\right\rVert_2^2 (I|F|I). \label{eq:IFIprod}
\end{equation}
Clearly -- and as a very loose bound, \begin{equation}
    \left\lVert (1-\mathbb{P}_x) \mathbb{P}_R |\mathcal{O})\right\rVert_2^2 \le (\mathcal{O}|\mathbb{P}_R|\mathcal{O}) \le \sum_{y\in R} (\mathcal{O}|\mathcal{F}_y|\mathcal{O}), \label{eq:looseOFY}
\end{equation}
since each $\mathcal{F}_y$ contains a factor of $\mathbb{P}_y$, and an operator in $\mathbb{P}_R$ needs support on at least one site in $R$.  Then, using the fact that \begin{equation}
    (I|F|I) = \sum_{n=0}^\infty (1-\mathrm{e}^{-\mu})\mathrm{e}^{-\mu n} (n+\beta)^\beta < \sum_{n=0}^\infty (1-\mathrm{e}^{-\mu})\mathrm{e}^{-\mu n/2} \times \mathrm{e}^{-\mu n/2} (n+\beta)^\beta
\end{equation} and using the fact that for each $n$, \begin{equation}
    (n+\beta)^\beta \mathrm{e}^{-\mu n/2} \le \beta^\beta \max\left(1,\frac{2}{\mu}\right)^\beta,
\end{equation}
which can be derived using elementary calculus to find the maximum of this function (if it lies in the domain $n>0$), we obtain \begin{equation}
    (I|F|I) \le \beta^\beta \max\left(1,\frac{2}{\mu}\right)^\beta \left(1+\mathrm{e}^{-\mu/2}\right) \le 2\beta^\beta \max\left(1,\frac{2}{\mu}\right)^\beta. \label{eq:IFI}
\end{equation}
}

Combining (\ref{eq:OprimeFgamma}), (\ref{eq:sum_nx}) and Proposition \ref{propOFO} with $c=1, \mathbb{Q}=\mathcal{I}$, we obtain \begin{align}
    (\mathcal{O}^\prime (\mathbb{P}_R\mathcal{O}) | \mathcal{O}^\prime (\mathbb{P}_R\mathcal{O}) ) &\le 2\beta^\beta \mathrm{e}^{-\mu\gamma/2} \sum_{x\in R}\left[ (\mathcal{O}|\mathcal{F}_x|\mathcal{O}) + \left(\frac{\beta}{1-\mathrm{e}^{-\mu}}\right)^{\beta} \sum_{y\in R} (\mathcal{O}|\mathcal{F}_y|\mathcal{O})\right] \notag \\
    &\le 2\beta^\beta \mathrm{e}^{-\mu\gamma/2} \left(1+\beta\left(\frac{\beta}{1-\mathrm{e}^{-\mu}}\right)^{\beta}\right)\sum_{y\in R} (\mathcal{O}|\mathcal{F}_y|\mathcal{O}). \label{eq:422}
\end{align}

Bounding $( (\mathbb{P}_R\mathcal{O})\mathcal{O}^\prime |  (\mathbb{P}_R\mathcal{O})\mathcal{O}^\prime )$ requires analogous steps, but with $n^\prime_x$ replacing $n_x$ in the intermediate equalities, and with a factor of $\mathrm{e}^{\mu\gamma/2}$ instead of $\mathrm{e}^{-\mu\gamma/2}$: \begin{equation}
    ( (\mathbb{P}_R\mathcal{O})\mathcal{O}^\prime |  (\mathbb{P}_R\mathcal{O})\mathcal{O}^\prime ) \le 2\beta^\beta \mathrm{e}^{\mu\gamma/2} \left(1+\beta\left(\frac{\beta}{1-\mathrm{e}^{-\mu}}\right)^{\beta}\right)\sum_{y\in R} (\mathcal{O}|\mathcal{F}_y|\mathcal{O}). \label{eq:423}
\end{equation}
Combining (\ref{eq:OOprimereverse}), (\ref{eq:422}) and obtain (\ref{eq:prop33}).
\end{proof}

We emphasize that especially for $\beta>1$, the coefficients in (\ref{eq:prop33}) are not tight.  Nevertheless, they are sufficient to prove a linear light cone in bosonic models with super-exponentially small tails, which is the main purpose of this paper.  Indeed Proposition \ref{prop33} will lie at the heart of our proof of a linear light cone, since we will show how to use the quantum walk formalism to bound $(\mathcal{O}(t)|\mathcal{F}_v|\mathcal{O}(t))$. 
Note that Proposition \ref{prop33} does not restrict the form of $\mathcal{O}$ apart from normalizability, and easily generalizes to operators $\mathcal{O}^\prime$ beyond strings of $b,b^\dagger$, as long as its expansion coefficients on the basis $|n\rangle\langle n^\prime|$ are bounded by a polynomial of $n,n'$. The linear light cone result in the next section naturally follows for such generalized operators.

Our next goal is to explain how Proposition \ref{prop33} is also strong enough to constrain physically relevant correlation functions of interest.  Usually, the physical operators $A$ of interest obey $[A,N]=kA$ for some $k\in\mathbb{Z}$; this holds for example if $A$ is any product of creation and annihilation operators. On such products (or sums thereof), our inner product is easily related to more conventional thermal expectation values: 
\begin{prop}
If \begin{subequations}\label{eq:ANBN}\begin{align}
    [A,N] &= (k+k^\prime)A, \\
    [B,N] &= kB,
\end{align}\end{subequations}
then for any $t_A,t_B\in\mathbb{R}$, \begin{equation}
    (A(t_A)|B(t_B)) = \delta_{k^\prime,0} \mathrm{e}^{\mu k/2} \mathrm{tr}\left(\rho A(t_A)^\dagger B(t_B)\right). \label{eq:prop34}
\end{equation}
\end{prop}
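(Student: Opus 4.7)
The plan is to prove (\ref{eq:prop34}) by direct algebraic manipulation of the definition (\ref{eq:innerproduct}), using only the commutation relations (\ref{eq:ANBN}), Proposition \ref{prop:numcons}, and cyclicity of the trace. The key preliminary observation is that since $[N,H(t)]=0$, Heisenberg evolution commutes with number shifts, so with the shorthand $A^\prime:=A(t_A)$ and $B^\prime:=B(t_B)$ one still has $[A^\prime,N]=(k+k^\prime)A^\prime$ and $[B^\prime,N]=kB^\prime$. Equivalently, $A^\prime$ lowers the total boson number by $k+k^\prime$ and $B^\prime$ lowers it by $k$, while $A^{\prime\dagger}$ raises $N$ by $k+k^\prime$.

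The first step is the elementary sliding identity: if $[X,N]=mX$, then $f(N)X=Xf(N-m)$ and $Xf(N)=f(N+m)X$ for any function $f$, verified by acting on number eigenstates. Applied to $\mathrm{e}^{-\mu N/2}$ on $A^{\prime\dagger}$, which satisfies $[A^{\prime\dagger},N]=-(k+k^\prime)A^{\prime\dagger}$, this yields $\mathrm{e}^{-\mu N/2}A^{\prime\dagger}=\mathrm{e}^{-\mu(k+k^\prime)/2}\,A^{\prime\dagger}\,\mathrm{e}^{-\mu N/2}$, so that the two half-powers of $\rho$ in (\ref{eq:innerproduct}) coalesce into a single $\rho$ and
\[
    (A^\prime|B^\prime) = \mathrm{e}^{-\mu(k+k^\prime)/2}\,\mathrm{tr}\!\left(A^{\prime\dagger}\rho B^\prime\right),
\]
where I have absorbed the product of per-site normalizations from (\ref{eq:rho}) into $\rho$.

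The Kronecker delta then follows from a number-conservation count: the operator $A^{\prime\dagger}\rho B^\prime$ shifts the total boson number by $(k+k^\prime)-k=k^\prime$, since $\rho$ commutes with $N$ and only $A^{\prime\dagger}$ and $B^\prime$ shift the number. Hence it is strictly off-diagonal in the $N$-eigenbasis unless $k^\prime=0$, in which case its trace vanishes. When $k^\prime=0$, a single use of cyclicity followed by the sliding identity in the other direction (pushing $\rho$ through $B^\prime$ produces a compensating factor $\mathrm{e}^{\mu k}$) gives $\mathrm{tr}(A^{\prime\dagger}\rho B^\prime)=\mathrm{e}^{\mu k}\,\mathrm{tr}(\rho A^{\prime\dagger}B^\prime)$. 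Combined with the prefactor $\mathrm{e}^{-\mu k/2}$, this produces exactly the claimed $\mathrm{e}^{\mu k/2}$. I do not foresee any substantive obstacle; the only thing that requires care is the sign and direction of the shift induced by $[X,N]=mX$, which encodes $X$ lowering $N$ by $m$ rather than raising it, so that the two factors $\mathrm{e}^{-\mu(k+k^\prime)/2}$ and $\mathrm{e}^{\mu k}$ must be tracked carefully to land on the right exponent.
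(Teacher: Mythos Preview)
Your proof is correct and follows essentially the same route as the paper's: both establish that Heisenberg evolution preserves the relations $[A',N]=(k+k')A'$ and $[B',N]=kB'$, then commute the factors of $\sqrt{\rho}\propto\mathrm{e}^{-\mu N/2}$ past $A'^\dagger$ (and $\rho$ past $B'$) to coalesce them and extract the exponential prefactor, with the $\delta_{k',0}$ coming from the net number shift of $A'^\dagger\rho B'$. The only cosmetic difference is that the paper carries this out by acting on $N$-eigenvectors $|\psi_M\rangle$ explicitly, whereas you invoke the equivalent operator identity $f(N)X=Xf(N-m)$ directly.
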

\begin{proof}
Using Proposition \ref{prop:numcons}, and letting $U_B$ be the time evolution operator for time $t_B$, \begin{equation}
    [N,B(t_B)] = [N,U_B^\dagger BU_B] = U_B^\dagger [N,B]U_B = -k B(t_B).
\end{equation}
In the last step we used (\ref{eq:ANBN}).  For this reason, we can without loss of generality (and for ease of notation) set $t_A=t_B=0$, since our results do not depend on time evolution. Now let $|\psi_M\rangle$ denote an eigenvector of $N$ with eigenvalue $M$, and consider that due to (\ref{eq:ANBN}), \begin{equation}
    NB|\psi_M\rangle = B(N-k)|\psi_M\rangle = (M-k)B|\psi_M\rangle. \label{eq:BNeig}
\end{equation}More generally,
\begin{equation}
   A^\dagger \sqrt{\rho} B \sqrt{\rho}|\psi_M\rangle = A^\dagger \sqrt{\rho} B \mathrm{e}^{-\mu M/2}|\psi_M\rangle = \mathrm{e}^{-\mu M/2}A^\dagger \sqrt{\rho} B|\psi_M\rangle = \mathrm{e}^{-\mu M/2} A^\dagger B \mathrm{e}^{-\mu (M-k)/2}|\psi_M\rangle. \label{eq:moverho}
\end{equation}
Observe that this final state is an eigenvector of $N$ with eigenvalue $M-k+(k+k^\prime)=M+k^\prime$, analogously to (\ref{eq:BNeig}).

Now if we wish to evaluate\begin{equation}
    \mathrm{tr}(A^\dagger \sqrt{\rho}B \sqrt{\rho}) = \sum_{M=0}^\infty \sum_{|\psi_M\rangle} \langle \psi_M | A^\dagger \sqrt{\rho}B\sqrt{\rho}|\psi_M\rangle ,
\end{equation}
we observe that the trace can be evaluated as a sum over all possible states with a fixed number of bosons $M$.  Clearly, this inner product can only be non-zero if $k^\prime=0$.  Moreover, using (\ref{eq:moverho}), we can easily write \begin{equation}
    \mathrm{tr}(A^\dagger \sqrt{\rho}B \sqrt{\rho}) = \mathrm{tr}\left(A^\dagger B \rho \right) \mathrm{e}^{\mu k/2},
\end{equation}
which is equivalent to (\ref{eq:prop34}).
\end{proof}
Using the Cauchy-Schwarz inequality, we immediately see that:
\begin{cor}\label{cor45}
Suppose that for any fixed $\epsilon>0$, there exists a velocity $v$ such that for two vertices $x,y\in V$ separated by distance $r$, for $t < r/v$, \begin{equation}
    ([\mathcal{O}_x(t),\mathcal{O}^\prime_y]|[\mathcal{O}_x(t),\mathcal{O}^\prime_y]) \le \epsilon.
\end{equation}
 Then there also exist constants $\epsilon^\prime$ and $\epsilon^{\prime\prime}$ such that the following inequalities hold: 
 \begin{subequations}\begin{align}
    \mathrm{tr}\left(\rho [\mathcal{O}_x(t),\mathcal{O}^\prime_y]\right) &< \epsilon^\prime, \\
    \mathrm{tr}\left(\rho [\mathcal{O}_x(t),\mathcal{O}^\prime_y]^\dagger [\mathcal{O}_x(t),\mathcal{O}^\prime_y]\right) &< \epsilon^{\prime\prime}.
\end{align}\end{subequations}
Therefore, there is also a finite velocity $v$ at which correlations spread in ordinary thermal correlators.
\end{cor}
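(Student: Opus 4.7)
The plan is to derive both inequalities directly from the preceding proposition (equation~(\ref{eq:prop34})), together with a single application of Cauchy--Schwarz on the operator Hilbert space. First I would reduce to the case where $\mathcal{O}_x$ and $\mathcal{O}^\prime_y$ have definite boson-number charges, i.e.\ $[\mathcal{O}_x,N]=k_A \mathcal{O}_x$ and $[\mathcal{O}^\prime_y,N]=k_B \mathcal{O}^\prime_y$. Any local operator expressible as a finite polynomial in $b,b^\dagger$ decomposes into a finite sum of such charge eigenoperators, so bounding each pair of sectors individually suffices. By Proposition~\ref{prop:numcons} and the Jacobi identity, the commutator $C(t):=[\mathcal{O}_x(t),\mathcal{O}^\prime_y]$ then carries definite charge $k:=k_A+k_B$, i.e.\ $[C(t),N]=k\,C(t)$, and the hypothesized bound $(C(t)|C(t))\le \epsilon$ is inherited by each sector.

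For the out-of-time-ordered correlator I would apply (\ref{eq:prop34}) directly with $A=B=C(t)$ (so $k'=0$):
\begin{equation}
    \mathrm{tr}\!\left(\rho\,C(t)^\dagger C(t)\right) = \mathrm{e}^{-\mu k/2}\,(C(t)|C(t)) \le \mathrm{e}^{-\mu k/2}\,\epsilon,
\end{equation}
giving $\epsilon^{\prime\prime}=\mathrm{e}^{-\mu k/2}\epsilon$. For the ordinary commutator I would split on the charge of $C(t)$. If $k\neq 0$, then $\rho\propto\mathrm{e}^{-\mu N}$ is block-diagonal in $N$-eigenspaces, so $\mathrm{tr}(\rho\,C(t))=0$ identically and no bound is needed. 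If $k=0$, I would use Cauchy--Schwarz against the identity vector:
\begin{equation}
    |(I\,|\,C(t))|^2 \le (I|I)\cdot (C(t)|C(t)) \le \epsilon,
\end{equation}
with $(I|I)=1$ from (\ref{eq:Inorm}). Equation~(\ref{eq:prop34}) with $A=I$, $B=C(t)$ (both charge zero) identifies $(I|C(t))=\mathrm{tr}(\rho\,C(t))$, yielding $|\mathrm{tr}(\rho\,C(t))|\le \sqrt{\epsilon}$, so $\epsilon^\prime=\sqrt{\epsilon}$ works.

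There is no serious obstacle: the argument is essentially a one-line consequence of the KMS-like structure of the inner product (\ref{eq:innerproduct}) encoded in (\ref{eq:prop34}), combined with number conservation. The only bookkeeping is the chemical-potential prefactor $\mathrm{e}^{-\mu k/2}$, which depends on the fixed observables through $k=k_A+k_B$ but is independent of $r$ and $t$; the finite velocity $v$ asserted in the hypothesis on $(C|C)$ therefore transfers to both ordinary thermal commutators and OTOCs with the same $v$.
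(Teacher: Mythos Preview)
Your proof is correct and follows essentially the same approach as the paper, which simply states that the corollary follows from the Cauchy--Schwarz inequality together with Proposition~(\ref{eq:prop34}); you have merely filled in the details (the charge decomposition, the $k=0$ versus $k\neq 0$ case split, and the explicit prefactor $\mathrm{e}^{-\mu k/2}$) that the paper leaves implicit.
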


\section{Linear light cone}\label{sec:proof}
We are now ready to state our main result, which amounts to the rigorous statement and proof of (\ref{eq:main}).
\begin{thm}[Finite speed of correlations]
Let $\mathcal{O}$ denote an operator with initial support on the subset $R\subset V$: namely, $(1-\mathbb{P}_{R^{\mathrm{c}}})|\mathcal{O}) = |\mathcal{O})$.   Let operator $\mathcal{O}^\prime$ have support in subset $S\subset V$.  Suppose that for all vertices $u\in R$ and $v \in S$, $\mathrm{dist}(u,v)\ge r$; we denote this with $\mathrm{dist}(R,S)=r$.  Then \begin{equation}
    ([\mathcal{O}(t),\mathcal{O}^\prime] | [\mathcal{O}(t),\mathcal{O}^\prime] ) \le C \times \left(\frac{vt}{r}\right)^{r/(2\ell+1)}, \label{eq:thm1}
\end{equation}
for $v|t|<r$, where \begin{equation}
    C = 16\beta^\beta \cosh\frac{\mu\gamma}{2} \left(1+\beta\left(\frac{\beta}{1-\mathrm{e}^{-\mu}}\right)^{\beta}\right)\times \left[\sum_{x\in R} (\mathcal{O}|F_x^\beta|\mathcal{O}) + \left(\frac{\beta}{1-\mathrm{e}^{-\mu}}\right)^{\beta} (|R|+|R_\ell|)(\mathcal{O}|\mathcal{O})\right],
\end{equation} $R_\ell=\{x\in V: \mathrm{dist}(x,R)\le \ell\}$, $\beta$ and $\gamma$ are defined in Proposition \ref{prop33} based on the properties of $\mathcal{O}^\prime$, time evolution is generated by a Hamiltonian $H(t)$ obeying the constraints described in Section \ref{sec:models}, and the velocity \begin{equation}
    v < \left\lbrace\begin{array}{ll} 8K(31+24\mu^{-1}) &\ \beta=1, \ell = 0 \\ 92K(2\beta)^{\beta+1}(1+2\mu^{-1})^{\beta+1} &\ \beta>1, \ell=0 \\ 2^{\beta+10}(2l+1)K^{3\ell+2}\beta^{2\beta}(1+2\mu^{-1})^{2\beta} &\ \ell>0  \end{array}\right.. \label{eq:thm2}
\end{equation}\label{thm51}
\end{thm}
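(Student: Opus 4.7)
The plan is to apply Proposition~\ref{prop33} (with its $R$ identified as our $S$ and its $\mathcal{O}$ as our Heisenberg-evolved $\mathcal{O}(t)$), which reduces the theorem to a bound on the ``moment'' $\sum_{x\in S}(\mathcal{O}(t)|\mathcal{F}_x^\beta|\mathcal{O}(t))$. I then introduce a spatially weighted superobservable $\mathcal{W}_\lambda = \sum_{x\in V} w_\lambda(x)\,\mathcal{F}_x^\beta$, with $w_\lambda(x)$ growing in $\mathrm{dist}(x,R)$ (essentially $\mathrm{e}^{\lambda\,\mathrm{dist}(x,R)}$), so that $w_\lambda(x) \ge \mathrm{e}^{\lambda r}$ for every $x \in S$. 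Bounding $(\mathcal{O}(t)|\mathcal{W}_\lambda|\mathcal{O}(t))$ then controls the desired moment up to the factor $\mathrm{e}^{-\lambda r}$.

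By Corollary~\ref{cor:FL}, the task reduces to bounding $[\mathcal{W}_\lambda,\mathcal{L}(t)]$, which I split as $[\mathcal{W}_\lambda,\mathcal{L}_J(t)]+[\mathcal{W}_\lambda,\mathcal{L}_U(t)]$. The interaction piece almost does not contribute: since $F_x^\beta$ and $\mathcal{L}_U$ are both diagonal in the $|\mathbf{m}\mathbf{m}')$ basis, the only source of non-commutativity with $\mathcal{F}_x^\beta = \mathbb{P}_x F_x^\beta \mathbb{P}_x$ comes from the projector $\mathbb{P}_x$, and a short direct calculation shows that $[\mathbb{P}_x,\mathcal{L}_{U_S}]=0$ whenever $|S|=1$, so for $\ell=0$ interactions drop out entirely. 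For $\ell>0$ the commutator is supported on the $\ell$-neighborhood of $x$ inside $S$, which is the origin of the $K^{3\ell+2}$ combinatorial factor (counting configurations within the thickened neighborhoods) and of the $(2\ell+1)$ denominator in the exponent of the final bound, since each effective ``step'' of spreading now has range $2\ell+1$.

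The hopping piece $\sum_{\{x,y\}}[\mathcal{F}_x^\beta,\mathrm{i}\,\mathrm{ad}_{b^\dagger_x b_y}]$ is the engine of operator growth and the main obstacle. Acting on $|\mathbf{m}\mathbf{m}')$ it produces unbounded bosonic factors $\sqrt{m_x(m_y+1)}$ together with differences $(m_x+1+\beta)^\beta-(m_x+\beta)^\beta$; neither is bounded in operator norm, which is precisely why the standard Lieb--Robinson strategy has failed for bosons. The resolution is to stay inside the inner product $(\,\cdot\,|\,\cdot\,)$ and use Cauchy--Schwarz: the $F^\beta$ weight appearing on the opposite side dominates $m^\beta$, so via Proposition~\ref{propOFO} together with the estimate $(I|F^\beta|I)\le(\beta/(1-\mathrm{e}^{-\mu}))^\beta$ from Proposition~\ref{prop:projections}, one absorbs all bosonic growth and the identity-sector ``leakage'' introduced by $\mathbb{P}_x$ into a bounded multiple of $\mathcal{F}_x^\beta+\mathcal{F}_y^\beta$ evaluated on the same state, with constants depending only on $\beta$ and $\mu$. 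The net result is the linear ODE inequality
\begin{equation}
\frac{\mathrm{d}}{\mathrm{d}t}(\mathcal{O}(t)|\mathcal{W}_\lambda|\mathcal{O}(t)) \le \kappa(\lambda)\,(\mathcal{O}(t)|\mathcal{W}_\lambda|\mathcal{O}(t)),
\end{equation}
with $\kappa(\lambda)\sim \mathrm{e}^{(2\ell+1)\lambda}-1$ carrying the appropriate dependence on $\beta,\mu,K,\ell$ needed to reproduce the three regimes of (\ref{eq:thm2}).

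Gr\"onwall then gives $(\mathcal{O}(t)|\mathcal{W}_\lambda|\mathcal{O}(t))\le\mathrm{e}^{\kappa(\lambda)t}(\mathcal{O}|\mathcal{W}_\lambda|\mathcal{O})$, and the initial value is supported on $R$ (plus an $\ell$-thickened boundary contribution from the $\ell>0$ interaction step), which is the source of the $|R|+|R_\ell|$ factor in the constant $C$. Combining this with the $\mathrm{e}^{-\lambda r}$ weight on $S$ and optimizing $\lambda$ via the standard saddle-point identity $\inf_\lambda \mathrm{e}^{\kappa(\lambda)t-\lambda r}\lesssim(\mathrm{e}\,vt/r)^{r/(2\ell+1)}$ (valid for $vt<r$, with $v\sim\kappa(\lambda^\star)/((2\ell+1)\lambda^\star)$) converts the exponential decay into the advertised super-exponential form (\ref{eq:thm1}); the three regimes of the velocity in (\ref{eq:thm2}) track the dependence of the constants in $\kappa(\lambda)$ on $\beta$ and $\ell$.
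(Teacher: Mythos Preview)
Your outline is essentially the paper's strategy when $\ell=0$: the weighted superobservable $\mathcal{W}_\lambda$ is the function $G(t)=\sum_v \lambda^{\mathrm{dist}(v,R)} C_v(t)$ of Lemma~\ref{lem56}, and your observation that $[\mathcal{F}_x,\mathcal{L}_{U_{\{x\}}}]=0$ is correct (since $\mathcal{L}_{U_{\{x\}}}|I_x)=0$ and $[F_x,\mathcal{L}_{U_{\{x\}}}]=0$). The hopping estimate you sketch is precisely what Lemma~\ref{lem53} carries out in Cases~1, 2A, 3A.

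For $\ell>0$, however, there is a genuine gap. You assert that $[\mathcal{F}_x,\mathcal{L}_{U_S}]$ for multi-site $S$ is merely ``supported on the $\ell$-neighborhood'' and contributes a combinatorial factor. Locality of support is true, but the \emph{magnitude} is uncontrolled: $[\mathbb{P}_x,\mathcal{L}_{U_S}]$ acting on $|\mathbf{m}\mathbf{m}')$ with $m_x=m_x'$ produces the factor $U_S(\mathbf{m})-U_S(\mathbf{m}')$, and since $U_S$ is allowed to be an \emph{arbitrary polynomial} in the densities on $S$, this difference cannot be absorbed into any fixed-power weight $\mathcal{F}_y^\beta$. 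Your Cauchy--Schwarz/Proposition~\ref{propOFO} machinery tames $\sqrt{m_u m_v}$ growth from hopping, but not $U_S(\mathbf{m})$ growth from a multi-site interaction.

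The paper's remedy is exactly the missing ingredient: pass to the interaction picture with respect to $\mathcal{L}_U$, so that one only ever commutes $\mathcal{F}_x$ with the dressed hopping $\mathcal{L}_J(t)_U$. Because $H_U$ is diagonal in occupation number, dressing multiplies the hopping by unit-modulus phases $\mathrm{e}^{\mathrm{i}\theta(\mathbf{n}_{\mathcal{B}_{uv}},t)}$ supported on $\mathcal{B}_{uv}$ of (\ref{eq:defBuv}); the unbounded $U$ has been converted into a bounded phase. The price is that the dressed hopping is smeared over a set of diameter $2\ell+1$, and \emph{this} (not the interaction commutator you cite) is the true origin of the $(2\ell+1)$ in the exponent and of the $K^{3\ell+2}$ factor via the edge-counting Proposition~\ref{prop56}. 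Likewise the $|R_\ell|$ in $C$ comes from $|\mathcal{O}(t)_U)$ acquiring support on $R_\ell$ already at $t=0^+$ (Lemma~\ref{lem52}), not from an interaction step during the Gr\"onwall evolution.
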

\begin{proof}
The proof of this result follows the general strategy of previous quantum walk based proofs on quantum information dynamics (e.g. \cite{Lucas:2019cxr,Tran:2020xpc,Yin:2020pjd}).   We will show that \begin{equation}
    (\mathcal{O}(t)|\mathcal{F}_x|\mathcal{O}(t)) \le C_x(t) \label{FCbound}
\end{equation}for each vertex $x\in V$, where the functions $C_x(t)$ obey the differential equations
\begin{equation}
    \frac{\mathrm{d} C_u}{\mathrm{d}t}  \le \sum_{u \in V : \mathrm{dist}(u,v) \le 1+\ell} M_{uv}(t) C_v(t) \label{eq:dCudt}
\end{equation}
subject to appropriate initial conditions on the $C_v(t)$, which we will shortly explain.
Finding bounds on the coefficients $M_{uv}(t)$ is somewhat tedious, and will take up much of the proof of this overall theorem.   Once we have a bound on $M_{uv}(t)$, we will integrate this differential equation to find a bound on $(\mathcal{O}(t)|\mathcal{F}_x|\mathcal{O}(t))$. Proposition \ref{prop33} will then complete the proof.

Let us now carry out these steps.  The first step is to provide a useful definition for $C_v(t)$.
In order to prove this result, we will use an interaction picture similar to \cite{kuwahara2021liebrobinson}.  Let us denote with $\mathcal{L}_J$ and $\mathcal{L}_U$ the Liouvillians corresponding to the $J$ and $U$ terms in the Hamiltonian respectively.  Letting $\mathcal{T}$ denote the time-ordering operator, we define \begin{equation}
    \mathcal{L}_J(t)_U := \mathrm{i}[H_J(t)_U,\cdot]
\end{equation}
where \begin{equation}
    H_J(t)_U := \mathcal{T}\exp\left[ \int\limits_0^t \mathrm{d}t^\prime \mathcal{L}_U(t^\prime)\right] H_J(t)
\end{equation}
is the interaction picture hopping term.

The key observation is that $U$ is a sum of mutually commuting operators, which means that we may write \begin{equation}
    \mathcal{T}\exp\left[ \int\limits_0^t \mathrm{d}t^\prime \mathcal{L}_U(t^\prime)\right] = \prod_{S\subset V} \mathcal{T}\exp\left[ \int\limits_0^t \mathrm{d}t^\prime \mathcal{L}_{U,S}(t^\prime)\right] \label{eq:LVprod}
\end{equation}
where $\mathcal{L}_{U,S} = \mathrm{i}[U_S,\cdot]$; the ordering of the product above does not matter.   So, this means that \begin{equation}
    H_{J,uv}(t)_U = \prod_{S : \lbrace u,v\rbrace \cap S \ne \emptyset} \mathcal{T}\exp\left[ \int\limits_0^t \mathrm{d}t^\prime \mathcal{L}_{U,S}(t^\prime)\right] H_{J,uv}(t).
\end{equation}
Observe that this operator is the identity on any site which is farther than $\ell+1$ sites away from either $u$ or $v$.   Let us denote with \begin{equation}
    \mathcal{B}_{uv} := \lbrace y \in V :  \min(\mathrm{dist}(y,u),\mathrm{dist}(y,v)) \le \ell \rbrace. \label{eq:defBuv}
\end{equation}
Then letting $\mathbf{n}_{\mathcal{B}_{uv}}$ denote only the occupation numbers for sites in $\mathcal{B}_{uv}$, we may write \begin{equation}
    H_{J,uv}(t)_U = I_{\mathcal{B}_{uv}^{\mathrm{c}}} \otimes J_{uv}(t) \sum_{\mathbf{n}_{\mathcal{B}_{uv}} } \sqrt{n_v(n_u+1)} |\mathbf{n}_{\mathcal{B}_{uv}} + \mathbf{a}_u-\mathbf{a}_v\rangle \langle \mathbf{n}_{\mathcal{B}_{uv}}| \times \mathrm{e}^{\mathrm{i}\theta(\mathbf{n}_{\mathcal{B}_{uv}},t)} + \mathrm{H.c.} \label{eq:HJVt}
\end{equation}
To derive this result, we have used that the interactions $H_U$ are diagonal in the occupation number basis, and hence only contribute an overall phase to the operator: \begin{equation}
    \theta(\mathbf{n},t) := \int\limits_0^t \mathrm{d}t^\prime \left[ U(\mathbf{n}+\mathbf{a}_u-\mathbf{a}_v, t^\prime) - U(\mathbf{n},t^\prime) \right]. \label{eq:deftheta}
\end{equation}
In this equation we are using the diagonal elements of the operators $U$, using the expected notation.   The key observation about (\ref{eq:HJVt}) is that the operators are almost the same as single boson hopping operators, except for the possibility of an arbitrary phase factor.   However, this phase factor will be mild and possible to account for in what follows.

Next, we write \begin{align}
   |\mathcal{O}(t)) = \mathcal{T}\exp\left[\int\limits_0^t \mathrm{d}t^\prime \mathcal{L}_J(t^\prime)_U\right] \times  \mathcal{T}\exp\left[\int\limits_0^t \mathrm{d}t^\prime \mathcal{L}_U(t^\prime)\right] |\mathcal{O}) &:= \mathcal{T}\exp\left[\int\limits_0^t \mathrm{d}t^\prime \mathcal{L}_J(t^\prime)_U\right]  |\mathcal{O}(t)_U) \notag \\
   &:=  \mathcal{U}(t)  |\mathcal{O}(t)_U) ,
\end{align}
and observe that \begin{equation}
    (\mathcal{O}(t)|\mathcal{F}_x|\mathcal{O}(t)) = (\mathcal{O}(t)_U| \mathcal{U}(t)^\dagger \mathcal{F}_x \mathcal{U}(t)  |\mathcal{O}(t)_U).
\end{equation}
We will then choose our initial conditions $C_x(0)$ such that \begin{equation}
    C_x(0) \ge (\mathcal{O}(t)_U|\mathcal{F}_x|\mathcal{O}(t)_U), \;\;\; \text{for any }t, \label{eq:Cx0}
\end{equation}
and will choose the $M_{uv}(t)$ such that \begin{equation}
    \left(\mathcal{O}\left| \left[\mathcal{F}_x, \mathcal{L}_J(t)_U\right]\right|\mathcal{O}\right) \le \sum_{y \in V} M_{xy}(t) (\mathcal{O}|\mathcal{F}_y|\mathcal{O}), \;\;\; \text{for all } |\mathcal{O}). \label{eq:Mdefine}
\end{equation}
If we can achieve (\ref{eq:Cx0}) and (\ref{eq:Mdefine}), then we will obtain (\ref{FCbound}) and (\ref{eq:dCudt}).   We will obtain each of these two desired results in turn. 

\begin{lem}\label{lem52} Suppose the operator $|\mathcal{O})$ is supported in an initial set $R$: if $R^{\mathrm{c}}$ denotes the complement of $R$, then \begin{equation}
    |\mathcal{O}) = (1-\mathbb{P}_{R^{\mathrm{c}}})|\mathcal{O}).
\end{equation}
Then (\ref{FCbound}) holds if we choose \begin{equation}
    C_x(0) = \left\lbrace\begin{array}{ll} 2\beta^\beta(1-\mathrm{e}^{-\mu})^{-\beta} + 2(\mathcal{O}|F_x|\mathcal{O}) &\ x \in R \\ 4\beta^\beta(1-\mathrm{e}^{-\mu})^{-\beta} &\ 0<\mathrm{dist}(x,R)\le \ell \\ 0 &\ \mathrm{otherwise} \end{array}\right..  \label{eq:lem42}
\end{equation}
\end{lem}
\begin{proof}
We begin by writing the operator \begin{equation}
    |\mathcal{O}) = \left(\sum_{\mathbf{n}_R} \mathcal{O}_{\mathbf{n}_R}|\mathbf{n}_R) \right) \otimes \bigotimes_{y \in R^{\mathrm{c}} } |I)_y.
\end{equation}
Due to (\ref{eq:LVprod}), $|\mathcal{O}(t)_U)$ remains to be the identity $I$ on $x$ for $\mathrm{dist}(x,R)>\ell$, thus $C_x(0)=0$ in this case. For $\mathrm{dist}(x,R)\le \ell$, using Proposition \ref{prop:projections} and the fact that interaction does not grow size $n,n'$, we have  \begin{align}
   \left\lVert \mathbb{P}_x^{nn^\prime}\mathbb{P}_x |\mathcal{O}(t)_U)\right\rVert_2^2 &\le \left(\left\lVert \mathbb{P}_x^{nn^\prime} |\mathcal{O}(t)_U)\right\rVert_2 + \sqrt{1-\mathrm{e}^{-\mu}}\mathrm{e}^{-\mu n/2} \delta_{nn^\prime}\lVert\mathcal{O}\rVert_2\right)^2 \le 2\left\lVert \mathbb{P}_x^{nn^\prime} |\mathcal{O})\right\rVert_2^2 + 2(1-\mathrm{e}^{-\mu})\mathrm{e}^{-\mu n} \delta_{nn^\prime}(\mathcal{O}|\mathcal{O}) \label{eq:PPbound}
\end{align}

\comment{
we can explicitly expand this expression out.  Letting $n_z$ denote the boson occupancy number on any site $z$ obeying $0<\mathrm{dist}(z,x)\le \ell$, we find \begin{align}
   \mathcal{T}\exp\left[ \int\limits_0^t \mathrm{d}t^\prime \mathcal{L}_U(t^\prime)\right] |\mathcal{O}) &= \left(\sum_{n,n^\prime=0}^\infty \sum_{\text{for all }z, n_z=0}^\infty |n_xn^\prime_x, n_zn_z) \mathcal{O}_{nn^\prime} \mathrm{e}^{\mathrm{i}\theta(\mathbf{n},t)}  \prod_{z:0<\mathrm{dist}(z,x)\le \ell}\sqrt{1-\mathrm{e}^{-\mu}}\mathrm{e}^{-\mu n_z/2}\right) \notag \\
   &\otimes \bigotimes_{y : \mathrm{dist}(y,x) > \ell} |I)_y
\end{align}
where we have used (\ref{eq:Inorm}) to compute the weights of the operators in the first line above, and $\theta$ is given in (\ref{eq:deftheta}).}

Then \begin{align}
    (\mathcal{O}(t)_U|\mathcal{F}_x|\mathcal{O}(t)_U) &= (\mathcal{O}(t)_U|\mathbb{P}_x  F_x \mathbb{P}_x|\mathcal{O}(t)_U) = \sum_{n,n^\prime=0}^\infty \max(n+\beta,n^\prime+\beta)^\beta \left\lVert \mathbb{P}_x^{nn^\prime}\mathbb{P}_x |\mathcal{O}(t)_U)\right\rVert_2^2 \notag\\ 
    &\le 2(\mathcal{O}|F_x|\mathcal{O})+ 2(\mathcal{O}|\mathcal{O})\sum_{n=0}^\infty (n+\beta)^\beta (1-\mathrm{e}^{-\mu}) \mathrm{e}^{-\mu n} \le 2(\mathcal{O}|F_x|\mathcal{O})+ 2\left(\frac{\beta}{1-\mathrm{e}^{-\mu}}\right)^\beta(\mathcal{O}|\mathcal{O}), \label{eq:PP}
\end{align}
where for $0<\mathrm{dist}(x,R)\le \ell$ we can further simplify using $(\mathcal{O}|F_x|\mathcal{O}) = (I|F|I)$ and (\ref{eq:IFI}).
\end{proof}

The next step is to derive (\ref{eq:Mdefine}), which we achieve using the following lemma: \begin{lem}\label{lem53}
(\ref{eq:Mdefine}) holds with \begin{equation}
    M_{uv}(t) \le \delta_{\mathrm{dist}(u,v)\le 2\ell+1} \times \left\lbrace\begin{array}{ll} 62+48\mu^{-1} &\ \ell=0,\beta=1 \\ 23(2\beta)^{\beta+1}(1+2\mu^{-1})^{\beta+1} &\ \ell=0,\beta>1 \\ 2^{\beta+8}\beta^{2\beta}(1+2\mu^{-1})^{2\beta}K^{l+1} &\ \ell>0 \end{array}\right., \;\;\; (u\ne v) \label{eq:Muv}
\end{equation} and \begin{equation}
    M_{uu}(t) \le \left\lbrace\begin{array}{ll} (62+48\mu^{-1})K &\ \ell=0,\beta=1 \\ 23(2\beta)^{\beta+1}(1+2\mu^{-1})^{\beta+1}K &\ \ell=0,\beta>1 \\ 2^{\beta+8}\beta^{2\beta}(1+2\mu^{-1})^{2\beta} K^{l+1} &\ \ell>0 \end{array}\right.
\end{equation}
\end{lem}
\begin{proof}
The proof of this result is somewhat tedious, and the reader may wish to skim or skip this part (or only read a subset to get the general idea).  In a nutshell, we simply need to expand out \begin{align}
    (\mathcal{O}|[\mathcal{F}_z,\mathcal{L}_J(t)_U] |\mathcal{O}) &= (\mathcal{O}|[\mathbb{P}_z F_z \mathbb{P}_z,\mathcal{L}_J(t)_U] |\mathcal{O}) \notag \\
    &= (\mathcal{O}|\mathbb{P}_z[F_z,\mathcal{L}_J(t)_U]\mathbb{P}_z |\mathcal{O})+ (\mathcal{O}|[\mathbb{P}_z,\mathcal{L}_J(t)_U]F_z\mathbb{P}_z |\mathcal{O}) + (\mathcal{O}|\mathbb{P}_zF_z[\mathbb{P}_z,\mathcal{L}_J(t)_U] |\mathcal{O})  \notag \\
    &= (\mathcal{O}|\mathbb{P}_z[F_z,\mathcal{L}_J(t)_U]\mathbb{P}_z |\mathcal{O})+2(\mathcal{O}|\mathbb{P}_zF_z[\mathbb{P}_z,\mathcal{L}_J(t)_U] |\mathcal{O}).
\end{align}
The third line follows from Proposition \ref{prop:antihermitian}, and from the Hermiticity of superoperators $F_z$ and $\mathbb{P}_z$.  In what follows, to avoid clutter, we will simply write $\mathcal{L}_{uv} = \mathcal{L}_{J,uv}(t)_U$ and $\mathcal{L}_J=\mathcal{L}_J(t)_U$.  Since $\mathbb{P}_z$ is a projector, we have \begin{equation}
    [\mathbb{P}_z,\mathcal{L}_J] = \sum_{uv\in E: \mathrm{dist}(z, \lbrace u,v\rbrace )\le \ell} [\mathbb{P}_z,\mathcal{L}_{uv}]= \sum_{uv\in E: \mathrm{dist}(z, \lbrace u,v\rbrace )\le \ell} \left[\mathcal{L}_{uv}(1-\mathbb{P}_z) - (1-\mathbb{P}_z)\mathcal{L}_{uv}\right].
\end{equation}
So ultimately, we need to evaluate \begin{align}
    (\mathcal{O}|[\mathcal{F}_z,\mathcal{L}_{uv}] |\mathcal{O}) &=(\mathcal{O}|\mathbb{P}_z[F_z,\mathcal{L}_{uv}]\mathbb{P}_z |\mathcal{O})+2(\mathcal{O}|\mathbb{P}_zF_z\mathcal{L}_{uv} (1-\mathbb{P}_z) |\mathcal{O})-2(\mathcal{O}|\mathbb{P}_zF_z(1-\mathbb{P}_z)\mathcal{L}_{uv}  |\mathcal{O})
\end{align} We will call the terms above ``case 1", ``case 2" and ``case 3" respectively, and will evaluate each in term.   For cases 2 and 3, we also need to handle separately the possibility that $z \in \lbrace u,v\rbrace$ (case A) or $z\notin \lbrace u,v\rbrace$ (case B).   In what follows, we will also use the notation
\begin{subequations} \begin{align}
    |\overline{\mathcal{O}}_z) &:= (1-\mathbb{P}_z)|\mathcal{O}), \\
    |\tilde{\mathcal{O}}_z) &:= \mathbb{P}_z|\mathcal{O}).
    \end{align}
\end{subequations}
Lastly, we will use the fact that, since operators supported on disjoint sets commute,
\begin{equation}
\mathcal{L}_{uv} |\mathcal{O}) =     \mathcal{L}_{uv} \mathbb{P}_{\mathcal{B}_{uv}}  |\mathcal{O}) = \mathbb{P}_{\mathcal{B}_{uv}}  \mathcal{L}_{uv} \mathbb{P}_{\mathcal{B}_{uv}}  |\mathcal{O}). \label{eq:PBuv}
\end{equation}
However, to avoid clutter, we will often not bother to write $\mathbb{P}_{\mathcal{B}_{uv}}$ explicitly, except where necessary or useful.

\textbf{Case 1:} Since $\mathcal{L}_{uv}$ only grows size $n,n'$ on site $u,v$, we only need to consider the case $z=u$. First rearrange the projectors
\begin{align}
    (\mathcal{O}|\mathbb{P}_u[F_u,\mathcal{L}_{uv}]\mathbb{P}_u |\mathcal{O}) &= (\mathcal{O}|\mathbb{P}_u[F_u,\mathcal{L}_{uv}]\mathbb{P}_u \mathbb{P}_v |\mathcal{O}) + (\mathcal{O}|\mathbb{P}_v\mathbb{P}_u[F_u,\mathcal{L}_{uv}]\mathbb{P}_u (1-\mathbb{P}_v) |\mathcal{O}) \nonumber\\
    &= (\mathcal{O}|(2-\mathbb{P}_v)\mathbb{P}_u[F_u,\mathcal{L}_{uv}]\mathbb{P}_u\mathbb{P}_v |\mathcal{O}),
\end{align}
where we have used $(1-\mathbb{P}_v)\mathcal{L}_{uv}(1-\mathbb{P}_v)=0$, along with $F_u^\dagger = F_u$.   At this point, it is most helpful to separate out $b^\dagger_u b_v$ and $b^\dagger_v b_u$ terms in $H_{uv}$ and handle them separately.   Indeed, let us define \begin{subequations}
    \begin{align}
        \mathcal{L}^<_{u,v} |\mathcal{O}) &= \mathrm{i}J_{uv}(t) | b^\dagger_u b_v \mathcal{O}), \\
        \mathcal{L}^>_{u,v} |\mathcal{O}) &= -\mathrm{i}J_{uv}(t) |  \mathcal{O}b^\dagger_u b_v),
    \end{align}
\end{subequations}
so that we can split up \begin{equation}
    \mathcal{L}_{uv} \mathbb{P}_{\mathcal{B}_{uv}}|\mathcal{O}) = \left( \mathcal{L}_{uv}^< +\mathcal{L}_{uv}^> + \mathcal{L}_{vu}^< + \mathcal{L}_{vu}^>\right) \mathbb{P}_{\mathcal{B}_{uv}} |\mathcal{O}).
\end{equation}
As all terms are analyzed in exactly the same way, with the only differences being e.g. that \begin{subequations}
    \begin{align}
        \mathcal{L}^<_{uv}|\mathbf{n}) &= \mathrm{i}J_{uv}(t)|\mathbf{n} + \mathbf{a}_u -\mathbf{a}_v), \\
        \mathcal{L}^>_{uv}|\mathbf{n}) &=-\mathrm{i}J_{uv}(t)|\mathbf{n}+\mathbf{a}^\prime_v-\mathbf{a}^\prime_u)
    \end{align}
\end{subequations}, we will just focus on the first one $\mathcal{L}^<_{uv}$ in all cases which follow.  Since the interaction terms in the Hamiltonian obey \begin{equation}
    \mathcal{L}_U(t)|\mathbf{n}) = \mathrm{i}\frac{\mathrm{d}\theta_{\mathbf{n}}}{\mathrm{d}t}|\mathbf{n}),
\end{equation}
with $\mathrm{d}\theta_{\mathbf{n}}/\mathrm{d}t$ a conveniently named constant prefactor, we find that
\begin{align}
    [F_u,\mathcal{L}^<_{uv}]|\mathbf{n}) &= \mathrm{i}J_{uv}(t)\sqrt{(n_u+1)n_v} \mathrm{e}^{\mathrm{i}(\theta_{\mathbf{n} + \mathbf{a}_u -\mathbf{a}_v} - \theta_{\mathbf{n}})} \delta_{n_u\ge n_u^\prime} f(n_u)|\mathbf{n}+\mathbf{a}_u-\mathbf{a}_v)
\end{align}
 where \begin{align}
     f(n):&=(n+1+\beta)^\beta - (n+\beta)^\beta \notag \\
     &= (n+\beta)^{\beta-1} \sum^{\beta-1}_{k=0} \left(1+\frac{1}{n+\beta}\right)^k \le (n+\beta)^{\beta-1}\beta \left[\left(1+\frac{1}{\beta}\right)^\beta-1\right]\le (\mathrm{e}-1)\beta (n+\beta)^{\beta-1}.
\end{align}
Temporarily defining \begin{subequations}\begin{align}
    |(\mathbf{n}|\mathbb{P}_u(1-\mathbb{P}_v/2)|\mathcal{O})| &:= \varphi_{ \mathbf{n}}, \\
    |( \mathbf{n}|\mathbb{P}_u\mathbb{P}_v|\mathcal{O})| &:= \phi_{ \mathbf{n}},
\end{align}\end{subequations}
we see that  \begin{align}
   & \left||(\mathcal{O}|(2-\mathbb{P}_v)\mathbb{P}_u[F_u, \mathcal{L}^<_{uv}]\mathbb{P}_u\mathbb{P}_v|\mathcal{O})|\right| \le 2(\mathrm{e}-1)\beta\sum_{ \mathbf{n}}\phi_{\mathbf{n}}  (n_u+\beta)^{\beta-1}\sqrt{(n_u+1)n_v}\varphi_{\mathbf{n}+\mathbf{a}_u-\mathbf{a}_v} \notag \\
    &\qquad \le 2(\mathrm{e}-1)\beta\sum_{\mathbf{n}} \left[(n_v-1+\beta)^\beta + \delta_{\beta>1}(n_u+\beta)^\beta  \right] \phi_{\mathbf{n}}^2  +(n_u+\beta)^\beta \varphi_{\mathbf{n} + \mathbf{a}_u - \mathbf{a}_v}^2  \notag \\
    &\qquad  \le 2(\mathrm{e}-1)\beta \left[(1+\delta_{\beta>1})(\mathcal{O}|\mathcal{F}_u|\mathcal{O}) + (\mathcal{O}|\mathcal{F}_v|\mathcal{O})\right]. \label{eq:case1}
\end{align}
To obtain the second inequality above, we used: \begin{prop}\label{prop54}
Let $\xi_u,\xi_v,\varphi,\phi$ be positive real numbers, and $\beta$ be a positive integer.   Then \begin{align}
    \sqrt{\xi_u \xi_v}\xi_u^{\beta-1}\varphi\phi \le \xi_u^\beta \varphi^2 + \xi_v^\beta \phi^2 + \delta_{\beta>1} \xi_u^\beta \phi^2 \label{eq:xiphitrick}
\end{align}
\end{prop}
\begin{proof}This inequality is trivial for $\beta=1$ or $\varphi\phi=0$; the other cases can be proven by taking the ratio of the two sides of (\ref{eq:xiphitrick}), and using
\begin{align}
    (2\beta-1)\frac{\varphi^2+\phi^2}{\varphi\phi(2\beta-1)} \sqrt{\frac{\xi_u}{\xi_v}} + \frac{\phi}{\varphi} \left(\frac{\xi_v}{\xi_u}\right)^{\beta-1/2} \ge 2\beta \left[\left[\frac{\varphi^2+\phi^2}{\varphi\phi(2\beta-1)}\right]^{2\beta-1}\frac{\phi}{\varphi}  \right]^{1/2\beta} \ge (2\beta)^{\frac{1}{2\beta}}  \left[\left[\frac{\varphi^2+\phi^2}{\varphi\phi}\right]^{2\beta-1}\frac{\phi}{\varphi}  \right]^{1/2\beta}. \label{eq:prop54}
\end{align}
The first inequality comes from
\begin{equation}
(2\beta-1) a + b \ge 2\beta \left(a^{2\beta-1}b\right)^{1/2\beta},    
\end{equation}
with $(2\beta-1)a$ and $b$ the first two terms in the leftmost phrase of (\ref{eq:prop54}).  The second inequality in (\ref{eq:prop54}) comes from replacing $2\beta-1<2\beta$.  Now, letting $x=\phi/\varphi$, we observe that \begin{equation}
    \left[x \left(x+\frac{1}{x}\right)^{2\beta-1} \right]^{1/2\beta} = \left(x^2+1\right)^{1/2\beta} \left(x+\frac{1}{x}\right)^{1-1/\beta} \ge 1
\end{equation}
for any $x>0$. Hence we obtain (\ref{eq:xiphitrick}). 
\end{proof}

In the last line of (\ref{eq:case1}) we use Proposition \ref{propOFO} with $c=0$ and either $\mathbb{Q}=\mathbb{P}_v$ or $\mathbb{Q}= 1-\mathbb{P}_v/2$, both of which obey $\lVert \mathbb{Q}\rVert = 1$.  This completes Case 1.

\textbf{Case 2A:} The remaining 4 cases will all have a similar flavor.  The non-trivial aspect of these cases involves the presence of a $\mathbb{P}\mathcal{L}(1-\mathbb{P})$ term, which will require some special care: as in our proof of Proposition \ref{propOFO}, the $(1-\mathbb{P})$ projection onto the identity actually is  responsible for the fastest growing terms in our bound as $\mu \rightarrow 0$.   Assuming $z=u$, and defining \begin{subequations}\label{eq:case2phi}\begin{align}
|(\mathbf{n}|\mathbb{P}_u|\mathcal{O})| &:= \phi_{\mathbf{n}}, \\
    |(I_u\otimes \mathbf{n}_{-u}| [\mathbb{P}_v + \delta_{\ell>0}(1-\mathbb{P}_v)]|\overline{\mathcal{O}}_u)| &:= \psi_{\mathbf{n}_{-u}} , \end{align}
\end{subequations} we find that
\begin{align}
    &|(\tilde{\mathcal{O}}_u|F_u \mathcal{L}^<_{uv}|\overline{\mathcal{O}}_u)| \le  \sum_{\mathbf{n}}\delta_{n_un'_u} \phi_{\mathbf{n} + \mathbf{a}_u - \mathbf{a}_v} \psi_{\mathbf{n}_{-u}} \mathrm{e}^{-\mu n_u/2} \sqrt{1-\mathrm{e}^{-\mu}} \sqrt{(n_u+1)n_v}(n_u+1+\beta)^\beta \notag \\
    &\le \sum_{\mathbf{n}}\delta_{n_un'_u}\mathrm{e}^{-\mu n_u/2}(n_u+1+\beta) \left\{ \eta(1-\mathrm{e}^{-\mu}) \left[(n_v+\beta)^\beta+ \delta_{\beta>1}(n_u+1+\beta)^\beta\right] \psi_{\mathbf{n}_{-u}}^2 + \frac{1}{\eta}(n_u+1+\beta)^\beta \phi_{\mathbf{n} + \mathbf{a}_u - \mathbf{a}_v}^2\right\} \notag \\
    &\le 2\eta\left(\beta+\frac{1}{1-\mathrm{e}^{-\mu/2}}\right)\sum_{\mathbf{n}_{-u}}(n_v+\beta)^\beta\psi_{\mathbf{n}_{-u}}^2 + 2\eta\delta_{\beta>1}\left(\frac{\beta+1}{1-\ee^{-\mu/2}}\right)^{\beta+1} \sum_{\mathbf{n}_{-u} }\psi_{\mathbf{n}_{-u}}^2 + \frac{1}{\eta}\left(1+\beta+\frac{2}{\mathrm{e}\mu}\right)(\mathcal{O}|\mathcal{F}_u|\mathcal{O}). \label{eq:429}
\end{align}
In the first line, we have used (\ref{eq:Inorm}) to show that \begin{equation}
    \psi_{\mathbf{n}_{-u}} \sqrt{1-\mathrm{e}^{-\mu}} \mathrm{e}^{-\mu n_u/2} \ge |(\mathbf{n}|[\mathbb{P}_v + \delta_{\ell>0}(1-\mathbb{P}_v)]|\overline{\mathcal{O}}_u)|.
\end{equation}
In the second line, we introduced an arbitrary new constant $0<\eta<\infty$, by noting that \begin{equation}
    \sqrt{1-\mathrm{e}^{-\mu}} \phi \psi = \left(\sqrt{1-\mathrm{e}^{-\mu}}  \psi \sqrt{\eta} \right) \times \frac{\phi}{\sqrt{\eta}}.
\end{equation}
and using Proposition \ref{prop54}.  In the third line, we used (\ref{eq:sum_na}) to explicitly evaluate $n_u$ sums in the first two terms, along with the inequality
\begin{equation}\label{eq:nae}
    n^a \mathrm{e}^{-b n} < \left(\frac{a}{\mathrm{e}b}\right)^a, \;\;\; (\text{for all }0\le n<\infty),
\end{equation}
 in order to efficiently handle the extra factor of $\mathrm{e}^{-\mu n_u/2}(n_u+1+\beta)$ in the third term.   
 
 For the second term in the last line of (\ref{eq:429}), we can easily see that (recall (\ref{eq:PBuv})) \begin{equation}
     \sum_{\mathbf{n}_{-u}} \psi_{\mathbf{n}_{-u}}^2 \le (\mathcal{O}|\mathbb{P}_{\mathcal{B}_{uv}}|\mathcal{O}).
 \end{equation}
 To simplify the first term in (\ref{eq:429}), we use Proposition \ref{propOFO} with $\mathbb{Q}_{-v} = 1-\mathbb{P}_u$ and $c= \delta_{\ell>0}$: \begin{equation}
     \sum_{\mathbf{n}_{-u}}(n_v+\beta)^\beta\psi_{\mathbf{n}_{-u}}^2 \le 2 (\mathcal{O}|\mathcal{F}_v|\mathcal{O}) + 2 \delta_{\ell >0} \left(\frac{\beta}{1-\mathrm{e}^{-\mu}}\right)^{\beta} (\mathcal{O}|\mathbb{P}_{\mathcal{B}_{uv}}|\mathcal{O}). \label{eq:549}
 \end{equation}
 Now using $\eta = 1/2$ in (\ref{eq:429}), we conclude the analysis of Case 2A:
\begin{align}
|(\tilde{\mathcal{O}}_u|F_u \mathcal{L}^<_{uv}|\overline{\mathcal{O}}_u)| 
&\le 2\left(1+\beta+\frac{2}{\mu}\right)[(\mathcal{O}|\mathcal{F}_u|\mathcal{O})+(\mathcal{O}|\mathcal{F}_v|\mathcal{O})] + (\delta_{\beta>1}+ 2\delta_{\ell>0}) \left(\frac{\beta+1}{1-\ee^{-\mu/2}}\right)^{\beta+1} (\mathcal{O}|\mathbb{P}_{\mathcal{B}_{uv}}|\mathcal{O}). \label{eq:case2A}
\end{align}

\textbf{Case 2B:} Now we turn to the case $z\ne u,v$, which contributes only when $\ell>0$.  Now defining \begin{subequations}\label{eq:case2phi}\begin{align}
|(\mathbf{n}|\mathbb{P}_z|\mathcal{O})| &:= \phi_{\mathbf{n}}, \\
    |(I_u\otimes \mathbf{n}_{-z}|\overline{\mathcal{O}}_z)| &:= \psi_{\mathbf{n}_{-z}} , \end{align}
\end{subequations}
we find that
\begin{align}
    |(\tilde{\mathcal{O}}_z|F_z \mathcal{L}^<_{uv}|\overline{\mathcal{O}}_z)| &\le \delta_{\ell>0}\sum_{\mathbf{n}} \delta_{n_zn_z^\prime}(n_z+\beta)^\beta \phi_{\mathbf{n}+\mathbf{a}_u - \mathbf{a}_v} \sqrt{(n_u+1)n_v}\sqrt{1-\mathrm{e}^{-\mu}}\mathrm{e}^{-\mu n_z/2} \psi_{\mathbf{n}_{-z}} \notag \\ &\le \frac{\delta_{\ell>0}}{2} \sum_{\mathbf{n}}\delta_{n_zn_z^\prime}\mathrm{e}^{-\mu n_z/2} (n_z+\beta)^\beta  \left[(n_u+1)\phi_{\mathbf{n}+\mathbf{a}_u - \mathbf{a}_v}^2 + (1-\mathrm{e}^{-\mu})  n_v\psi_{\mathbf{n}_{-z}}^2 \right] \notag \\
    &\le \frac{\delta_{\ell>0}}{2}  \left[\sum_{\mathbf{n}}(2\beta)^\beta \left(1+\left(\frac{2}{\mathrm{e}\mu}\right)^{2\beta}\right)(n_u+1)\phi_{\mathbf{n}+\mathbf{a}_u - \mathbf{a}_v}^2 +  \sum_{\mathbf{n}_{-z}} \frac{(1-\mathrm{e}^{-\mu})\beta^\beta}{(1-\mathrm{e}^{-\mu/2})^{\beta+1}} n_v\psi_{\mathbf{n}_{-z}}^2 \right].
\end{align}
In the second line we used $ab \le \frac{1}{2}(a^2+b^2)$; in the third line, we used (\ref{eq:sum_na}) together with \begin{align}
    (n_z+\beta)^\beta \mathrm{e}^{-\mu n_z/2} &\le 2^\beta \left(\beta^\beta + n_z^\beta\right)\mathrm{e}^{-\mu n_z/2} \le (2\beta)^\beta \left(1+\left(\frac{2}{\mathrm{e}\mu}\right)^{\beta}\right);
\end{align} 
the last inequality follows from (\ref{eq:nae}).  Lastly, we use that (e.g.) $n_u+1 \le (n_u+\beta)^\beta$ along with analogous manipulations to (\ref{eq:549}) to see that \begin{align}
    |(\tilde{\mathcal{O}}_z|F_z \mathcal{L}^<_{uv}|\overline{\mathcal{O}}_z)| &\le \delta_{\ell>0} (2\beta)^\beta \left(1+\left(\frac{2}{\mathrm{e}\mu}\right)^{\beta}\right) \left[ (\mathcal{O}|\mathcal{F}_u|\mathcal{O}) + \left(\frac{\beta}{1-\mathrm{e}^{-\mu}}\right)^{\beta} (\mathcal{O}|\mathbb{P}_{\mathcal{B}_{uv}}|\mathcal{O})\right] \notag \\
    &\qquad + \delta_{\ell>0} \frac{2\beta^\beta}{(1-\mathrm{e}^{-\mu/2})^{\beta}}\left[ (\mathcal{O}|\mathcal{F}_v|\mathcal{O}) + \left(\frac{\beta}{1-\mathrm{e}^{-\mu}}\right)^{\beta} (\mathcal{O}|\mathbb{P}_{\mathcal{B}_{uv}}|\mathcal{O})\right]. \label{eq:case2B}
\end{align}

\textbf{Case 3A:} Let $z=u$.  Now denote \begin{subequations}\begin{align}
|(\mathbf{n}|\mathbb{P}_u|\mathcal{O})| &:= \phi_{ \mathbf{n}}, \\
    |(I_u\otimes \mathbf{n}_{-u}| [\mathbb{P}_v + \delta_{\ell>0}(1-\mathbb{P}_v)] F_u|\tilde{\mathcal{O}}_u)| &:= \psi_{\mathbf{n}_{-u}} . \end{align}
\end{subequations}
Since $(1-\mathbb{P}_u)\mathcal{L}_{uv} = (1-\mathbb{P}_u)\mathcal{L}_{uv}\mathbb{P}_u$ ($\mathcal{L}_{uv}$ will always change either $n_u$ or $n_u^\prime$), we may simply evaluate
\begin{align}
    |(\tilde{\mathcal{O}}_u|F_u (1-\mathbb{P}_u) \mathcal{L}^<_{uv}|\tilde{\mathcal{O}}_u)| &\le \sum_{\mathbf{n}} \delta_{n_un_u^\prime} \sqrt{1-\mathrm{e}^{-\mu}}\mathrm{e}^{-\mu n_u/2}\psi_{\mathbf{n}_{-u}} \sqrt{n_u(n_v+1)} \phi_{\mathbf{n}+\mathbf{a}_v-\mathbf{a}_u} \notag\\ 
    &\le \sum_{\mathbf{n}}\delta_{n_un_u^\prime}\mathrm{e}^{-\mu n_u/2} \left[\frac{\eta}{4}(1-\mathrm{e}^{-\mu})(n_v+1)\psi_{\mathbf{n}_{-u}}^2 + \frac{1}{\eta}(n_u+1) \phi_{\mathbf{n}+\mathbf{a}_v-\mathbf{a}_u}^2 \right] \notag\\ 
    &\le \frac{\eta}{2} \sum_{\mathbf{n}_{-\mathbf{u}}}(n_v+1)\psi_{\mathbf{n}_{-u}}^2 + \frac{1}{\eta} (\mathcal{O}|\mathcal{F}_u|\mathcal{O}), \label{eq:554}
    \end{align}
    employing similar tricks to Case 2B.   For the first term, define $\mathbb{Q} = (1-\mathbb{P}_u) F_u$, and observe that \begin{align}
    \lVert\mathbb{Q}\rVert = \lVert(1-\mathbb{P}_u) F_u\rVert = \lVert F_u|I_u)\rVert_2 = \sqrt{1-\mathrm{e}^{-\mu}} \sqrt{\sum_n (n+\beta)^{2\beta}\mathrm{e}^{-\mu n} } \le \left(\frac{2\beta}{1-\mathrm{e}^{-\mu}}\right)^\beta. \label{eq:555}
\end{align}
Similarly to Proposition \ref{propOFO}, 
\begin{align}
   \sum_{\mathbf{n}_{-uv}}\psi_{\mathbf{n}_{-u}}^2 &= \sum_{\mathbf{n}_{-uv}}|(I_u\otimes \mathbf{n}_{-u}| (1-\mathbb{P}_u)F_u [\mathbb{P}_v + \delta_{\ell>0}(1-\mathbb{P}_v)] |\tilde{\mathcal{O}}_u)|^2 \notag \\
   &\le  \lVert (1-\mathbb{P}_u)F_u \rVert^2 \times (I|I) \times \lVert \mathbb{P}_v^{nn^\prime}[\mathbb{P}_v + \delta_{\ell>0}(1-\mathbb{P}_v)] |\tilde{\mathcal{O}}_u)\rVert_2^2. \label{eq:556}
\end{align}
    Plugging (\ref{eq:555}) and (\ref{eq:556}) into (\ref{eq:554}), noting that $(I|I)=1$,  and using Proposition \ref{propOFO}, we find \begin{align}
      \sum_{\mathbf{n}_{-\mathbf{u}}}(n_v+1)\psi_{\mathbf{n}_{-u}}^2 \le \eta \left(\frac{2\beta}{1-\mathrm{e}^{-\mu}}\right)^{2\beta}  \left[ (\mathcal{O}|\mathcal{F}_v|\mathcal{O}) + \delta_{\ell>0}\left(\frac{\beta}{1-\mathrm{e}^{-\mu}}\right)^{\beta} (\mathcal{O}|\mathbb{P}_{\mathcal{B}_{uv}}|\mathcal{O})\right] + \frac{(\mathcal{O}|\mathcal{F}_u|\mathcal{O})}{\eta}.
    \end{align}
    Choosing \begin{equation}
        \eta = \left(\frac{2\beta}{1-\mathrm{e}^{-\mu}}\right)^{-\beta}
    \end{equation}
    we obtain \begin{equation}
        |(\tilde{\mathcal{O}}_u|F_u (1-\mathbb{P}_u) \mathcal{L}^<_{uv}|\tilde{\mathcal{O}}_u)| \le \left(\frac{2\beta}{1-\mathrm{e}^{-\mu}}\right)^{\beta} \left[ (\mathcal{O}|\mathcal{F}_u|\mathcal{O})+ (\mathcal{O}|\mathcal{F}_v|\mathcal{O}) + \delta_{\ell>0}\left(\frac{\beta}{1-\mathrm{e}^{-\mu}}\right)^{\beta} (\mathcal{O}|\mathbb{P}_{\mathcal{B}_{uv}}|\mathcal{O})\right]. \label{eq:case3A}
    \end{equation}

\textbf{Case 3B:} The last case proceeds very similarly to Case 3A.  Defining \begin{subequations}\begin{align}
|(\mathbf{n}|\mathcal{O})| &:= \phi_{ \mathbf{n}}, \\
    |(I_z\otimes \mathbf{n}_{-z}|  F_z|\tilde{\mathcal{O}}_z)| &:= \psi_{\mathbf{n}_{-z}}, \notag \\
    |( \mathbf{n}_{-z}|  (1-\mathbb{P}_z)|\mathcal{O})| &:= \tilde\psi_{\mathbf{n}_{-z}} \end{align}
\end{subequations}
and noting that analogous to (\ref{eq:556}), \begin{equation}
    \sum_{\mathbf{n}_{-zv}}\psi_{\mathbf{n}_{-z}}\le \left(\frac{2\beta}{1-\mathrm{e}^{-\mu}}\right)^{\beta} \sum_{\mathbf{n}_{-zv}} \tilde\psi_{\mathbf{n}_{-z}} \label{eq:561}
\end{equation}
we find that
\begin{align}
    |(\tilde{\mathcal{O}}_z|F_z (1-\mathbb{P}_z) \mathcal{L}_{uv}^<|\mathcal{O})| &\le \delta_{\ell>0} \sum_{\mathbf{n}} \sqrt{1-\mathrm{e}^{-\mu}}\mathrm{e}^{-\mu n_z/2}\psi_{\mathbf{n}_{-z}} \sqrt{(n_v+1)n_u} \phi_{\mathbf{n}+\mathbf{a}_v-\mathbf{a}_u} \delta_{n_zn_z^\prime} \notag\\ &\le \frac{\delta_{\ell>0}}{2} \left(\frac{2\beta}{1-\mathrm{e}^{-\mu}}\right)^{\beta} \sum_{\mathbf{n}} \delta_{n_zn_z^\prime} \left[(1-\mathrm{e}^{-\mu}) \mathrm{e}^{-\mu n_z}(n_v+1)\tilde\psi_{\mathbf{n}_{-z}}^2 + n_u \phi_{\mathbf{n}+\mathbf{a}_v-\mathbf{a}_u}^2 \right] \notag\\ 
    &\le \frac{\delta_{\ell >0}}{2}\left(\frac{2\beta}{1-\mathrm{e}^{-\mu}}\right)^{\beta} \left[ \sum_{\mathbf{n}_{-z}}(n_v+1)\tilde\psi_{\mathbf{n}_{-z}}^2+ \sum_{\mathbf{n}}n_u \phi_{\mathbf{n}+\mathbf{a}_v-\mathbf{a}_u}^2  \right] \notag \\
    &\le \delta_{\ell >0}\left(\frac{2\beta}{1-\mathrm{e}^{-\mu}}\right)^{\beta}\left[ (\mathcal{O}|\mathcal{F}_u|\mathcal{O}) + (\mathcal{O}|\mathcal{F}_v|\mathcal{O}) + 2\left(\frac{\beta}{1-\mathrm{e}^{-\mu}}\right)^{\beta} (\mathcal{O}|\mathbb{P}_{\mathcal{B}_{uv}}|\mathcal{O}) \right] \label{eq:case3B}
\end{align}
where we completed the square in the second line along with using (\ref{eq:561}), evaluated the sum over $n_z$ in the third line, and used Proposition \ref{propOFO} in the fourth line.

\textbf{Combining the cases:} Now it simply remains to combine all of our results: (\ref{eq:case1}) for Case 1, (\ref{eq:case2A}) for Case 2A, (\ref{eq:case2B}) for Case 2B, (\ref{eq:case3A}) for Case 3A, and (\ref{eq:case3B}) for Case 3B.  We will use many elementary inequalities to try and simplify complicated expressions, such as \begin{equation}
    \frac{1}{1-\mathrm{e}^{-\mu/2}} \le 1 + \frac{2}{\mu},
\end{equation}$\beta+1<2\beta$, etc., along with the (quite loose) inequality (\ref{eq:PPF}).   When $\ell=0$, we may simply replace $(\mathcal{O}|\mathbb{P}_{\mathcal{B}_{uv}}|\mathcal{O}) \le (\mathcal{O}|\mathcal{F}_u|\mathcal{O}) + (\mathcal{O}|\mathcal{F}_v|\mathcal{O})$.   We then observe that in the above calculation, it is this combination of $(\mathcal{O}|\mathcal{F}_u|\mathcal{O}) + (\mathcal{O}|\mathcal{F}_v|\mathcal{O})$ which shows up everywhere.   This then implies that our bound on $M_{uu}$ will be $K$ times larger than our bound on $M_{uv}$, where we have used that (as defined above) no vertex in $G$ has more than $K$ adjacent vertices.   This leads us to the $\ell=0$ cases contained in  (\ref{eq:Muv}).

For simplicity, we get a little bit lazier with the $\ell>0$ cases.   Firstly, let us simply use the crude fact above that \begin{equation}
    (\mathcal{O}|\mathcal{F}_u|\mathcal{O}),  (\mathcal{O}|\mathcal{F}_v|\mathcal{O}),  (\mathcal{O}|\mathbb{P}_{\mathcal{B}_{uv}}|\mathcal{O}) \le \sum_{x\in\mathcal{B}_{uv}} (\mathcal{O}|\mathcal{F}_x|\mathcal{O}).  \label{eq:lazyboundsell}
\end{equation}
It is then simply a matter of counting up every single coefficient.   Observe that for a given edge $(uv)\in E$, we may induce a contribution to $M_{xy}(t)$ for $x,y\ne u,v$.  The following proposition bounds how often this can happen: \begin{prop}Consider two vertices $\lbrace x,y\rbrace \subset V$ in a graph $G=(V,E)$ with maximal degree $K$.  Recall the subsets $\mathcal{B}_e$, defined in (\ref{eq:defBuv}) for each edge $e\in E$.  Let the number of edges $e$ for which $\lbrace x,y\rbrace \subseteq \mathcal{B}_{uv}$ be defined as $\mathcal{N}_{xy}$.  Then \begin{equation}
    \mathcal{N}_{xy} := |\lbrace e\in E : \lbrace x,y\rbrace \subseteq \mathcal{B}_{e}\rbrace| \le \delta_{\mathrm{dist}(x,y)\le 2\ell +1} K^{\ell+1}. \label{eq:prop56}
\end{equation} \label{prop56}
\end{prop}   
\begin{proof}
If $\ell=0$, then $\mathcal{N}_{xy}=1$: $e=(xy)$ is required.  So (\ref{eq:prop56}) is true but loose, in this case.

If $\ell>0$, observe that we can (lazily) bound $\mathcal{N}_{xy}$ by simply finding the number of $\mathcal{B}_e$ containing $x$.  This is upper bounded by assuming that the graph $G$ is a $K$-regular tree:  the reason is because if $G$ contains any cycles (loops), then it is possible that the following count (based on the assumption of a tree) of the number of edges $e$ within a distance $\ell$ of $x$ may double count edges.   On a $K$-regular tree, there are $K$ neighbors $u$ of the vertex $x$.  Each $u$ has $K-1$ additional neighbors $u^\prime$, with $\mathrm{dist}(u^\prime,x)=2$.  Continuing this process, we see that there are $K(K-1)^m$ edges that connected a vertex a distance $m$ from $x$ to a vertex at distance $m+1$.   Then \begin{equation}
    \mathcal{N}_{xy} \le \sum_{m=0}^\ell K(K-1)^{m} \le K+ \sum_{m=1}^{\ell } K^m(K-1) = K^{\ell+1}, \label{eq:Nxybound}
\end{equation}
which completes the proof.
\end{proof}

Proposition \ref{prop56} implies that for any pair of vertices $x,y$, we may have contributions to $M_{xy}$ from up to $\mathcal{N}_{xy}$ couplings in $\mathcal{L}_J$.   So, summing up the total contribution from a single coupling using (\ref{eq:lazyboundsell}), we arrive at the $\ell>0$ results in (\ref{eq:Muv}).
\end{proof}

The hard part of the proof is now complete.  The last step is rather standard: to solve the differential equations (\ref{eq:dCudt}) and bound the resulting $C_v(t)$.  We achieve this using ``quantum walk inspired" methods, following \cite{Yin:2020pjd}:
\begin{lem}\label{lem56}
    Given a graph $G=(V,E)$ and real-valued functions $C_v(t)$ on each vertex $v$, if the differential inequalities
    \begin{equation}
        \frac{\mathrm{d}C_v}{\mathrm{d}t} \le A_v(t) C_v(t) + \sum_{u: \mathrm{dist}(u,v)\le 2\ell+1} B_{uv}(t) C_u(t),
    \end{equation}
    then if \begin{subequations} \label{eq:AvBuvBounds}
        \begin{align}
            A_v(t) &\le K^{2\ell+1}B, \\
            B_{uv}(t) &\le B,
        \end{align}\end{subequations}
        and the initial conditions are that (for subset $R\subset V$) $C_v(0)=0$ if $v\notin R$, then if $\mathrm{dist}(x,R) = r$,
        \begin{equation}
            C_x(t) \le \left(\frac{vt}{r}\right)^{r/(2\ell+1)} \times \sum_{x\in R} C_x(0), \; \;\; \text{ if } vt<r, \label{eq:lemma56}
        \end{equation}
        where the velocity \begin{equation}
            v < 4(2\ell+1)K^{2\ell+1}B. \label{eq:lemma56v}
        \end{equation}
\end{lem}
\begin{proof}
Let $\lambda>1$ be a real number, and define \begin{equation}
    G(t) := \sum_{v\in V} C_v(t) \lambda^{\mathrm{dist}(v,R)}.
\end{equation}
Observe that, using (\ref{eq:AvBuvBounds}), \begin{equation}
    \frac{\mathrm{d}G}{\mathrm{d}t} \le \sum_{v\in V} \left[ K^{2\ell+1}B C_v(t) + \sum_{u: \mathrm{dist}(u,v)\le 2\ell+1} BC_u(t) \right] \lambda^{\mathrm{dist}(v,R)} \le K^{2\ell+1}B \left(1+\lambda^{2\ell+1}\right)G(t),
\end{equation}
where in the second equality we used that $\lambda^{\mathrm{dist}(v,R)}\le \lambda^{2\ell+1+\mathrm{dist}(u,R)}$, along with the fact that the number of vertices $u$ within distance $2\ell+1$ of any given vertex must be $\le K^{2\ell+1}$, analogously to (\ref{eq:Nxybound}).  Therefore, \begin{equation}
    G(t) \le G(0) \exp\left[K^{2\ell+1}B\left(1+\lambda^{2\ell+1}\right)t\right].
\end{equation}
In the spirit of Markov's inequality, we thus find that if $r=\mathrm{dist}(x,R)$, \begin{equation}
    C_x(t) \le \lambda^{-r}G(t) \le G(0) \exp\left[K^{2\ell+1}B\left(1+\lambda^{2\ell+1}\right)t - \frac{r}{2\ell+1}\log \lambda^{2\ell+1} \right].
\end{equation}
We now choose the optimal value of $\lambda$, which corresponds to \begin{equation}
    \lambda^{2\ell+1} = \frac{r}{(2\ell+1)K^{2\ell+1}Bt}.
\end{equation}
We then find that \begin{equation}
    C_x(t) \le G(0)\exp\left[-\frac{r}{2\ell+1} \left(\log\frac{r}{(2\ell+1)K^{2\ell+1}Bt} - \frac{(2\ell+1)K^{2\ell+1}Bt}{r}-1 \right)\right]. \label{eq:56C}
\end{equation}
If the object in parentheses above is positive, then $C_x(t)$ is super-exponentially suppressed.  It is straightforward to numerically check that \begin{equation}
    \log x - \frac{1}{x}-1 > \log \frac{x}{4} > 0, \;\;\; (4<x<\infty). \label{eq:56x}
\end{equation}
Combining (\ref{eq:56C}) and (\ref{eq:56x}), and using that \begin{equation}
    G(0) = \sum_{x\in R} C_x(0),
\end{equation} 
we find that (\ref{eq:lemma56}) holds for velocity $v$ given in (\ref{eq:lemma56v}).
\end{proof}

According to Lemma \ref{lem52}, $G(0)$ in the previous proof is
\begin{equation}
    G(0)= 2\sum_{x\in R} (\mathcal{O}|F_x|\mathcal{O}) + 2\left(\frac{\beta}{1-\mathrm{e}^{-\mu}}\right)^{\beta} (|R|+|R_\ell|),
\end{equation}
where $R_\ell=\{x\in V: \mathrm{dist}(x,R)\le \ell\}$. Then
(\ref{eq:thm1}) and (\ref{eq:thm2}) immediately follow from combining (\ref{eq:dCudt}) with Proposition \ref{prop33}, and Lemmas \ref{lem53} and \ref{lem56}.   We have thus proven the existence of a linear light cone in the grand canonical ensemble of interacting bosonic models.
\end{proof}

Note that in the case $\ell>0$, we actually know that $A_v(t)\le B$ as well, and so the bound in (\ref{eq:thm2}) is expected to be particularly weak in this case -- however, as noted in the introduction, we believe that none of our O(1) coefficients are particularly tight; the most important result in this theorem (besides the fact $v$ is finite!) is the scaling of velocity when $\beta=0$ and $\ell=1$, which cannot qualitatively be improved any further.  

On a nearest neighbor $d$-dimensional cubic lattice, one has $K=2d$ and thus in higher dimensions our velocity factor becomes larger. This effect is common to Lieb-Robinson bounds \cite{PRXQuantum.1.010303}, and arises in such a cubic lattice due to the fact that there are exponentially many paths one can find between two widely separated points.  There is a contribution to our commutator bound and quantum walk from operators growing along each path.

\section{One dimensional models}\label{sec:1d}
One important limitation of Theorem \ref{thm51} is that it only holds for ``thermal averages" in a particular infinite temperature grand canonical ensemble.  While such a result is highly suggestive that a light cone exists in \emph{all} finite density states, it does not represent a mathematically rigorous proof.  In this section, we will show that in one dimensional models, we can come very close to proving a ``worst case" Lieb-Robinson-style bound, which demonstrates a finite velocity of quantum information in \emph{all} finite density states. Furthermore, we can remove the $\beta$ dependence of the information speed, so that all physical processes are bounded by one speed, regardless of what operator to probe the system.

In order to do this, we first introduce some notation.  Let $V=\{i: i=-L,-L+1,\cdots, L\}$ denote sites in a 1d chain, labeled by integers.  Define $\mathbb{Q}_x$ $(x\ge 0)$ to project onto operators acting nontrivially on the set $\{x,-x\}$ but no further measured from $i=0$:
\begin{align}
    \mathbb{Q}_x = \mathbb{P}_{\{x,-x\}} \prod_{y>x}(1-\mathbb{P}_y)(1-\mathbb{P}_{-y}).
\end{align}
Immediately we notice the following useful result:
\begin{prop}
If $\mathbb{Q}_0|\mathcal{O}) = |\mathcal{O})$, then  \begin{align}\label{eq:Q<P<}
    (\mathcal{O}(t) | \mathbb{Q}_r|\mathcal{O}(t) ) \le (\mathcal{O}(t) | \mathbb{P}_{\{r,-r\}}|\mathcal{O}(t) ) \le C \left(\frac{vt}{r}\right)^{r/(2l+1)},
\end{align}
with \begin{equation}
    v < \left\lbrace\begin{array}{ll} 8K(31+24\mu^{-1}) &\ \ell = 0 \\ 2^{11}(2l+1)K^{3\ell+2}(1+2\mu^{-1})^2 &\ \ell>0  \end{array}\right.. \label{eq:6v}
\end{equation}
\end{prop}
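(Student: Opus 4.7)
The strategy is to reduce the claim to the intermediate operator-weight estimate that is actually established \emph{inside} the proof of Theorem~\ref{thm51}, bypassing the commutator conversion of Proposition~\ref{prop33}. The first inequality in (\ref{eq:Q<P<}) is purely algebraic. Since $\mathbb{P}_y$ and $\mathbb{P}_{-y}$ act on disjoint sites they commute, so
\begin{equation}
\mathbb{Q}_r = \mathbb{P}_{\{r,-r\}} \prod_{y>r}(1-\mathbb{P}_y)(1-\mathbb{P}_{-y})
\end{equation}
is a product of mutually commuting, simultaneously diagonalizable projectors; as a refinement of $\mathbb{P}_{\{r,-r\}}$ it satisfies $\mathbb{Q}_r \le \mathbb{P}_{\{r,-r\}}$ as positive superoperators, giving the first inequality.

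For the second inequality I would first dominate $\mathbb{P}_{\{r,-r\}} = \mathbb{P}_r + \mathbb{P}_{-r} - \mathbb{P}_r\mathbb{P}_{-r} \le \mathbb{P}_r + \mathbb{P}_{-r}$, which is immediate from $(1-\mathbb{P}_r)(1-\mathbb{P}_{-r}) \ge 0$, and then specialize the machinery of Theorem~\ref{thm51} to $\beta = 1$, where
\begin{equation}
F^1_x = \sum_{n,n'}\max(n+1,n'+1)\,|nn')(nn'| \ge \mathcal{I}_x, \qquad \mathcal{F}^1_x = \mathbb{P}_x F^1_x \mathbb{P}_x \ge \mathbb{P}_x.
\end{equation}
Although Theorem~\ref{thm51} is phrased as a commutator bound, its proof in fact establishes the stronger intermediate estimate (\ref{FCbound}). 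Feeding this into Lemmas~\ref{lem52}, \ref{lem53} and \ref{lem56}, with $R = \{0\}$ so that $\mathrm{dist}(\pm r,R) = r$, gives
\begin{equation}
(\mathcal{O}(t)|\mathcal{F}^1_{\pm r}|\mathcal{O}(t)) \le \left(\frac{vt}{r}\right)^{r/(2\ell+1)} G(0), \qquad vt < r,
\end{equation}
where the velocity $v$ is read off from the $\beta = 1$ rows of (\ref{eq:thm2}); a direct substitution shows these coincide exactly with (\ref{eq:6v}). Summing the $x = r$ and $x = -r$ contributions and invoking $\mathbb{P}_x \le \mathcal{F}^1_x$ finishes the second inequality.

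I do not anticipate any genuinely hard step: the proposition is essentially a repackaging of the operator-growth bound already inside Theorem~\ref{thm51}, exploiting that at $\beta = 1$ the superobservable $\mathcal{F}^1_x$ already dominates the raw spatial projector $\mathbb{P}_x$. The only bookkeeping point is that $G(0)$ from Lemma~\ref{lem52} depends on $(\mathcal{O}|F^1_0|\mathcal{O})$ and $(\mathcal{O}|\mathcal{O})$; both are finite under the hypothesis $\mathbb{Q}_0|\mathcal{O}) = |\mathcal{O})$ and get absorbed into the $\mathcal{O}$-dependent constant $C$. The payoff of this direct route is that, with Proposition~\ref{prop33} avoided, one pays no $\beta$- or $\gamma$-dependent multiplicative losses and arrives at a single unified velocity (\ref{eq:6v}) that governs every local probe, as needed for the stronger one-dimensional results that follow.
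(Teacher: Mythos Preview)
Your proposal is correct and follows essentially the same approach as the paper. The paper's proof is only two sentences: it notes $\lVert 1-\mathbb{P}_j\rVert=1$ for the first inequality and then says ``To bound this latter inner product, we use Lemma~\ref{lem56} \ldots\ $v$ can be evaluated at $\beta=1$''; you have simply unpacked that terse remark by making explicit the chain $\mathbb{P}_{\{r,-r\}}\le \mathbb{P}_r+\mathbb{P}_{-r}\le \mathcal{F}^1_r+\mathcal{F}^1_{-r}$ and the invocation of Lemmas~\ref{lem52}--\ref{lem56} at $\beta=1$, which is exactly what the paper intends.
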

\begin{proof}
Since $\norm{1-\mathbb{P}_j} = 1$, we see that $(\mathcal{O}(t) | \mathbb{Q}_r|\mathcal{O}(t) ) \le (\mathcal{O}(t) | \mathbb{P}_{\{r,-r\}}|\mathcal{O}(t) )$.  To bound this latter inner product, we use Lemma \ref{lem56}.  This shows us that (\ref{eq:Q<P<}) holds; moreover, $v$ can be evaluated at $\beta=1$, which leads to (\ref{eq:6v}).
\end{proof}

Using this proposition, we can then prove the following theorem:
\begin{thm}\label{thm71}
Let $R= \{i\in V: r\le i\le r_+\}$, where $r_+-r=O(1)$. Define $\mathcal{O}^\prime, \beta, \gamma$ as in (\ref{eq:O'}) and (\ref{eq:beta_gamma}).
If there are some $\mu, \theta,K_0>0$ such that the state $\tilde{\rho}$ satisfies
\begin{equation}
    \mathrm{tr} \left(\sqrt{ \tilde{\rho} } A^\dagger \sqrt{ \tilde{\rho} } A \right) \le K_0\theta^{2x} \mathrm{tr}\left(\sqrt{\rho_\mu} A^\dagger \sqrt{\rho_\mu} A\right), \quad \forall A = A_{\le x}\otimes I_{>x}, \label{eq:Aansatz}
\end{equation}
(i.e. $A$ is non-identity only within sites $\{-x,\cdots,x\}$),
then we have the inequality
\begin{equation}
  ([\mathcal{O}(t),\mathcal{O}^\prime] | [\mathcal{O}(t),\mathcal{O}^\prime])_{\tilde{\rho}}:=\mathrm{tr}\left(\sqrt{\tilde{\rho}}[\mathcal{O}(t),\mathcal{O}^\prime]^\dagger \sqrt{\tilde{\rho}} [\mathcal{O}(t),\mathcal{O}^\prime] \right) \le C_1 \left(\frac{(2\theta)^{8l+4}v^\prime t}{r}\right)^{r/(2l+1)}, \label{eq:7thm}
\end{equation}
for $r>(2\theta)^{8l+4}v^\prime t$. Here $v^\prime = (1+\epsilon) v_{\mu/2}$ where $v_\mu$ is given in (\ref{eq:6v}) and $\epsilon$ is arbitrarily small but finite.  The constants $0<C_1,\epsilon <\infty$ are independent of $r$.
\end{thm}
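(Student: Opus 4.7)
The plan is to decompose $\mathcal{O}(t)$ into spatial shells using the projectors $\mathbb{Q}_x$, bound each shell's contribution to the commutator by combining the finite-density hypothesis \eqref{eq:Aansatz} with the super-exponential spatial bound from the proposition preceding Theorem~\ref{thm71}, and sum the shells so that the spatial decay dominates the density enhancement carried by $\theta$.

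Concretely, I would first write
\begin{equation}
    [\mathcal{O}(t),\mathcal{O}^\prime] \;=\; \sum_{x\ge r}[\mathbb{Q}_x\mathcal{O}(t),\mathcal{O}^\prime],
\end{equation}
because any shell $\mathbb{Q}_x\mathcal{O}(t)$ with $x<r$ is supported in $\{-x,\ldots,x\}\subset(-\infty,r-1]$ and therefore commutes with $\mathcal{O}^\prime$. The triangle inequality in the $\tilde\rho$-inner product then reduces the task to controlling each shell separately. Because $[\mathbb{Q}_x\mathcal{O}(t),\mathcal{O}^\prime]$ is supported in $\{-x,\ldots,\max(x,r_+)\}$, the hypothesis \eqref{eq:Aansatz} converts its $\tilde\rho$-norm into its $\rho_\mu$-norm at the cost of $\sqrt{K_0}\,\theta^{\max(x,r_+)}$. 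Next, a simple single-site calculation yields $(A|A)_{\rho_\mu}\le 2^{|S|}(A|A)_{\rho_{\mu/2}}$ for any operator $A$ supported on a finite set $S$ (each per-site ratio is bounded by $1+\mathrm{e}^{-\mu/2}\le 2$); applying this inequality generates an additional $2^{O(x)}$ factor, which combined with $\theta^{O(x)}$ gives the $(2\theta)$-dependent velocity enhancement that appears in the final bound.

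Now working in the $\rho_{\mu/2}$-inner product, I would invoke Proposition~\ref{prop33} to bound each shell's commutator norm by a sum of weighted expectations,
\begin{equation}
    ([\mathbb{Q}_x\mathcal{O}(t),\mathcal{O}^\prime]|[\mathbb{Q}_x\mathcal{O}(t),\mathcal{O}^\prime])_{\rho_{\mu/2}} \;\lesssim\; \sum_{y\in R}(\mathbb{Q}_x\mathcal{O}(t)|\mathcal{F}_y^\beta|\mathbb{Q}_x\mathcal{O}(t))_{\rho_{\mu/2}}.
\end{equation}
For $y<x$ the projector $\mathbb{Q}_x$ acts only on sites with $|z|\ge x$ and therefore commutes with the single-site weight $\mathcal{F}_y^\beta$; a Cauchy-Schwarz split in operator Hilbert space factorizes the resulting quantity into the $\mathcal{F}_y^{2\beta}$-weight of $\mathcal{O}(t)$ (controlled at chemical potential $\mu/2$ by Theorem~\ref{thm51}, uniformly in $x$) times the shell weight $(\mathcal{O}(t)|\mathbb{Q}_x|\mathcal{O}(t))_{\rho_{\mu/2}}$, which by the preceding proposition is $\le C(v_{\mu/2}t/x)^{x/(2\ell+1)}$. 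The edge case $y=x$ is handled separately by a slightly looser but analogous computation, reflecting that $\mathbb{Q}_x$ and $\mathcal{F}_x^\beta$ both act on site $x$. Combining these ingredients yields a per-shell bound of the form $(\text{const})\cdot(2\theta)^{O(x)}(v_{\mu/2}t/x)^{x/(2\ell+1)}$.

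Summing over $x\ge r$, the super-exponential decay in $x$ dominates the geometrically growing $(2\theta)^{O(x)}$ factor provided $r$ exceeds the light-cone threshold $(2\theta)^{8\ell+4}v'\,t$; the leading term $x=r$ then sets the stated bound \eqref{eq:7thm}, with all $\beta$-dependence absorbed into the overall constant $C_1$ rather than the velocity. The main obstacle in executing this plan is controlling $(\mathbb{Q}_x\mathcal{O}(t)|\mathcal{F}_y^\beta|\mathbb{Q}_x\mathcal{O}(t))_{\rho_{\mu/2}}$: because $\mathcal{F}_y^\beta$ is unbounded one cannot simply use $\mathbb{Q}_x\mathcal{F}_y^\beta\mathbb{Q}_x\le \|\mathcal{F}_y^\beta\|\,\mathbb{Q}_x$, so the tensor-product structure separating the single site $y$ from the distant shell $|z|=x$ must be exploited carefully so that the small $\mathbb{Q}_x$-weight factorizes out of the finite $\mathcal{F}_y$-weight. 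Tracking the resulting multiplicative constants precisely enough to produce the stated velocity factor $(2\theta)^{8\ell+4}v_{\mu/2}$ (rather than something worse) is the second point that requires care.
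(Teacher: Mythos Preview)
Your architecture is essentially the paper's: decompose $\mathcal{O}(t)$ (after projecting with $\mathbb{P}_R^{\mu/2}$) into spatial shells $\mathcal{O}_x=\mathbb{Q}_x\mathbb{P}_R\mathcal{O}(t)$, convert the $\tilde\rho$-norm to a $\rho_\mu$-norm via the hypothesis, pass to $\rho_{\mu/2}$ at the cost of a $2^{O(x)}$ factor, and sum over $x\ge r$. The paper also uses a telescoping expansion $\mathcal{O}_{\le x}=\mathcal{O}_{\le x-1}+\mathcal{O}_x$ in the $\tilde\rho$-inner product rather than the naive triangle inequality you propose, but this is a cosmetic difference.

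The genuine gap is in how you handle the $\mathcal{O}'$-multiplication. Your route through Proposition~\ref{prop33} followed by the Cauchy--Schwarz split
\[
(\mathbb{Q}_x\mathcal{O}\,|\,\mathcal{F}_y^\beta\,|\,\mathbb{Q}_x\mathcal{O})\;\le\;\sqrt{(\mathcal{O}|\mathcal{F}_y^{2\beta}|\mathcal{O})}\,\sqrt{(\mathcal{O}|\mathbb{Q}_x|\mathcal{O})}
\]
forces you to bound the first factor via Theorem~\ref{thm51} at parameter $2\beta$. That bound carries a $\beta$-dependent velocity $v_{2\beta}$, and after combining with the $\sqrt{\mathbb{Q}_x}$ factor your light-cone velocity becomes $\sim\sqrt{v_1 v_{2\beta}}$, not the $\beta$-independent $v_{\mu/2}$ claimed in the theorem. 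So your plan proves \emph{a} linear light cone, but not the stated one; the ``second point that requires care'' is in fact an obstruction, not a bookkeeping issue.

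The paper avoids Proposition~\ref{prop33} entirely here. It works directly in the $|\mathbf{n}\rangle\langle\mathbf{n}'|$ basis: multiplying by $\mathcal{O}'$ produces a polynomial coefficient $c_{\mathbf n}^2\le(\beta+\sum_{y\in R}n_y)^\beta$, and the $\rho_\mu$ weight $\prod_i e^{-\mu(n_i+n_i')/2}$ is split as $\prod_i e^{-\mu(n_i+n_i')/4}\cdot\prod_i e^{-\mu(n_i+n_i')/4}$. One half of the exponential absorbs $c_{\mathbf n}^2$ into a single $\beta$-dependent \emph{constant} $C_2$ (via $n^\beta e^{-\mu n/4}\le(\tfrac{4\beta}{e\mu})^\beta$), and the other half is exactly the $\rho_{\mu/2}$-norm of the bare shell $\mathcal{O}_x$, controlled by the $\beta=1$ estimate $(\mathcal{O}|\mathbb{Q}_x|\mathcal{O})_{\mu/2}\le C(v_{\mu/2}t/x)^{x/(2\ell+1)}$. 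This is the reason $\mu/2$ appears at all, and it is what pushes every trace of $\beta$ into $C_1$ rather than into the velocity.
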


\begin{proof}
We will prove this result in 2 steps: first, we will analyze inner products of the form $(\mathcal{O}\mathcal{O}^\prime|\mathcal{OO}^\prime)$ without relying on an $F$-ansatz (as we did the previous sections);  then, we will show how to use (\ref{eq:Aansatz}) in order to obtain (\ref{eq:7thm}).

Let us begin with our first step.  In what follows, we denote $\mathcal{O}(t)$ by $\mathcal{O}$. Since obviously $[\mathcal{O}(t),\mathcal{O}^\prime]=0$ if $\mathcal{O}(t)$ has no support in the set $R$, we can always project $\mathcal{O}(t)$ onto operators that have support in set $R$.   It turns out to be convenient to do this using the $\mathbb{P}_R$ operator introduced above -- but with an inner product evaluated at $\mu/2$ instead!  (We will point out later where this ``trick" becomes useful.)   Using the Cauchy-Schwarz inequality, we find that  \begin{equation} \label{eq:triang_ine}
    ([\mathcal{O},\mathcal{O}^\prime] | [\mathcal{O},\mathcal{O}^\prime])_{\tilde{\rho}} \le 2 \left( \mathcal{O}^\prime (\mathbb{P}_R^{\mu/2} \mathcal{O}) |  \mathcal{O}^\prime (\mathbb{P}_R^{\mu/2} \mathcal{O}) \right)_{\tilde{\rho}} + 2 \left( (\mathbb{P}_R^{\mu/2} \mathcal{O})\mathcal{O}^\prime  |  (\mathbb{P}_R^{\mu/2} \mathcal{O})\mathcal{O}^\prime  \right)_{\tilde{\rho}}
\end{equation}
where $\mathbb{P}^{\mu/2}_R$ is the projection operator defined via the inner product $\rho_{\mu/2}$.  In the rest of this proof we neglect to write the superscript $\mu/2$ in $\mathbb{P}_R$.

It is useful to expand out $\mathbb{P}_R\mathcal{O}$ a bit more explicitly.  We write \begin{equation}\label{eq:Ox}
   \mathbb{P}_R \mathcal{O} = \mathcal{O}_{\le r_+} + \sum_{x=r_++1}^L \mathcal{O}_x,
   \end{equation}
where \begin{subequations}\label{eq:68ab}\begin{align}
    \mathcal{O}_x &:= \mathbb{Q}_x\mathbb{P}_R\mathcal{O} = \sum_{\mathbf{n},\mathbf{n}^\prime}\tilde{\mathcal{O}}_{x,\mathbf{n}\mathbf{n}^\prime}|\mathbf{n}\rangle \langle \mathbf{n}^\prime| \otimes I_{>x} , \\
    \mathcal{O}_{\le x} &:= \mathbb{P}_R\mathcal{O}- \sum_{y=x+1}^L \mathcal{O}_x = \sum_{\mathbf{n}, \mathbf{n}^\prime }\tilde{\mathcal{O}}_{\le x,\mathbf{n}\mathbf{n}^\prime}|\mathbf{n}\rangle \langle \mathbf{n}^\prime| \otimes I_{>x}
\end{align}\end{subequations}
where $\mathbf{n},\mathbf{n}^\prime$ above only run over sites $\{-x,\cdots, x\}$, $\mathbf{n}=\{n_{-x}n_{-x}^\prime,\cdots,n_xn_x^\prime\}$, and $|\mathbf{n}\rangle\langle \mathbf{n}^\prime|$ is shorthand for $|n_{-x}\cdots n_x\rangle\langle n_{-x}^\prime\cdots n_x^\prime|$.  Here we are temporarily using the ``bare'' operator basis $|n\rangle\langle n'|$, whose coefficient is $\tilde{\mathcal{O}}_{\mathbf{n}}$ (this is \emph{not} the same as our previously introduced $\mathcal{O}_{\mathbf{n}}$). Observe that from (\ref{eq:Q<P<}),
\begin{align}
    \mathrm{tr}\left(\sqrt{\rho_{\mu/2} } \mathcal{O}_x^\dagger \sqrt{\rho_{\mu/2}} \mathcal{O}_x  \right) &= \sum_{\mathbf{n} \in \mathbf{n}_{\le x} } |\tilde{\mathcal{O}}_{x,\mathbf{n}}|^2 \prod_{|i|\le x} (1-\mathrm{e}^{-\mu/2}) \mathrm{e}^{-\mu (n_i+n_i^\prime)/4} \le C \left(\frac{v t}{x}\right)^{x/(2l+1)}, \\
    \mathrm{tr}\left(\sqrt{\rho_{\mu/2} } \mathcal{O}_{\le x}^\dagger \sqrt{\rho_{\mu/2}} \mathcal{O}_{\le x}  \right) &= \sum_{\mathbf{n} \in \mathbf{n}_{\le x} } |\tilde{\mathcal{O}}_{\le x,\mathbf{n}}|^2 \prod_{|i|\le x} (1-\mathrm{e}^{-\mu/2}) \mathrm{e}^{-\mu (n_i+n_i^\prime)/4} \le (\mathcal{O}(t) | \mathbb{P}_R|\mathcal{O}(t) ) \le  C^\prime \left(\frac{vt}{r}\right)^{r/(2l+1)}.
\end{align}

Now, let us analyze what multiplication by $\mathcal{O}^\prime$ does.  Similar to our discussion in the proof of Proposition \ref{prop33} (and using similar notation), we observe that  \begin{align}
    \mathcal{O}^\prime \mathcal{O}_x = \sum_{\mathbf{n} \in \mathbf{n}_{\le x}}\tilde{\mathcal{O}}_{x,\mathbf{n}} c_{\mathbf{n}} |\mathbf{n} + \mathbf{g}\rangle\langle \mathbf{n}| \otimes I_{>x},
\end{align}
where, using (\ref{eq:413}), \begin{equation}
    0\le c_{\mathbf{n}} \le \left(\beta + \sum_{x\in R} n_x\right)^{\beta/2}.  \label{eq:cnbound}
\end{equation}
An analogous calculation to what follows holds for $\mathcal{O}_x\mathcal{O}^\prime$, as well as for $\mathcal{O}^\prime \mathcal{O}_{\le x}$, so we will show only the case $\mathcal{O}^\prime \mathcal{O}_x$ explicitly.
Using the inner product induced by $\rho_\mu$,
\begin{align}
    \left(\mathcal{O}^\prime \mathcal{O}_x| \mathcal{O}^\prime \mathcal{O}_x \right)_\mu &= \sum_{\mathbf{n} \in \mathbf{n}_{\le x}} |\tilde{\mathcal{O}}_{x,\mathbf{n}}|^2 c_{\mathbf{n}}^2 \langle \mathbf{n}|\sqrt{\rho} | \mathbf{n}\rangle \langle \mathbf{n}+\mathbf{g}|\sqrt{\rho} | \mathbf{n}+\mathbf{g}\rangle  \notag \\
    &= \sum_{\mathbf{n} \in \mathbf{n}_{\le x}} |\tilde{\mathcal{O}}_{x,\mathbf{n}}|^2 c_{\mathbf{n}}^2 \mathrm{e}^{-\mu\gamma/2} \prod_{|i|\le x}(1-\mathrm{e}^{-\mu}) \mathrm{e}^{-\mu (n_i+n_i^\prime)/2} .
\end{align}
At this point, we have two factors --  $\tilde{\mathcal{O}}_{x,\mathbf{n}}$ and $c_{\mathbf{n}}$ -- that must be bounded.  First, we use (\ref{eq:nae}) and (\ref{eq:cnbound}) to show that \begin{align}
        c_{\mathbf{n}}^2 \prod_{|i|\le x} \mathrm{e}^{-\mu (n_i+n_i^\prime)/4} < \mathrm{e}^{\mu\beta/4} \left(\frac{4\beta}{\mathrm{e}\mu}\right)^\beta := C_2. \label{eq:C2def}
\end{align}
Then, we can use (\ref{eq:68ab}) and (\ref{eq:C2def}) to show that 
\begin{align}\label{eq:OOxOOx}
    \left(\mathcal{O}^\prime \mathcal{O}_x| \mathcal{O}^\prime \mathcal{O}_x \right)_\mu &\le C_2\mathrm{e}^{-\mu\gamma/2} \sum_{\mathbf{n} \in \mathbf{n}_{\le x}} |\tilde{\mathcal{O}}_{x,\mathbf{n}}|^2 \prod_{|i|\le x}(1-\mathrm{e}^{-\mu}) \mathrm{e}^{-\mu (n_i+n_i^\prime)/4} \nonumber\\ &\le C_2\mathrm{e}^{-\mu\gamma/2} \left(\frac{1-\mathrm{e}^{-\mu}}{1-\mathrm{e}^{-\mu/2}}\right)^{2x+1} C \left(\frac{vt}{x}\right)^{x/(2l+1)} \le 2C C_2 \mathrm{e}^{-\mu\gamma/2} \left(\frac{2^{4l+2}vt}{x}\right)^{x/(2l+1)}.
\end{align}
Again, similar manipulations follow for other operator orderings such as $\mathcal{O}_x\mathcal{O}^\prime$, and lead to an identical functional form up to a different choice of O(1) prefactors $C$ and $C_2$.

At this point, we are ready to invoke (\ref{eq:Aansatz}).  The key observation is that \begin{align}
    \mathrm{tr}\left(\sqrt{\tilde{\rho}} A^\dagger \sqrt{\tilde{\rho}} B \right) &\le \sqrt{\mathrm{tr}\left(\sqrt{\tilde{\rho}} A^\dagger \sqrt{\tilde{\rho}} A \right) \mathrm{tr}\left(\sqrt{\tilde{\rho}} B^\dagger \sqrt{\tilde{\rho}} B \right)} \nonumber\\ &\le K_0\theta^{2x} \sqrt{ \mathrm{tr}\left(\sqrt{\rho_\mu } A^\dagger \sqrt{\rho_\mu } A \right) \mathrm{tr}\left(\sqrt{\rho_\mu } B^\dagger \sqrt{\rho_\mu } B \right) }. \label{eq:rhoCS}
\end{align}
If we then expand out 
\begin{align}
    \left( \mathcal{O}^\prime (\mathbb{P}_R \mathcal{O}) |  \mathcal{O}^\prime (\mathbb{P}_R \mathcal{O}) \right)_{\tilde{\rho}} &= \left( \mathcal{O}^\prime \mathcal{O}_{\le r_+} |  \mathcal{O}^\prime \mathcal{O}_{\le r_+} \right)_{\tilde{\rho}} + \sum_{x>r_+} \mlr{\left( \mathcal{O}^\prime \mathcal{O}_x |  \mathcal{O}^\prime \mathcal{O}_{\le x} \right)_{\tilde{\rho}} + \mathrm{H.c.} } - \left( \mathcal{O}^\prime \mathcal{O}_x |  \mathcal{O}^\prime \mathcal{O}_x \right)_{\tilde{\rho}} \notag \\ &\le\left( \mathcal{O}^\prime \mathcal{O}_{\le r_+} |  \mathcal{O}^\prime \mathcal{O}_{\le r_+} \right)_{\tilde{\rho}} + \sum_{x>r_+} 2\left|\left( \mathcal{O}^\prime \mathcal{O}_x |  \mathcal{O}^\prime \mathcal{O}_{\le x} \right)_{\tilde{\rho}} \right| + \left( \mathcal{O}^\prime \mathcal{O}_x |  \mathcal{O}^\prime \mathcal{O}_x \right)_{\tilde{\rho}},
\end{align}
for each term above, we can bound it using (\ref{eq:rhoCS}):
\begin{align}
    &\left( \mathcal{O}^\prime (\mathbb{P}_R \mathcal{O}) |  \mathcal{O}^\prime (\mathbb{P}_R \mathcal{O}) \right)_{\tilde{\rho}} \le K_0\theta^{2r_+} \left(\mathcal{O}^\prime \mathcal{O}_{\le r_+}| \mathcal{O}^\prime \mathcal{O}_{\le r_+} \right)_\mu + K_0\sum_{x>r_+}\theta^{2x} \left[ 2\sqrt{ \left(\mathcal{O}^\prime \mathcal{O}_{ x}| \mathcal{O}^\prime \mathcal{O}_{ x} \right)_\mu \left(\mathcal{O}^\prime \mathcal{O}_{\le x}| \mathcal{O}^\prime \mathcal{O}_{\le x} \right)_\mu } + \left(\mathcal{O}^\prime \mathcal{O}_{ x}| \mathcal{O}^\prime \mathcal{O}_{x} \right)_\mu \right] \nonumber\\ &\le 2K_0 C_2\mathrm{e}^{-\mu\gamma/2} \left\{ \theta^{2r_+} C^\prime  2^{2r_+} \left(\frac{vt}{r}\right)^{r/(2l+1)} + \sum_{x>r_+}(2\theta)^{2x} \left[2\sqrt{C C^\prime} \left(\frac{vt}{r}\right)^{r/2(2l+1)} \left(\frac{vt}{x}\right)^{x/2(2l+1)} + C\left(\frac{vt}{x}\right)^{x/(2l+1)} \right] \right\} \nonumber\\
    &\le C_1^\prime \left(\frac{(2\theta)^{4l+2}v t}{r}\right)^{r/(2l+1)}, \;\;\; \text {if} \;\;\; (2\theta)^{8l+4}(1+\epsilon)v t<r, \label{eq:rhoxsum}
\end{align}
where $\epsilon>0$ is any finite constant, and $0<C_1^\prime<\infty$ is a constant independent of $r$ or $t$, but dependent on $\epsilon$. To derive the last inequality above, we have to approximately re-sum the two $x$-dependent terms, which is where we will introduce $\epsilon$.  Observe that for $x>r_+$, we may write \begin{equation}\label{eq:sumx}
    \sum_{x\ge r_+}\left(\frac{vt}{x}\right)^{x/(4l+2)} < \sum_{x\ge r_+}\left(\frac{vt}{r_+}\right)^{r_+/(4l+2)} \times \left(\frac{1}{1+\epsilon}\right)^{(x-r_+)/(4l+2)} = \left(1 - \frac{1}{(1+\epsilon)^{1/(4l+2)}}\right)^{-1}\left(\frac{vt}{r_+}\right)^{r_+/(4l+2)}.
\end{equation}
The $\epsilon$-dependent prefactor ends up absorbed in the constant $C_1^\prime$.  Using this identity on both terms in the $x$-sum of (\ref{eq:rhoxsum}), and noting that $r<r_+$, we obtain the final inequality of (\ref{eq:rhoxsum}).   A slightly awkward feature of this equation is that our bound is super-exponentially small when the prefactor $C_1^\prime$ diverges: namely, the velocity which is suggested by the parenthetical expression does not match the speed of the light cone in which the expression is valid.  The presentation of the bound in (\ref{eq:7thm}) simply replaces $(2\theta)^{4l+2}v \rightarrow (2\theta)^{8l+4}(1+\epsilon)v $ so that the formula directly implies the region where the light cone is valid.

The theorem follows because the second term in (\ref{eq:triang_ine}) can be treated exactly the same way.\end{proof}

The following corollary demonstrates that the assumptions of the above theorem are sufficiently mild that they allow us to prove a finite velocity of information, as measured by \emph{all finite density matrix elements} of a commutator:

\begin{cor}\label{cor63}
Let $|\red{\psi_1}\rangle$ and $|\red{\psi_2}\rangle$ denote many-body states such that the maximal number of bosons on any site is $m$.   Then for any $m<\infty$, there exists a velocity $0<v_*<\infty$ and a constant $0<C<\infty$ such that for operators $\mathcal{O},\mathcal{O}^\prime$ obeying the assumptions of Theorem \ref{thm71}, \begin{equation}
    \left|\langle \red{\psi_1}| [\mathcal{O}(t),\mathcal{O}^\prime] |\red{\psi_2}\rangle \right| \le C \left(\frac{v_* t}{r}\right)^r. \label{eq:cor63}
\end{equation}
\end{cor}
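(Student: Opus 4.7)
My plan is to realize $\langle\psi_1|[\mathcal{O}(t),\mathcal{O}']|\psi_2\rangle$ as a matrix element controlled by a $\tilde\rho$-weighted inner product for a specially-constructed $\tilde\rho$, and then invoke Theorem~\ref{thm71}. I pick any orthonormal basis $\{|\phi_1\rangle,|\phi_2\rangle\}$ of $\mathrm{span}(|\psi_1\rangle,|\psi_2\rangle)$; since linear combinations of states with $\le m$ bosons per site remain in the same finite-occupation subspace, the $|\phi_i\rangle$ also obey the $m$-boson-per-site constraint. I set $\tilde\rho:=\tfrac{1}{2}(|\phi_1\rangle\langle\phi_1|+|\phi_2\rangle\langle\phi_2|)$, so that $\sqrt{\tilde\rho}=\tfrac{1}{\sqrt{2}}(|\phi_1\rangle\langle\phi_1|+|\phi_2\rangle\langle\phi_2|)$ and $\mathrm{tr}(\sqrt{\tilde\rho}M^\dagger\sqrt{\tilde\rho}M)=\tfrac{1}{2}\sum_{i,j\in\{1,2\}}|\langle\phi_i|M|\phi_j\rangle|^2$ for any operator $M$. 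Expanding $|\psi_k\rangle$ in the $\phi$-basis and applying Cauchy-Schwarz to the resulting four-term sum gives $|\langle\psi_1|M|\psi_2\rangle|^2\le 2\,\mathrm{tr}(\sqrt{\tilde\rho}M^\dagger\sqrt{\tilde\rho}M)$, valid regardless of whether $|\psi_1\rangle,|\psi_2\rangle$ are orthogonal. Taking $M=[\mathcal{O}(t),\mathcal{O}']$ reduces the corollary to bounding the $\tilde\rho$-weighted norm on the right hand side.

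\textbf{Verifying hypothesis (\ref{eq:Aansatz}).} The heart of the argument is to show that $\tilde\rho$ satisfies (\ref{eq:Aansatz}) with an $x$-independent $\theta$. I expand $|\phi_i\rangle=\sum_{n,q}c^{(i)}_{n,q}|n,q\rangle$ with $n$ a configuration on $\{-x,\ldots,x\}$ and $q$ on the complement, and write $A=\sum_{n,n'}A_{nn'}|n\rangle\langle n'|\otimes I_{>x}$ in the bare basis. Then $\langle\phi_i|A|\phi_j\rangle=\sum_{n,n'}A_{nn'}N^{(ij)}_{nn'}$ with $N^{(ij)}_{nn'}:=\sum_q \bar c^{(i)}_{n,q}c^{(j)}_{n',q}$. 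A weighted Cauchy-Schwarz with weights $w_{nn'}:=\prod_{|k|\le x}(1-e^{-\mu})e^{-\mu(n_k+n'_k)/2}$ yields $|\langle\phi_i|A|\phi_j\rangle|^2\le \mathrm{tr}(\sqrt{\rho_\mu}A^\dagger\sqrt{\rho_\mu}A)\cdot \sum_{n,n'}|N^{(ij)}_{nn'}|^2 w_{nn'}^{-1}$, where the first factor is precisely the reference norm in (\ref{eq:Aansatz}). The finite-density assumption forces $N^{(ij)}_{nn'}=0$ unless every $n_k,n'_k\le m$, on which support $w_{nn'}^{-1}\le \theta_0^{2x+1}$ for $\theta_0:=(1-e^{-\mu})^{-1}e^{\mu m}$. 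A second Cauchy-Schwarz in $q$ together with normalization of $|\phi_i\rangle$ gives $\sum_{n,n'}|N^{(ij)}_{nn'}|^2\le 1$, and together these establish (\ref{eq:Aansatz}) with $\theta=\theta_0$ and $K_0$ an $O(1)$ constant depending only on $\mu$ and $m$.

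\textbf{Conclusion and main obstacle.} Theorem~\ref{thm71} then gives $\mathrm{tr}(\sqrt{\tilde\rho}[\mathcal{O}(t),\mathcal{O}']^\dagger\sqrt{\tilde\rho}[\mathcal{O}(t),\mathcal{O}'])\le C_1\bigl((2\theta_0)^{8\ell+4}v't/r\bigr)^{r/(2\ell+1)}$ for $r$ outside the resulting light cone, and combining with Step 1 and taking a square root yields $|\langle\psi_1|[\mathcal{O}(t),\mathcal{O}']|\psi_2\rangle|\le C'\bigl(v_* t/r\bigr)^{r/(4\ell+2)}$ with $v_*:=(2\theta_0)^{8\ell+4}v'$, absorbing all $\mu$- and $m$-dependence into $v_*$. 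This is the super-exponential light-cone bound asserted in (\ref{eq:cor63}); the displayed exponent $r$ in (\ref{eq:cor63}) is to be read schematically, since the actual decay rate $r/(4\ell+2)$ that falls out is still $r^{-\Omega(r)}$. The decisive and most delicate step is the verification of the $\theta^{2x}$ scaling in (\ref{eq:Aansatz}): without the $\le m$ per-site cutoff on $|\psi_{1,2}\rangle$ the weight $w_{nn'}^{-1}$ on the support of $N^{(ij)}_{nn'}$ would be unbounded and the whole plan would collapse. The finite-density hypothesis is precisely what tames this to $\theta_0^{O(x)}$ and lets one exchange $\sqrt{\tilde\rho}$ for $\sqrt{\rho_\mu}$ at the cost of one tame factor per site.
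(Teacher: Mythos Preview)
Your proof is correct and follows essentially the same strategy as the paper: verify hypothesis~(\ref{eq:Aansatz}) for a suitable $\tilde\rho$ built from $|\psi_1\rangle,|\psi_2\rangle$ (using the per-site occupation bound $m$ together with Cauchy--Schwarz), then invoke Theorem~\ref{thm71}. The verification of~(\ref{eq:Aansatz}) is virtually identical in both arguments.

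The one genuine difference is in how the off-diagonal matrix element is captured. The paper applies Theorem~\ref{thm71} to the three \emph{pure} states $\tilde\rho_1=|\psi_1\rangle\langle\psi_1|$, $\tilde\rho_2=|\psi_2\rangle\langle\psi_2|$, and $\tilde\rho_3=|\psi\rangle\langle\psi|$ with $|\psi\rangle=2^{-1/2}(|\psi_1\rangle+\mathrm{e}^{\mathrm{i}\phi}|\psi_2\rangle)$, then extracts $\langle\psi_1|M|\psi_2\rangle$ by polarization (varying $\phi$). You instead take $\tilde\rho$ to be the maximally mixed state on $\mathrm{span}(|\psi_1\rangle,|\psi_2\rangle)$, whose weighted norm $\mathrm{tr}(\sqrt{\tilde\rho}M^\dagger\sqrt{\tilde\rho}M)=\tfrac12\sum_{i,j}|\langle\phi_i|M|\phi_j\rangle|^2$ already contains the off-diagonal pieces, so a single application of Theorem~\ref{thm71} suffices. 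Your route is a bit more economical; the paper's route stays within the class of pure states and makes the phase-dependence explicit. Both yield the same final bound with exponent $r/(4\ell+2)$ after the square root, and your remark that the exponent $r$ in~(\ref{eq:cor63}) is schematic is apt.
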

\begin{proof}
The goal is to apply Theorem \ref{thm71} to the following three choices of $\tilde \rho$: \begin{subequations}
    \begin{align}
        \tilde\rho_1 &= |\red{\psi_1}\rangle\langle \red{\psi_1}|, \\
        \tilde\rho_2 &= |\red{\psi_2}\rangle\langle \red{\psi_2}|, \\
        \tilde\rho_3 &= |\psi\rangle\langle \psi|, \;\;\; \text{where} \;\;\; |\psi\rangle = 2^{-1/2}(|\red{\psi_1}\rangle + \mathrm{e}^{\mathrm{i}\phi}|\red{\psi_2}\rangle).
    \end{align}
\end{subequations}
Here $\phi$ is real.  To see why this would be helpful, observe that \begin{align}
    \mathrm{tr}\left( \sqrt{\tilde\rho}_3 A^\dagger\sqrt{\tilde\rho}_3 A \right) =& |\langle \psi| A|\psi\rangle|^2 = \frac{1}{4}\left|\langle \red{\psi_1}| A|\red{\psi_1}\rangle+\langle \red{\psi_2}| A|\red{\psi_2}\rangle+(\mathrm{e}^{\mathrm{i}\phi}\langle \red{\psi_1}| A|\red{\psi_2}\rangle+\text{c.c.})\right|^2 \le |\langle \red{\psi_1}| A|\red{\psi_1}\rangle|^2+|\langle \red{\psi_2}| A|\red{\psi_2}\rangle|^2\nonumber\\ & +2|\langle \red{\psi_1}| A|\red{\psi_2}\rangle|^2 \le \mathrm{tr}\left( \sqrt{\tilde\rho}_1 A^\dagger\sqrt{\tilde\rho}_1 A \right) + \mathrm{tr}\left( \sqrt{\tilde\rho}_2 A^\dagger\sqrt{\tilde\rho}_2 A \right) + 2|\langle \red{\psi_1}| A|\red{\psi_2}\rangle|^2. \label{eq:matrixelrhotilde}
\end{align}

Our goal is now to verify that Theorem \ref{thm71} holds for each of these three density matrices. \red{Expand $\psi_i,(i=1,2)$ in the boson number eigenbasis $|\mathbf{n}_{\le x}\rangle$ on sites $\le x$, \begin{align}
    |\psi_i\rangle = \sum_{\mathbf{n}_{\le x}} a_{i,\mathbf{n}_{\le x}}|\mathbf{n}_{\le x}\rangle \otimes |\psi_{i,\mathbf{n}_{\le x}}\rangle,
\end{align}
where $|\psi_{i,\mathbf{n}_{\le x}}\rangle$ are normalized states on sites $>x$, so that \begin{align}
    \sum_{\mathbf{n}_{\le x}} |a_{i,\mathbf{n}_{\le x}}|^2 =1.
\end{align}
Thus if $\mathbb{Q}_x A = A$, since $\langle \psi_{i,\mathbf{n}_{\le x}}|\psi_{j,\mathbf{n}^\prime_{\le x}}\rangle \le 1$ \begin{align}
    |\langle \red{\psi_i}| A|\red{\psi_j}\rangle|^2 &\le \left( \sum_{\mathbf{n}_{\le x},\mathbf{n}_{\le x}^\prime} |\bar{a}_{i,\mathbf{n}_{\le x}}a_{j,\mathbf{n}_{\le x}^\prime}| |\langle \mathbf{n}_{\le x}| A|\mathbf{n}^\prime_{\le x}\rangle|\right)^2 \le \sum_{\mathbf{n}_{\le x},\mathbf{n}_{\le x}^\prime} |a_{i,\mathbf{n}_{\le x}}|^2 |a_{j,\mathbf{n}_{\le x}^\prime}|^2 \sum_{\mathbf{n}_{\le x},\mathbf{n}_{\le x}^\prime} |\langle \mathbf{n}_{\le x}| A|\mathbf{n}^\prime_{\le x}\rangle|^2 \nonumber\\ &=\sum_{\mathbf{n}_{\le x},\mathbf{n}_{\le x}^\prime} |\langle \mathbf{n}_{\le x}| A|\mathbf{n}^\prime_{\le x}\rangle|^2 \le \mathrm{tr}\left( \sqrt{\rho_\mu} A^\dagger \sqrt{\rho_\mu} A \right)\times  \prod_{|j|\le x} \frac{\mathrm{e}^{\mu m}}{1-\mathrm{e}^{-\mu}}.
\end{align}
}
In the second line, simply observe that if the inner product is expanded out into all possible matrix elements of $A$, then when\begin{equation}
    \mu = \frac{1}{m},
\end{equation}the coefficient of $|\langle \mathbf{n}| A|\mathbf{n}^\prime\rangle|^2$ is greater than or equal to unity.  Of course, the second line includes all other possible matrix elements weighted by various factors.   We conclude that (\ref{eq:Aansatz}) holds for each of $\tilde\rho_{1,2,3}$ with $K_0 \le 4$ and  \begin{equation}
    \theta := \mathrm{e}\left(1+m\right)\ge \frac{\mathrm{e}}{1-\mathrm{e}^{-1/m}}.
\end{equation}
Therefore,
\begin{equation}
    \mathrm{tr}\left(\sqrt{\tilde\rho_{1,2,3}} [\mathcal{O}(t),\mathcal{O}^\prime]^\dagger\sqrt{\tilde\rho_{1,2,3}} [\mathcal{O}(t),\mathcal{O}^\prime] \right) < C^\prime \left(\frac{v_* t}{r}\right)^r \label{eq:rho123}
\end{equation}
for some constants $0<C^\prime,v_*<\infty$ as given in Theorem \ref{thm71}.  Combining (\ref{eq:matrixelrhotilde}) and (\ref{eq:rho123}), we obtain (\ref{eq:cor63}).
\end{proof}

Corollary \ref{cor63} provides a complete Lieb-Robinson-like bound for Bose-Hubbard-like models in one dimension.  Since we know, as discussed in the introduction, that a finite Lieb-Robinson velocity cannot exist in \emph{all states} as the physical velocity can diverge at high density, this is the strongest possible type of light cone.

Note that if $m\gg 1$ in Corollary \ref{cor63}, the velocity $v_*$ in (\ref{eq:cor63}) has parametrically different scaling at the (worst-case) density of $m$ than the bound for $\tilde{\rho} = \rho_{\mu}$ at $\mu=1/m$.   We believe this is not likely to be a physical effect, though of course a further investigation is worthwhile.

We do not know how as of yet to generalize Theorem \ref{thm71} or Corollary \ref{cor63} to a higher-dimensional lattice model.  The simple reason is that in $d$ dimensions, a ball of radius $r$ has $r^d$ sites inside, and so $\theta^{r^d}$ grows too quickly to merely ``rescale" the velocity of our light cone.  For any $d=2,3,\ldots$, the bound of \cite{kuwahara2021liebrobinson} can be better.  However, we note that the bound of \cite{kuwahara2021liebrobinson} requires that the density matrix $\tilde \rho$ commutes with the Hamiltonian $H$.  In general, we only expect a two-parameter family of such $\tilde \rho$ of broad physical importance:  $\tilde \rho \propto \exp[-\beta (H-\tilde \mu N)]$.  Our Theorem \ref{thm51} applies to this case whenever one considers the limit of $\beta=0$ and $\beta\tilde \mu := -\mu$ remains finite.

The methodology behind Corollary \ref{cor63} is not limited to this particular setting of interacting boson systems.  Indeed, it is easily generalized to prove that Frobenius and Lieb-Robinson light cones are (up to O(1) factors) equivalent in one dimensional models with local interactions: although this result was known previously \cite{Yin:2020pjd}, the current approach gives an alternative perspective as to why this must be the case.   In the presence of long-range interactions, however, it is known that the Frobenius and Lieb-Robinson light cones are distinct \cite{Tran:2020xpc}:  hence, it is possible to have a finite velocity for Frobenius commutator bounds, but diverging velocity for the usual operator norm of a commutator.  From the perspective of Theorem \ref{thm71}, this is allowable because the tail in the bounds is only algebraic:  $(t/r^\alpha)^\beta$ for some finite coefficients $\alpha$ and $\beta$.  Because $\beta$ does not scale with $r$, it is not generally possible to apply Theorem \ref{thm71} in these models without qualitatively changing the shape of the light cone, unless the number of sites on which the state is specified is $r$-independent.

\section{Classical complexity of simulations}\label{sec:complexity}
We can now prove that Bose-Hubbard-type models in one dimension are asymptotically no harder to simulate classically than usual spin chains.  This result provides mathematical justification to the routine simulation of low-density Bose gases by working in a truncated Hilbert space.  

More precisely, our results will bound the size of the finite dimensional Hilbert space needed to accurately calculate $\mathrm{tr}\lr{\tilde{\rho} \mathcal{O}(t)}$ using a classical computer, where for simplicity we assume $\mathbb{Q}_0|\mathcal{O})=|\mathcal{O})$. Although the finite density condition (\ref{eq:Aansatz}) is sufficient for our purpose, we use a potentially weaker version instead, assuming \begin{equation}\label{eq:rho<theta}
    \mathrm{tr}_{\le x} \left( \rho_{\mu,\le x}^{-1/2} (\mathrm{tr}_{> x}\tilde{\rho}) \rho_{\mu,\le x}^{-1/2} (\mathrm{tr}_{>x}\tilde{\rho}) \right) \le K_0\theta^{2x}, \quad \forall x,
\end{equation}
for some $\mu,\theta,K_0>0$, where $\rho_{\mu,\le x}:=\mathrm{tr}_{>x}\rho_\mu$.   In the above identity, the Hilbert space has been truncated to sites $\{-x,\cdots,x\}$, which is denoted with the appropriate subscripts. Similar to (\ref{eq:Aansatz}), (\ref{eq:rho<theta}) requires the boson density in $\tilde{\rho}$ to be at most of order $\theta$, as one can verify for boson number eigenstates $|\mathbf{n}\rangle$.  As a simple example, if in the initial state we know that there are exactly $N_{\le x}$ bosons on sites $\le x$, then we can choose \begin{equation}
   \mu = \frac{2x+1}{N_{\le x}}, \;\;\; K_0= \theta = \frac{\mathrm{e}}{1-\mathrm{e}^{-\mu}}.
\end{equation}
In Proposition \ref{prop83} at the end of this section, we show (\ref{eq:Aansatz}) implies (\ref{eq:rho<theta}) with a change of parameters $K_0, \theta$. 

Outlining the steps we need to obtain a bound on the computability..., we will first show in Theorem \ref{thm64} that $\mathcal{O}(t)$ can be approximated with exponential accuracy by \footnote{For notational convenience, we assume the Hamiltonian is time-independent.} \begin{equation}
    \mathcal{O}(t)_{\le r} := \mathrm{e}^{\mathrm{i}H_{\le r}t} \mathcal{O}\mathrm{e}^{-\mathrm{i}H_{\le r}t}.
\end{equation} Here $H_{\le r}$ is a Hamiltonian which acts only on sites $\le r+l$ with $r\gtrsim vt$ to be determined: \begin{align}
    H_{\le r} = \sum_{i=-r}^{r-1} H_{J,i} + \sum_{i=-r-l}^r U_i,
\end{align} where $H_{J,i}$ is hopping between $i,i+1$, and $U_i$ acts on $i,\cdots,i+l$.  (Recall that $l$ is the range of interactions, and $l=0$ for the Bose-Hubbard model.) Then, we will show in Proposition \ref{prop8} that (\ref{eq:rho<theta}) implies that the accurate calculation of $\mathrm{tr}(\tilde \rho \mathcal{O}(t))$ can be done in a finite-dimensional Hilbert space, with an error vanishing exponentially at large $t$.

\begin{thm}\label{thm64}
Let $\mathcal{O}$ be an operator on site 0 consisting of a finite product of creation and annihilation operators. If there are some $\mu, \theta,K_0>0$ such that $\tilde{\rho}$ satisfies (\ref{eq:rho<theta}), then the error by restricting to $H_{\le r}$ is bounded by \begin{equation}
    |\mathrm{tr}\lr{\tilde{\rho} \mathcal{O}(t)}-\mathrm{tr}\lr{\tilde{\rho} \mathcal{O}(t)_{\le r}}| \le C_3 rt\left(\frac{(2\theta)^{4l+2}v^\prime t}{r}\right)^{r/(4l+2)}.
\end{equation}
for $r>(2\theta)^{4l+2}v^\prime t$. Here $v^\prime = (1+\epsilon) v_{\mu/2}$ where $v_\mu$ is given in (\ref{eq:6v}) and $\epsilon$ is arbitrarily small but finite.  The constants $0<C_3,\epsilon <\infty$ are independent of $r$.
\end{thm}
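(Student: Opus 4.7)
The plan is to express the error as a Duhamel-type time integral over the boundary terms of $H-H_{\le r}$, and then bound the integrand using the light cone of Theorem~\ref{thm71} applied at the reference state $\rho_\mu$ (for which (\ref{eq:Aansatz}) holds trivially with $K_0=\theta=1$) combined with the density hypothesis (\ref{eq:rho<theta}). Differentiating the interpolant $D(s):=\mathrm{e}^{\mathrm{i}H_{\le r}(t-s)}\mathcal{O}(s)\mathrm{e}^{-\mathrm{i}H_{\le r}(t-s)}$ (so that $D(0)=\mathcal{O}(t)_{\le r}$ and $D(t)=\mathcal{O}(t)$), taking the trace against $\tilde\rho$, and using cyclicity yields
\begin{equation*}
\mathrm{tr}(\tilde\rho[\mathcal{O}(t)-\mathcal{O}(t)_{\le r}]) = \mathrm{i}\int_0^t\mathrm{d}s\,\mathrm{tr}(\tilde\rho_s[H-H_{\le r},\mathcal{O}(s)]),
\end{equation*}
with $\tilde\rho_s:=\mathrm{e}^{-\mathrm{i}H_{\le r}(t-s)}\tilde\rho\,\mathrm{e}^{\mathrm{i}H_{\le r}(t-s)}$. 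A short check shows that (\ref{eq:rho<theta}) is preserved along this orbit for every cutoff $x\ge r+\ell$: $H_{\le r}$ is supported inside $[-r-\ell,r+\ell]\subset[-x,x]$ and commutes with the product state $\rho_{\mu,\le x}$ (each hopping or density-dependent term preserves total boson number and $\rho_\mu$ is a tensor product of geometric distributions), so it also commutes with $\rho_{\mu,\le x}^{-1/2}$; cyclicity then renders the Hilbert--Schmidt quantity in (\ref{eq:rho<theta}) invariant under $s\mapsto\tilde\rho_s$ with the same constants $K_0,\theta,\mu$.

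Writing $H-H_{\le r}=\sum_\alpha h_\alpha$ as a sum of boundary hopping terms $H_{J,j}$ ($j\ge r$ or $j\le-r-1$) and boundary interaction terms $U_k$ ($k\ge r+1$ or $k\le-r-\ell-1$), each located at a position $j_\alpha$ with $|j_\alpha|\ge r$, I would decompose $\mathcal{O}(s)$ into bounded-support pieces exactly as in the proof of Theorem~\ref{thm71}: $\mathcal{O}(s)=\mathcal{O}(s)_{\le|j_\alpha|}+\sum_{y>|j_\alpha|}\mathcal{O}_y(s)$ with $\mathcal{O}_y(s):=\mathbb{Q}_y\mathbb{P}_{\mathrm{supp}(h_\alpha)}\mathcal{O}(s)$ supported in $\{-y,\dots,y\}$. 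Then $[h_\alpha,\mathcal{O}_y(s)]$ is supported in $\{-(y+\ell),\dots,y+\ell\}$, and the Cauchy--Schwarz consequence of (\ref{eq:rho<theta}) for $\tilde\rho_s$ --- namely $|\mathrm{tr}(\tilde\rho_sX)|\le\sqrt{K_0}\,\theta^x\sqrt{(X|X)_{\rho_\mu}}$ for $X$ supported in $[-x,x]$ --- gives
\begin{equation*}
|\mathrm{tr}(\tilde\rho_s[h_\alpha,\mathcal{O}_y(s)])|\le\sqrt{K_0}\,\theta^{y+O(\ell)}\sqrt{([h_\alpha,\mathcal{O}_y(s)]|[h_\alpha,\mathcal{O}_y(s)])_{\rho_\mu}},
\end{equation*}
while the commutator inner product on the right is bounded using Proposition~\ref{prop33} with $h_\alpha$ playing the role of $\mathcal{O}^\prime$ (a polynomial of bounded degree in $b,b^\dagger$) together with the light-cone estimate underlying Theorem~\ref{thm71} at the reference state $\rho_\mu$, giving $\lesssim(v^\prime s/y)^{y/(2\ell+1)}$ with no extra $\theta$-penalty.

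The final step is resummation. Summing $y\ge|j_\alpha|$ is geometric in form and can be absorbed into an arbitrarily small velocity renormalization $v^\prime\to(1+\epsilon)v^\prime$ exactly as in (\ref{eq:sumx}). The square root introduced by Cauchy--Schwarz halves the light-cone exponent from $y/(2\ell+1)$ to $y/(4\ell+2)$, and absorbing the $\theta^y$ factor into the base yields the stated velocity $(2\theta)^{4\ell+2}v^\prime$. Summing over boundary terms $\alpha$ is again geometric in $|j_\alpha|\ge r$ and pays at most a polynomial-in-$r$ factor, while the $s$-integration supplies the factor of $t$, together producing the overall prefactor $C_3rt$. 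The main obstacle I anticipate is the bookkeeping of constants in the commutator estimate: re-running the internal bounds of Theorem~\ref{thm71}'s proof at reference $\rho_\mu$ while simultaneously tracking the $\mu$-dependent prefactors, the $O(\ell)$ support shifts produced by $h_\alpha$, and their combination with $\theta^{y+O(\ell)}$ from (\ref{eq:rho<theta}), so as to reproduce the clean velocity factor $(2\theta)^{4\ell+2}v^\prime$ rather than a parametrically larger one.
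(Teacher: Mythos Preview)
Your Duhamel set-up and the Cauchy--Schwarz reduction via (\ref{eq:rho<theta}) are both sound, and the way you preserve the density hypothesis under the $H_{\le r}$ flow is correct. But there is a genuine gap in the treatment of the boundary interaction terms. When you write $H-H_{\le r}=\sum_\alpha h_\alpha$ and include the $U_k$ among the $h_\alpha$, you then need to bound $([U_k,\mathcal{O}_y(s)]\,|\,[U_k,\mathcal{O}_y(s)])_{\rho_\mu}$ via Proposition~\ref{prop33}. That proposition requires $\mathcal{O}^\prime$ to be a single monomial $\prod_x (b_x^\dagger)^{\eta_x}b_x^{\zeta_x}$ of a fixed degree $\beta$, whereas the interactions $U_S$ are \emph{arbitrary} polynomials in the number operators with no uniform bound on degree or coefficients (Section~\ref{sec:models}). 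The resulting prefactors depend on each $U_k$ separately, so the sum over $k\ge r+1$ need not be controlled by the same geometric mechanism that handles the hopping terms; the clean velocity $(2\theta)^{4\ell+2}v^\prime$ would in general be contaminated by interaction-dependent factors that the theorem does not allow.

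The paper's fix is to interpolate differently. Decompose $H=J_{>r}+(H-J_{>r})$, peeling off only the far hopping terms. Since all interaction terms commute with one another and the remaining hoppings in $H-J_{>r}$ are supported inside $[-r,r]$, one has $\mathrm{e}^{-\mathrm{i}t(H-J_{>r})}=\mathrm{e}^{-\mathrm{i}tH_{\le r}}\mathrm{e}^{-\mathrm{i}tU_{>r}}$, and $U_{>r}$ commutes with $\mathcal{O}(t)_{\le r}$ (because $H_{\le r}$ conserves every $n_i$ with $|i|>r$). Hence evolution by $H-J_{>r}$ already reproduces $\mathcal{O}(t)_{\le r}$, and the Duhamel remainder contains only $[J_{>r},\mathcal{O}(t-s)]$. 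Each hopping term $H_{J,i}$ has fixed $\beta=2$, $\gamma=0$, and coefficient $|J_{i,i+1}|\le 1$, so (\ref{eq:OOxOOx}) applies uniformly in $i$ and the geometric resummation goes through exactly as you outlined for the hopping sector. Everything else in your plan (the per-$x$ decomposition of $\mathbb{P}_{\ge r}\mathcal{O}(t-s)$, the preservation of (\ref{eq:rho<theta}) under the restricted flow, the $\theta^x$-weighted Cauchy--Schwarz, and the square-root halving of the exponent) matches the paper's argument once the interaction terms are removed from the commutator in this way.
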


\begin{proof}
Take $H$ to be time-independent for notational simplicity; however, the result holds for $t$-dependent $H$ as well with straightforward modifications. Decompose $H=J_{>r} + (H-J_{>r})$, where $J_{>r}=\sum_{i\ge r} H_{J,i}+\sum_{i<-r} H_{J,i}$ contains all the hopping terms in $H$ that are not included in $H_{\le r}$. Since all interaction terms commute, the evolution by $H-J_{>r}$ is \begin{align}\label{eq:H-J}
    \mathrm{e}^{-\mathrm{i}t(H-J_{>r})} = \mathrm{e}^{-\mathrm{i}t H_{\le r}} \mathrm{e}^{-\mathrm{i}t U_{>r} },
\end{align} 
where $U_{>r}$ contains interaction terms in $H$ that are not included in $H_{\le r}$. As a result, $\mathcal{O}$ evolved by $H-J_{>r}$ is the same as by $H_{\le r}$, so the error is expressed using Duhamel identity \begin{equation}\label{eq:Deltar}
    \Delta_r:=\mathrm{tr}\lr{\tilde{\rho} \mathcal{O}(t)}-\mathrm{tr}\lr{\tilde{\rho} \mathcal{O}(t)_{\le r}} = \mathrm{i}\int^t_0 \mathrm{tr}\lr{\tilde{\rho}\mathrm{e}^{\mathrm{i}s(H- J_{>r})}[J_{>r}, \mathcal{O}(t-s)]\mathrm{e}^{-\mathrm{i}s(H- J_{>r})}} \mathrm{d}s.
\end{equation}
Similar to (\ref{eq:Ox}), we can replace $\mathcal{O}(t-s)$ by \begin{equation}
    \mathbb{P}_{\ge r}\mathcal{O}(t-s) = \sum_{x=r}^L \mathcal{O}_x.
\end{equation} Commutator $[J_{>r}, \mathcal{O}_x]$ has no support beyond $x+1$, so that all interaction terms that act nontrivially outside $x+l+1$ do not contribute to evolution in (\ref{eq:Deltar}). Denote $(H-J_{>r})_x$ by dropping such terms in $H-J_{>r}$. Now view the evolution by $H-J_{>r}$ in (\ref{eq:Deltar}) as acting on $\tilde{\rho}$ instead.  Defining $\tilde{\rho}_x(s)=\mathrm{e}^{-\mathrm{i}s(H-J_{>r})_x}  \tilde{\rho}\mathrm{e}^{\mathrm{i}s(H-J_{>r})_x}$, we find that 
\begin{equation}
    \Delta_r =\mathrm{i}\int^t_0 \sum_{x\ge r}\mathrm{tr}\lr{\tilde{\rho}_x(s)\left[J_{>r}, \mathcal{O}_x\right]} \mathrm{d}s = \mathrm{i}\int^t_0 \sum_{x\ge r} \lr{Y_{x}|\left[ J_{>r}, \mathcal{O}_x\right]}_{\mu,\le x+l+1} \mathrm{d}s,
\end{equation}
where \begin{equation}
    Y_x:=\rho_{\mu,\le x+l+1}^{-1/2} (\mathrm{tr}_{> x+l+1}\tilde{\rho}_x(s)) \rho_{\mu,\le x+l+1}^{-1/2},
\end{equation}
and the inner product $(\cdot|\cdot)_{\mu,\le x+l+1}$ is taken assuming that the Hilbert space has support only on sites within $x+l+1$. 
Since $\tilde{\rho}_x$ is evolved within $x+l+1$ only, we can first partial trace out the sites $>x+l+1$, and then evolve with time: $\mathrm{tr}_{> x+l+1}\tilde{\rho}_x(s) =(\mathrm{tr}_{> x+l+1}\tilde{\rho})(s)$. Thus property (\ref{eq:rho<theta}) persists under such evolution, because $\rho_\mu$ is stationary: \begin{align}
    (Y_x|Y_x)_{\mu,\le x+l+1} &=\mathrm{tr}_{\le x+l+1} \left( \rho_{\mu,\le x+l+1}^{-1/2} (\mathrm{tr}_{> x+l+1}\tilde{\rho}_x(s)) \rho_{\mu,\le x+l+1}^{-1/2} (\mathrm{tr}_{> x+l+1}\tilde{\rho}_x(s)) \right) \nonumber\\ &= \mathrm{tr}_{\le x+l+1} \left( \rho_{\mu,\le x+l+1}^{-1/2} (\mathrm{tr}_{> x+l+1}\tilde{\rho})(s)\rho_{\mu,\le x+l+1}^{-1/2} (\mathrm{tr}_{> x+l+1}\tilde{\rho})(s) \right) \nonumber\\ &= \mathrm{tr}_{\le x+l+1} \left( \rho_{\mu,\le x+l+1}^{-1/2} (\mathrm{tr}_{> x+l+1}\tilde{\rho}) \rho_{\mu,\le x+l+1}^{-1/2} (\mathrm{tr}_{>x+l+1}\tilde{\rho}) \right) \le K_0\theta^{2(x+l+1)}, \quad \forall x\ge r.
\end{align}
Furthermore, expand $J_{>r}$ by $H_{J,i}$ and use (\ref{eq:OOxOOx}) with $\mathcal{O}^\prime=H_{J,i}$ and $\beta=2,\gamma=0$, \begin{align}
    |\Delta_r| &\le 2t \sum_{x\ge r}\sum_{i=r}^x \sqrt{(Y_x|Y_x)_{\mu,\le x+l+1}}\lr{\sqrt{(H_{J,i}\mathcal{O}_x|H_{J,i}\mathcal{O}_x)_{\mu,\le x+l+1}} + \sqrt{(\mathcal{O}_xH_{J,i}|\mathcal{O}_xH_{J,i})_{\mu,\le x+l+1}}} \nonumber\\ &\le 4t \sum_{x\ge r} x \sqrt{2K_0\theta^{2(x+l+1)} C C_2 \left(\frac{2^{4l+2}vt}{x}\right)^{x/(2l+1)}} \le C_3 rt \left(\frac{(2\theta)^{4l+2}vt}{r}\right)^{r/(4l+2)},
\end{align}
where the factor of $2$ in the first line acounts for both directions, and $C_3$ is a constant independent of $r$ or $t$. We have assumed $r>(2\theta)^{4l+2}(1+\epsilon)v$ with any finite constant $\epsilon >0$ to get the last equation, using manipulations similar to (\ref{eq:sumx}).
\end{proof}

\begin{prop}
\label{prop8}
Under the conditions of Theorem \ref{thm64}, set \begin{align}
    r= \mathrm{e}(2\theta)^{4l+2} v' t,
\end{align} and let $\tilde \rho_{\le N_0}$ denote the restriction of $\mathrm{tr}_{> r+l}\tilde \rho$ to the Hilbert space of states with $\le N_0$ bosons on sites $|i|\le r+l$.  $\tilde \rho_{\le N_0}$ does not need to be normalized. Choose \begin{align}\label{eq:N0def}
    N_0 = (2r+2l+1)\max\left(4, \frac{1}{\mathrm{e}^{\mu/3}-1 },\frac{2}{\mu}\left(1+8\ln2+\ln\frac{\theta(1-\mathrm{e}^{-\mu})}{\mu^4}\right)\right),
\end{align} then the error of calculating the dynamics of $\mathcal{O}$ by restricting to this Hilbert space is bounded by \begin{equation}\label{eq:prop8}
    |\mathrm{tr}\lr{\tilde{\rho} \mathcal{O}(t)}-\mathrm{tr}\lr{\tilde{\rho}_{\le N_0} \mathcal{O}(t)_{\le r}}| \le C_4 r^2 \mathrm{e}^{-r/(4l+2) },
\end{equation}
where $0<C_4<\infty$ is independent of $r$.
\end{prop}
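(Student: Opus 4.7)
The plan is to combine Theorem~\ref{thm64} with a Cauchy--Schwarz estimate at a \emph{reduced} chemical potential to control the density-matrix truncation.  By the triangle inequality,
\begin{equation}
|\mathrm{tr}(\tilde\rho\,\mathcal{O}(t)) - \mathrm{tr}(\tilde\rho_{\le N_0}\,\mathcal{O}(t)_{\le r})| \le |\mathrm{tr}(\tilde\rho\,[\mathcal{O}(t) - \mathcal{O}(t)_{\le r}])| + |\mathrm{tr}(D\,\mathcal{O}(t)_{\le r})|,
\end{equation}
where $D := \mathrm{tr}_{>r+l}\tilde\rho - \tilde\rho_{\le N_0}$ and I used that $\mathcal{O}(t)_{\le r}$ is supported on sites $|i|\le r+l$.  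The first term is handled directly by Theorem~\ref{thm64}: the choice $r = \mathrm{e}(2\theta)^{4l+2}v't$ makes the innermost ratio equal to $1/\mathrm{e}$, so the Theorem gives $C_3\,rt\,\mathrm{e}^{-r/(4l+2)} \lesssim r^2\,\mathrm{e}^{-r/(4l+2)}$.  The substance of the proof is therefore to bound the second term by the same quantity.

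For that, write $\hat\rho := \mathrm{tr}_{>r+l}\tilde\rho$ and observe that $D = (1-P_{\le N_0})\hat\rho + P_{\le N_0}\hat\rho(1-P_{\le N_0})$, so $D$ is Hermitian and its matrix element $D_{\mathbf{n}_1;\mathbf{n}_2}$ in the number basis vanishes unless $\max(N_1,N_2) > N_0$.  Pick a smaller chemical potential $\mu' < \mu$ (concretely $\mu' = 2\mu/3$) and apply Cauchy--Schwarz in the inner product $(A|B)_{\mu'} := \mathrm{tr}(\sqrt{\rho_{\mu'}}\,A^\dagger \sqrt{\rho_{\mu'}}\,B)$ after conjugating $D$ by $\rho_{\mu'}^{-1/2}$:
\begin{equation}
|\mathrm{tr}(D\,\mathcal{O}(t)_{\le r})| \le \sqrt{\mathrm{tr}\bigl(D\rho_{\mu'}^{-1/2} D\rho_{\mu'}^{-1/2}\bigr)\cdot (\mathcal{O}(t)_{\le r}|\mathcal{O}(t)_{\le r})_{\mu'}}.
\end{equation}
Because $H_{\le r}$ preserves $N_{\le r+l}$ and hence commutes with $\rho_{\mu'}$, the second factor equals $(\mathcal{O}|\mathcal{O})_{\mu'}$, an $r$-independent constant that depends only on $\mu'$ and the fixed local operator $\mathcal{O}$.

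The state factor is where the truncation bites.  Written in the number basis,
\begin{equation}
\mathrm{tr}\bigl(D\rho_{\mu'}^{-1/2} D \rho_{\mu'}^{-1/2}\bigr) = Z_{\mu'}^{\le r+l} \sum_{\max(N_1,N_2)>N_0} |\hat\rho_{\mathbf{n}_1;\mathbf{n}_2}|^2\,\mathrm{e}^{\mu'(N_1+N_2)/2},
\end{equation}
and on this support $N_1+N_2\ge N_0+1$, so splitting $\mathrm{e}^{\mu'(N_1+N_2)/2}=\mathrm{e}^{\mu(N_1+N_2)/2}\,\mathrm{e}^{-(\mu-\mu')(N_1+N_2)/2}$ yields a uniform $\mathrm{e}^{-(\mu-\mu')N_0/2}$ suppression.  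Dropping the restriction on $(N_1,N_2)$ in what remains reproduces $\mathrm{tr}(\hat\rho\,\rho_\mu^{-1/2}\hat\rho\,\rho_\mu^{-1/2})/Z_\mu^{\le r+l} \le K_0\theta^{2(r+l)}/Z_\mu^{\le r+l}$ by the hypothesis~(\ref{eq:rho<theta}), giving
\begin{equation}
\mathrm{tr}\bigl(D\rho_{\mu'}^{-1/2} D \rho_{\mu'}^{-1/2}\bigr) \le K_0\,\theta^{2(r+l)}\,\mathrm{e}^{-(\mu-\mu')N_0/2}\,\left(\frac{1-\mathrm{e}^{-\mu}}{1-\mathrm{e}^{-\mu'}}\right)^{2r+2l+1}.
\end{equation}

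To conclude, substitute $\mu'=2\mu/3$ and take logarithms: the bound $\lesssim r^2\mathrm{e}^{-r/(4l+2)}$ follows provided $\mu N_0/12 \gtrsim r/(4l+2) + (r+l)\ln\theta + O(r)$, whose $r$-linear, $\ln\theta$-weighted scaling matches the third entry of the max in~(\ref{eq:N0def}); the $1/(\mathrm{e}^{\mu/3}-1)$ entry accounts for the $Z$-ratio penalty at small $\mu$, and the constant $4$ keeps $N_0$ sensible in regimes where the other entries degenerate.  The main obstacle is the careful bookkeeping of the exponential constants to sharpen the na\"ive coefficient $12/\mu$ that this Cauchy--Schwarz gives into the $2/\mu$ appearing in~(\ref{eq:N0def}); I expect this is done by treating the two Hermitian blocks $P_{>N_0}\hat\rho\,P_{>N_0}$ and $P_{\le N_0}\hat\rho\,P_{>N_0}+\mathrm{H.c.}$ of $D$ separately and optimizing $\mu'$ for each, together with a slightly sharper finite-density estimate in the spirit of (\ref{eq:rho<theta}).
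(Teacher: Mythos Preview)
Your approach is sound and takes a genuinely different route from the paper's.  Both proofs begin with the same triangle inequality and handle the first piece via Theorem~\ref{thm64}.  For the truncation term, however, the paper does \emph{not} use a two-chemical-potential Cauchy--Schwarz.  Instead it exploits that $H_{\le r}$ conserves total number and that $\mathcal{O}$ is a fixed product of ladder operators to bound the matrix elements directly, $|\langle\mathbf{n}|\mathcal{O}(t)_{\le r}|\mathbf{n}'\rangle| \le K_{\mathcal{O}} N^\beta$ with $N-N'=\gamma$ fixed.  It then Cauchy--Schwarzes the resulting sum $\sum_{\mathbf{n}\in\mathbf{n}_N,\mathbf{n}'\in\mathbf{n}_{N-\gamma}}|\tilde\rho_{\mathbf{n}\mathbf{n}'}|$ against the state-counting factor $|\mathbf{n}_N|=\binom{N+r'-1}{r'-1}$ and the hypothesis~(\ref{eq:rho<theta}), reducing everything to a one-variable tail $\sum_{N>N_0} q_N$ with $q_N \sim N^\beta \binom{N+r'}{r'}\, e^{-\mu N/2}(\theta(1-e^{-\mu}))^{r'/2}$; a ratio test and a Stirling-type estimate on the exponent then produce precisely the three branches of~(\ref{eq:N0def}).

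Your argument is arguably more elegant: it avoids both the explicit matrix-element bound on $\mathcal{O}(t)_{\le r}$ and the combinatorics of $|\mathbf{n}_N|$, relying only on unitarity of the restricted evolution in the $\mu'$-inner product.  The cost is the constant.  Your suppression is $e^{-(\mu-\mu')N_0/2}$, i.e.\ $e^{-\mu N_0/6}$ for $\mu'=2\mu/3$, whereas the paper extracts the full $e^{-\mu N_0/2}$ (offset by the entropic growth of $|\mathbf{n}_N|$).  After the square root this produces a coefficient of order $12/\mu$ in front of $\ln\theta$, not the $2/\mu$ in~(\ref{eq:N0def}); so as written your proof establishes the proposition only with an $O(1)$ enlargement of $N_0$.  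Your speculative fix at the end---splitting $D$ into blocks and optimizing $\mu'$---is not how the paper recovers the sharper constant; the $2/\mu$ emerges from the direct competition between $e^{-\mu N/2}$ and the state-counting entropy $(1+\theta')\ln(1+\theta')-\theta'\ln\theta'$, not from any refinement of the two-$\mu$ trick.
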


\begin{proof}
Using the triangle inequality, \begin{align}
    |\mathrm{tr}\lr{\tilde{\rho} \mathcal{O}(t)}-\mathrm{tr}\lr{\tilde{\rho}_{\le N_0} \mathcal{O}(t)_{\le r}}| \le |\mathrm{tr}\lr{\tilde{\rho} \mathcal{O}(t)}-\mathrm{tr}\lr{\tilde{\rho}\mathcal{O}(t)_{\le r}}| + |\mathrm{tr}\lr{\tilde{\rho} \mathcal{O}(t)_{\le r}}-\mathrm{tr}\lr{\tilde{\rho}_{\le N_0} \mathcal{O}(t)_{\le r}}|,
\end{align}
where the first term is bounded by the form of the right hand side of (\ref{eq:prop8}) according to Theorem \ref{thm64}. Thus we only need to bound the second term by the same form. Since it only involves dynamics within sites $|i|\le r+l$, for the rest of this proof we can denote with $\tilde{\rho}$ the initial state restricted to this segment of length $r':=2r+2l+1$. Its support at large boson numbers is bounded by (\ref{eq:rho<theta}): \begin{equation}
    \sum_{N,N'} \mathrm{e}^{\mu(N+N')/2} \sum_{\mathbf{n}\in \mathbf{n}_N, \mathbf{n}'\in \mathbf{n}_{N'}} |\tilde{\rho}_{\mathbf{n}\mathbf{n}'}|^2 \le \left(\theta(1-\mathrm{e}^{-\mu})\right)^{r^\prime},
\end{equation} 
where $\mathbf{n}_N$ is the set of all $\mathbf{n}$ with total boson number $N$. By counting the number of ways to arrange $N$ bosons on $r^\prime$ sites (a textbook statistical mechanics problem), we find \begin{equation}\label{eq:|n_N|}
    |\mathbf{n}_N|\le\frac{(N+r^\prime)!}{r^\prime!N!}\le \mathrm{exp}\left[r^\prime\ln\left(1+\frac{N}{r^\prime}\right) + N\ln\left(1+\frac{r^\prime}{N}\right) \right].
\end{equation}

By the assumptions in the proposition, the matrix element $\langle \mathbf{n}|\mathcal{O}(t)|\mathbf{n}'\rangle$ is nonzero only for $N-N'=\gamma\ge 0$, and is bounded by a power $K_\mathcal{O}N^\beta$, where $\beta,\gamma$ are given in (\ref{eq:beta_gamma}). Using this to argue why the $N_0$ chosen in (\ref{eq:N0def}) is useful, observe that the error of this truncation is: \begin{align} \label{eq:75}
    &|\mathrm{tr}\left[(\tilde{\rho}-\tilde{\rho}_{\le N_0}) \mathcal{O}(t)\right]| \le \sum_{N> N_0} \sum_{\mathbf{n}\in \mathbf{n}_N, \mathbf{n}'\in \mathbf{n}_{N-\gamma}} |\tilde{\rho}_{\mathbf{n}\mathbf{n}'} | K_\mathcal{O} N^\beta \le K_\mathcal{O}\sum_{N> N_0} N^\beta \sqrt{|\mathbf{n}_N||\mathbf{n}_{N-\gamma}| \sum_{\mathbf{n}\in \mathbf{n}_N, \mathbf{n}'\in \mathbf{n}_{N-\gamma}} |\tilde{\rho}_{\mathbf{n}\mathbf{n}'}|^2} \nonumber\\ &\le K_\mathcal{O}\sum_{N> N_0} N^\beta \frac{(N+r^\prime)!}{r^\prime!N!} \mathrm{e}^{-\mu(N-\gamma/2)/2} \left(\theta(1-\mathrm{e}^{-\mu})\right)^{r^\prime/2} := K_\mathcal{O}\mathrm{e}^{\mu\gamma/4} \sum_{N>N_0}q_N.
\end{align}
In the last step we defined the sequence $(q_N)_{N\ge N_0}$. Since for any $0<\epsilon^\prime<(1-\mathrm{e}^{-\mu/2})/2$, \begin{align}
    \frac{q_{N}}{q_{N-1}} = \left(\frac{N}{N-1}\right)^\beta \frac{N+r^\prime}{N} \mathrm{e}^{-\mu/2}\le 1-\epsilon^\prime, \quad \text{if} \quad \left\{\begin{array}{ccc}
        \left(\frac{N}{N-1}\right)^\beta\le \frac{1}{1-\epsilon^\prime} & \Leftrightarrow & N\ge \frac{1}{1-(1-\epsilon^\prime)^{1/\beta}}, \\
        \frac{N+r^\prime}{N} \mathrm{e}^{-\mu/2}\le 1-2\epsilon^\prime & \Leftrightarrow & N\ge \frac{r^\prime}{\mathrm{e}^{\mu/2}(1-2\epsilon^\prime) -1},
    \end{array} \right.
\end{align} the sequence is bounded by an exponential: $q_N\le q_{N_0} (1-\epsilon^\prime)^{N-N_0}$, if \begin{equation}
    \theta^\prime:=\frac{N_0}{r^\prime}\ge \frac{1}{\mathrm{e}^{\mu/2}(1-2\epsilon^\prime) -1}, \label{eq:82ineq1}
\end{equation} and \begin{equation}
    r^\prime > \frac{\mathrm{e}^{\mu/2}(1-2\epsilon^\prime) -1}{1-(1-\epsilon^\prime)^{1/\beta}}. \label{eq:82ineq2}
\end{equation}
When inequalities (\ref{eq:82ineq1}) and (\ref{eq:82ineq2}) hold, the sum over $q_N$ is bounded by $q_{N_0}/\epsilon^\prime$. Then bounding the binomial coefficient by (\ref{eq:|n_N|}),  \begin{equation}\label{eq:e^r'}
   |\mathrm{tr}\left[(\tilde{\rho}-\tilde{\rho}_{\le N_0}) \mathcal{O}(t)\right]|  \le K_\mathcal{O}\mathrm{e}^{\mu\gamma/4}(r^\prime\theta^\prime)^\beta \frac{1}{\epsilon^\prime} \mathrm{exp}\left\{\frac{r^\prime}{2}\left[\ln(\theta(1-\mathrm{e}^{-\mu})) + 2(1+\theta^\prime)\ln(1+\theta^\prime)-2\theta^\prime\ln\theta^\prime-\mu\theta^\prime\right]\right\}.
\end{equation}
The error can be made exponentially small in $r^\prime$ by choosing a  $\theta^\prime$ determined by $\theta,\mu$, but not $r$. To be concrete, we restrict to the case $\theta'\ge 4$ so that \begin{align}
    (1+\theta^\prime)\ln(1+\theta^\prime)-\theta^\prime\ln\theta^\prime < 2\ln\theta^\prime.
\end{align}
We then wish to satisfy
\begin{align}\label{eq:tilde_mu}
    \ln\xi+ 4\ln\theta^\prime -\mu\theta^\prime \le 0 \quad \Leftrightarrow \quad \ln (\xi^{1/4}\theta^\prime) \le \tilde{\mu}(\xi^{1/4}\theta^\prime), \quad \tilde{\mu}:=\frac{\mu}{4}\xi^{-1/4},
\end{align}
where we have set $\xi=\mathrm{e}\theta(1-\mathrm{e}^{-\mu})$ so that the exponent in (\ref{eq:e^r'}) is smaller than $-r^\prime/2$. If $\tilde{\mu}\ge 1/\mathrm{e}$, (\ref{eq:tilde_mu}) holds for any $\theta^\prime>0$; otherwise one can verify $\xi^{1/4}\theta'\ge \frac{2}{\tilde{\mu}}\ln\frac{1}{\tilde{\mu}}>2\mathrm{e}$ suffices. Thus considering (\ref{eq:82ineq1}) in addition, we choose \begin{align}
    \theta'=\max\left(4,\xi^{-1/4}\frac{2}{\tilde{\mu}}\ln\frac{1}{\tilde{\mu}}, \frac{1}{\mathrm{e}^{\mu/3}-1 }\right) = \max\left(4, \frac{1}{\mathrm{e}^{\mu/3}-1 },\frac{2}{\mu}\left(1+8\ln2+\ln\frac{\theta(1-\mathrm{e}^{-\mu})}{\mu^4}\right)\right),
\end{align}
where we have set $2\epsilon^\prime = 1-\mathrm{e}^{-\mu/6}$. Such $\theta^\prime$ makes the error exponentially small: \begin{align}
    |\mathrm{tr}\left[(\tilde{\rho}-\tilde{\rho}_{\le N_0}) \mathcal{O}(t)\right]|  \le K_\mathcal{O}\mathrm{e}^{\mu\gamma/4}(r^\prime\theta^\prime)^\beta \frac{2}{1-\mathrm{e}^{-\mu/6}} \mathrm{e}^{-r^\prime/2},
\end{align}
which can be massaged to the form of the right hand side of (\ref{eq:prop8}). This completes the proof.
\end{proof}

This Proposition rigorously proves that it is not asymptotically  harder to simulate the 1d Bose-Hubbard model at finite-density than it is to simulate any 1d model of interacting spins or fermions.  To simulate expectation value of a local observable for time $t$, with asymptotically vanishing error one could truncate the Hilbert space according to Proposition \ref{prop8}.
Since this truncated Hilbert space has dimension $D$ obeying \begin{equation}
    \log D \lesssim \theta^\prime r^\prime \propto t,
\end{equation}, we find that the dynamics can be simulated with $\mathrm{exp}(\mathrm{O}(t))$ classical resources. Consider separating the whole time region time steps of size $t_0$. At each step, a naive discretization for evolving the density matrix $\tilde{\rho}$ induces an error \begin{align}
    \lVert \delta\tilde{\rho}\rVert_1 =\mathrm{O}(\lVert (Ht_0)^2 \tilde{\rho}  \rVert_1) =\mathrm{O}(\norm{H}^2t_0^2).
\end{align}
The total error for all steps is multiplied by an extra factor $t/t_0$.  If we desire the error in $\langle \mathcal{O}(t) \rangle$ to be at most $\epsilon$, then we need \begin{equation}
   \norm{\mathcal{O}} \lVert \delta\tilde{\rho}\rVert_1 \times \frac{t}{t_0} = \norm{\mathcal{O}}\frac{t}{t_0}\mathrm{O}(\norm{H}^2t_0^2) \le \epsilon,
\end{equation}
which implies that \begin{align}
\frac{t}{t_0} = \mathrm{O}(\mathrm{poly}(t)\frac{1}{\epsilon}),
\end{align}
where we have used $\norm{\mathcal{O}},\norm{H}= \mathrm{O}(\mathrm{poly}(t))$ thanks to the truncation. Since each step needs $\mathrm{poly}(D)$ resources, the total computational resources required are $\mathrm{exp}(\mathrm{O}(t))/\epsilon$.

Lastly, as advertised, let us show that (\ref{eq:Aansatz}) implies (\ref{eq:rho<theta}) Note that we do need version (\ref{eq:Aansatz}) to bound the ``2-norm'' of a growing operator in the previous section, while here we needed its ``$\infty$-norm'' in the ``finite density subspace".  More precisely,
\begin{prop}\label{prop83}
If the state $\tilde{\rho}$ satisfies (\ref{eq:Aansatz}), then \begin{equation}\label{eq:prop81}
    \mathrm{tr}_{\le x} \left( \rho_{\mu,\le x}^{-1/2} (\mathrm{tr}_{> x}\tilde{\rho}) \rho_{\mu,\le x}^{-1/2} (\mathrm{tr}_{>x}\tilde{\rho}) \right) \le \left(\frac{K_0\sqrt{1-\mathrm{e}^{-\mu}}}{1-\mathrm{e}^{-\mu/2}}\right)^2\left(\frac{\theta\sqrt{1-\mathrm{e}^{-\mu}}}{1-\mathrm{e}^{-\mu/2}}\right)^{4x}, \quad \forall x.
\end{equation}
\end{prop}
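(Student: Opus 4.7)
My plan is to reduce (\ref{eq:prop81}) to a direct application of (\ref{eq:Aansatz}) via two successive Cauchy--Schwarz inequalities. Set $\sigma := \mathrm{tr}_{>x}\tilde\rho$, an operator supported on sites $\{-x,\ldots,x\}$, so the left-hand side of (\ref{eq:prop81}) is $\mathrm{tr}_{\le x}(\rho_{\mu,\le x}^{-1/2}\sigma\rho_{\mu,\le x}^{-1/2}\sigma)$. Since $\rho_{\mu,\le x}$ is diagonal in the boson number basis, a one-line Cauchy--Schwarz in that basis yields the variational identity
\begin{equation}
\mathrm{tr}_{\le x}(\rho_{\mu,\le x}^{-1/2}\sigma\rho_{\mu,\le x}^{-1/2}\sigma) = \sup_{A_{\le x}\neq 0}\frac{|\mathrm{tr}_{\le x}(A_{\le x}^\dagger\sigma)|^2}{(A_{\le x}|A_{\le x})_{\mu,\le x}},
\end{equation}
where $(A|B)_{\mu,\le x} := \mathrm{tr}_{\le x}(\sqrt{\rho_{\mu,\le x}}\,A^\dagger\sqrt{\rho_{\mu,\le x}}\,B)$. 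The identity is saturated at $A_{\le x}\propto \rho_{\mu,\le x}^{-1/2}\sigma\rho_{\mu,\le x}^{-1/2}$, but only the ``$\le$'' direction is needed, and that is elementary.

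The task reduces to estimating the numerator uniformly in $A_{\le x}$. Extending $A := A_{\le x}\otimes I_{>x}$ to the full lattice, the numerator rewrites as $|\mathrm{tr}(A^\dagger\tilde\rho)|^2 = |(A|I)_{\tilde\rho}|^2$, where $(A|B)_{\tilde\rho} := \mathrm{tr}(\sqrt{\tilde\rho}\,A^\dagger\sqrt{\tilde\rho}\,B)$. A second Cauchy--Schwarz in that inner product, together with $(I|I)_{\tilde\rho} = \mathrm{tr}(\tilde\rho)=1$, gives $|(A|I)_{\tilde\rho}|^2 \le (A|A)_{\tilde\rho}$. The hypothesis (\ref{eq:Aansatz}) then bounds this by $K_0\theta^{2x}(A|A)_\mu$; because $\rho_\mu$ factorizes across $\{|i|\le x\}$ and its complement and $A$ acts trivially on the complement, the complement trace collapses to $\mathrm{tr}(\rho_{\mu,>x})=1$, so $(A|A)_\mu = (A_{\le x}|A_{\le x})_{\mu,\le x}$---exactly the denominator of the variational identity. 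Taking the supremum, all $A_{\le x}$-dependence cancels and one obtains
\begin{equation}
\mathrm{tr}_{\le x}\!\left(\rho_{\mu,\le x}^{-1/2}\sigma\rho_{\mu,\le x}^{-1/2}\sigma\right) \le K_0\theta^{2x},
\end{equation}
which is in fact slightly \emph{stronger} than (\ref{eq:prop81}) and implies it a fortiori. The additional $\mu$-dependent prefactors in the stated bound appear to reflect a less direct route---for instance, first bounding $|\sigma_{\mathbf{m}\mathbf{n}}|^2 \le \sigma_{\mathbf{m}\mathbf{m}}\sigma_{\mathbf{n}\mathbf{n}}$ by positivity of $\tilde\rho$ and then estimating the diagonals via the Aansatz applied to projectors $|\mathbf{m}\rangle\langle\mathbf{m}|\otimes I_{>x}$, which generically introduces Gibbs-normalization ratios between parameters $\mu$ and $\mu/2$ per site.

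The only technical subtlety is that in infinite dimensions the variational identity is an equality only when the supremum runs over $A_{\le x}$ of finite KMS norm, with the extremizer interpreted as an approximating sequence if $\rho_{\mu,\le x}^{-1/2}\sigma\rho_{\mu,\le x}^{-1/2}$ is not Hilbert--Schmidt; however, we invoke only the upper-bound direction, which is an immediate consequence of Cauchy--Schwarz and requires no regularity. All remaining manipulations are routine applications of the tensor-product structure of $\rho_\mu$, so I anticipate no substantive obstacle beyond this bookkeeping.
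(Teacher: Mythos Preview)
Your argument is correct and in fact yields a sharper conclusion than the paper's, namely
\[
\mathrm{tr}_{\le x}\!\left(\rho_{\mu,\le x}^{-1/2}\sigma\,\rho_{\mu,\le x}^{-1/2}\sigma\right)\le K_0\theta^{2x},
\]
which is precisely (\ref{eq:rho<theta}) with \emph{unchanged} constants. The paper takes a different and less direct route: it first uses positivity of $\sigma=\mathrm{tr}_{>x}\tilde\rho$ to bound the left side by $(\mathrm{tr}\,\rho_{\mu,\le x}^{-1/2}\sigma)^2$, then constructs an explicit family $\{A_p\}$ of diagonal operators satisfying $\sum_p A_p^\dagger\mathcal{O}A_p=\tfrac{1}{2}\{\mathcal{O},\rho_{\mu,\le x}^{-1/2}\otimes I_{>x}\}$, applies (\ref{eq:Aansatz}) term by term, and finally evaluates $\sum_p(A_p|A_p)_\mu=\mathrm{tr}\sqrt{\rho_{\mu,\le x}}=\bigl(\sqrt{1-\mathrm{e}^{-\mu}}/(1-\mathrm{e}^{-\mu/2})\bigr)^{2x+1}$. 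That last trace is the origin of the extra $\mu$-dependent factors in (\ref{eq:prop81}); your variational/Cauchy--Schwarz approach bypasses this construction entirely and never incurs the loss.

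One small expository point: your remark that ``only the $\le$ direction is needed, and that is elementary Cauchy--Schwarz'' conflates the two directions. Cauchy--Schwarz gives $\sup\le\text{LHS}$, whereas what you actually need is $\text{LHS}\le\sup$, which comes from plugging in the extremizer $A_*=\rho_{\mu,\le x}^{-1/2}\sigma\rho_{\mu,\le x}^{-1/2}$. The cleanest way to phrase the whole thing avoids the variational identity altogether: with this $A_*$ one has $\text{LHS}=\mathrm{tr}_{\le x}(A_*^\dagger\sigma)=\mathrm{tr}(A_*^\dagger\tilde\rho)=(A_*|I)_{\tilde\rho}$, so by Cauchy--Schwarz and (\ref{eq:Aansatz}),
\[
\text{LHS}^2\le (A_*|A_*)_{\tilde\rho}\le K_0\theta^{2x}(A_*|A_*)_\mu=K_0\theta^{2x}\cdot\text{LHS},
\]
and dividing by $\text{LHS}$ (or truncating to finite boson number and passing to the limit if one worries about finiteness) gives the bound directly.
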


\begin{proof}
For a given $x$, denote the matrix elements of $\rho_{\mu,\le x}^{-1/2}$ by $\eta_{\mathbf{n}} := \langle\mathbf{n}| \rho_{\mu,\le x}^{-1/2}|\mathbf{n}\rangle$, where the index $\mathbf{n}$ runs over the boson number eigenstate basis on sites $\le x$. We first bound the left hand side of (\ref{eq:rho<theta}): \begin{align}\label{eq:rho<rho2}
    \mathrm{tr} \left( \rho_{\mu,\le x}^{-1/2} (\mathrm{tr}_{> x}\tilde{\rho}) \rho_{\mu,\le x}^{-1/2} (\mathrm{tr}_{>x}\tilde{\rho}) \right) = \sum_{\mathbf{n}\mathbf{n}^\prime} \eta_\mathbf{n}\eta_{\mathbf{n}^\prime} |\langle\mathbf{n}| \tilde{\rho}_{\le x}|\mathbf{n}^\prime\rangle|^2 \le \sum_{\mathbf{n}\mathbf{n}^\prime} \eta_\mathbf{n}\eta_{\mathbf{n}^\prime} \langle\mathbf{n}| \tilde{\rho}_{\le x}|\mathbf{n}\rangle \langle\mathbf{n}^\prime| \tilde{\rho}_{\le x}|\mathbf{n}^\prime\rangle = \left(\mathrm{tr} \rho_{\mu,\le x}^{-1/2} \tilde{\rho}_{\le x} \right)^2.
\end{align} Here and for the rest of this proof, we drop the subscript $\le x$ on the trace for simplicity.
Suppose there is a set of operators $\{A_p:p=0,1,\cdots\}$ which are supported on sites $\le x$, such that \begin{align}\label{eq:Ap}
    \sum_p A_p^\dagger \mathcal{O} A_p = \frac{1}{2} \{\mathcal{O}, \rho_{\mu,\le x}^{-1/2}\otimes I_{>x}\}
\end{align}
for any operator $\mathcal{O}$. Then choosing $\mathcal{O} = \sqrt{\tilde{\rho}}$, the root of the right hand side of (\ref{eq:rho<rho2}) is \begin{align}\label{eq:sump}
    \mathrm{tr} \rho_{\mu,\le x}^{-1/2} \tilde{\rho}_{\le x} = \frac{1}{2} \mathrm{tr}\left( \sqrt{\tilde{\rho}}\{\sqrt{\tilde{\rho}}, \rho_{\mu,\le x}^{-1/2}\otimes I_{>x}\}\right) = \sum_p \mathrm{tr}\left( \sqrt{\tilde{\rho}}A_p^\dagger \sqrt{\tilde{\rho}}A_p\right) \le K_0\theta^{2x} \sum_p \mathrm{tr}\left( \sqrt{\rho_\mu}A_p^\dagger \sqrt{\rho_\mu}A_p\right),
\end{align}
where (\ref{eq:Aansatz}) is used in the last step. Now we construct $\{A_p\}$ to evaluate the right hand side of (\ref{eq:sump}). To satisfy (\ref{eq:Ap}), we decompose $\mathcal{O}$ as $\mathcal{O} = \sum_{\mathbf{n}\mathbf{n}^\prime} \mathcal{O}_{\mathbf{n}\mathbf{n}^\prime}|\mathbf{n}\rangle \langle\mathbf{n}^\prime| \otimes \mathcal{O}_{\mathbf{n}\mathbf{n}^\prime}^\prime$, where each $\mathcal{O}_{\mathbf{n}\mathbf{n}^\prime}^\prime$ acts outside $x$. Since (\ref{eq:Ap}) is linear in $\mathcal{O}$ and the $> x$ parts of the operators on both sides agree trivially, it suffices to restrict to the $\le x$ sites and only consider the operator basis $\mathcal{O}=|\mathbf{n}\rangle \langle\mathbf{n}^\prime|$. Using ansatz $A_p^\dagger=A_p = \sum_\mathbf{n} A_{p\mathbf{n}} |\mathbf{n}\rangle \langle\mathbf{n}|$, (\ref{eq:Ap}) yields \begin{align}\label{eq:Apn}
    \sum_p A_{p\mathbf{n}}A_{p\mathbf{n}^\prime} = (\eta_\mathbf{n}+\eta_{\mathbf{n}^\prime})/2.
\end{align}
The left hand side can be viewed as the inner product between two vectors $A_{\cdot,\mathbf{n}}$ and $A_{\cdot,\mathbf{n}^\prime}$, so that all the vectors $A_{\cdot,\mathbf{n}}$ can be constructed inductively. For example, start from $A_{p\mathbf{0}}=\eta_\mathbf{0}\delta_{p0}$. To find a second vector $A_{\cdot,\mathbf{n}}$ with arbitrary $\mathbf{n}\neq \mathbf{0}$, we set $A_{p\mathbf{n}}=0$ for all $p>1$. The only nonzero elements $A_{0\mathbf{n}},A_{1\mathbf{n}}$ are then determined by the inner product with $A_{\cdot,\mathbf{0}}$ and with itself using (\ref{eq:Apn}). However, the specific form of $A_{p\mathbf{n}}$ is not important for the proof, and we only need its existence. (\ref{eq:Apn}) then implies \begin{align}
    \sum_p \mathrm{tr}\left( \sqrt{\rho_\mu}A_p^\dagger \sqrt{\rho_\mu}A_p\right) = \sum_p \sum_{\mathbf{n}} A_{p\mathbf{n}}A_{p\mathbf{n}} \eta_\mathbf{n}^{-2}= \sum_{\mathbf{n}} \eta_\mathbf{n}^{-1} = \mathrm{tr}_{\le x} \sqrt{\rho_{\mu,\le x}} = \left(\frac{\sqrt{1-\mathrm{e}^{-\mu}}}{1-\mathrm{e}^{-\mu/2}}\right)^{2x+1}.
\end{align}
Finally, (\ref{eq:prop81}) follows by combining the above equation with (\ref{eq:rho<rho2}) and (\ref{eq:sump}).
\end{proof}

\section{Clustering of correlations in the ground state}\label{sec:gs}

As another application of Theorem \ref{thm71}, we prove exponential clustering for gapped ground states in one dimension in any model of interacting bosons described by Hamiltonians with density-dependent interactions. In particular, we assume there is a nondegenerate ground state $|E_0\rangle$. Let $\mathcal{O},\mathcal{O}^\prime$ be two operators which are supported on two sets of sites, whose supports are separated by distance $r$. They can be unbounded operators such as $b$s or $b^\dagger$s; we only require their actions on the ground state do not lead to states with unbounded norm (this property will be satisfied by products of $b$ or $b^\dagger$ if $|E_0\rangle$ has bounded boson number on each site): $\norm{\mathcal{O}|E_0\rangle}_2\norm{\mathcal{O}^\prime|E_0\rangle}_2<\infty$.  Define their ground state correlation as \begin{align}
    \mathrm{Cor}(\mathcal{O},\mathcal{O}^\prime) := \langle E_0|\mathcal{O}\mathcal{O}^\prime|E_0\rangle - \langle E_0|\mathcal{O}|E_0\rangle\langle E_0|\mathcal{O}^\prime|E_0\rangle.
\end{align}
If the ground state density matrix satisfies condition (\ref{eq:Aansatz}), the following theorem proves that this correlation decays exponentially with the operators' separation $r$, whenever there is a finite gap to the first excited state. 

\begin{thm}\label{thm9}
Let $H$ be a time-independent Hamiltonian.  Assume there is a nondegenerate ground state $\tilde{\rho}=|E_0\rangle\langle E_0|$ satisfying (\ref{eq:Aansatz}).   Let $\Delta E$ be the spectral gap of $H$.  Then whenever $\norm{\mathcal{O}|E_0\rangle}\norm{\mathcal{O}^\prime|E_0\rangle} < \infty$ and the support of $\mathcal{O}$ and $\mathcal{O}^\prime$ is separated by $r$, \begin{align}\label{eq:9thm}
    |\mathrm{Cor}(\mathcal{O},\mathcal{O}^\prime)| \le C_5 \mathrm{exp}\left(-\frac{\Delta E}{2v}r\right),
\end{align}
where $0<C_5<\infty$ is independent of $r$, and $v=(2\theta)^{8l+4}v^\prime$ is given in Theorem \ref{thm71}.
\end{thm}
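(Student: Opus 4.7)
The plan is to run the Hastings--Koma exponential clustering argument, with Theorem \ref{thm71} supplying the bosonic Lieb-Robinson input. The crucial simplification for the ground state is that $\tilde{\rho}=|E_0\rangle\langle E_0|$ is a pure state, so $\sqrt{\tilde{\rho}}=\tilde{\rho}$ and the Frobenius-weighted inner product collapses to a single matrix element:
\begin{equation}
    ([\mathcal{O}(t),\mathcal{O}^\prime]|[\mathcal{O}(t),\mathcal{O}^\prime])_{\tilde{\rho}} = \abs{\alr{E_0|[\mathcal{O}(t),\mathcal{O}^\prime]|E_0}}^2.
\end{equation}
Theorem \ref{thm71} therefore directly controls this matrix element, giving $\abs{\alr{E_0|[\mathcal{O}(t),\mathcal{O}^\prime]|E_0}}\le\sqrt{C_1}(vt/r)^{r/(4l+2)}$ for $vt<r$, with $v$ the velocity of (\ref{eq:7thm}).

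First I would rewrite the connected correlator as a time integral of this commutator against a filter function $f(t)$. Inserting a resolution of the identity in $H$-eigenstates $|n\rangle$ into $\alr{E_0|[\mathcal{O}(t),\mathcal{O}^\prime]|E_0}$ and using the gap $E_n-E_0\ge\Delta E$ for $n\ne 0$, any real $f$ whose Fourier transform is antisymmetric and approximates $\pm\ii/2$ for $\omega\gtrless\Delta E$ satisfies
\begin{equation}
    \mathrm{Cor}(\mathcal{O},\mathcal{O}^\prime) = \int_{-\infty}^\infty f(t)\,\alr{E_0|[\mathcal{O}(t),\mathcal{O}^\prime]|E_0}\,\dd t + \mathcal{E}_f,
\end{equation}
with approximation error $\abs{\mathcal{E}_f}\lesssim \sup_{|\omega|\ge\Delta E}\abs{\hat{f}(\omega)\mp\ii/2}\cdot \norm{\mathcal{O}^\dagger|E_0\rangle}\norm{\mathcal{O}^\prime|E_0\rangle}$, finite by the hypothesis on the operators.

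Next I would split the integral at $|t|=T\sim r/(2v)$ and bound each piece separately. On $|t|<T$ the super-exponential Lieb-Robinson estimate combined with $\norm{f}_{L^1}<\infty$ makes this contribution super-exponentially small in $r$. On $|t|>T$, I use the trivial bound $\abs{\alr{E_0|[\mathcal{O}(t),\mathcal{O}^\prime]|E_0}}\le 2\norm{\mathcal{O}|E_0\rangle}\norm{\mathcal{O}^\prime|E_0\rangle}$, valid since $\norm{\mathcal{O}(t)|E_0\rangle}=\norm{\mathcal{O}|E_0\rangle}$, multiplied by the tail $\int_{|t|>T}\abs{f(t)}\dd t\lesssim\ee^{-\alpha T}$ of an exponentially decaying filter. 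Balancing the two contributions against $\mathcal{E}_f$ produces the exponential decay (\ref{eq:9thm}) with rate set by $\Delta E$ and $v$.

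The main obstacle is the filter construction: by Paley--Wiener, $\hat f$ cannot simultaneously be a sharp step outside $|\omega|<\Delta E$ and have $f$ decay exponentially in $t$. The standard Hastings--Koma remedy is a smoothed step (built e.g.\ from $\tanh(\omega/\beta)$ or a convolution of narrow Gaussians) whose mismatch with the exact step is $\sim\ee^{-c\Delta E/\beta}$ and whose inverse transform decays as $\ee^{-\alpha|t|}$ with $\alpha\propto\beta$; letting $\beta$ shrink slowly with $r$ forces $\mathcal{E}_f$ to beat the target exponential and absorbs all filter-dependent constants into $C_5$.
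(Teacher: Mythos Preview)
Your proposal is correct and follows the same Hastings--Koma strategy as the paper: reduce the connected correlator to a time integral of $\langle E_0|[\mathcal{O}(t),\mathcal{O}']|E_0\rangle$ against a filter, feed in the light-cone bound from Theorem~\ref{thm71} (exploiting, exactly as you note, that $\sqrt{\tilde\rho}=\tilde\rho$ for a pure state), and balance the filter error against the short-time commutator bound.

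The only substantive difference is in bookkeeping. The paper does not split the real line at $T$ and separately invoke filter-tail decay; instead it takes the Gaussian-damped principal-value kernel $K(t)=\frac{\ii}{2\pi}\lim_{\epsilon\to0^+}\frac{\ee^{-\Delta E\,t^2/(2T)}}{t+\ii\epsilon}$ integrated only over $[-T,T]$, so the entire ``large-$t$'' and ``filter-mismatch'' contributions collapse into a single operator error $\langle E_0|\mathcal{O}Q_T\mathcal{O}'|E_0\rangle$ with $\lVert Q_T\rVert\lesssim \ee^{-T\Delta E/2}$. This leaves only two terms to balance rather than your three, and a single choice $T=(r/v)\exp\!\bigl(-(4l+2)\Delta E/(2v)\bigr)$ yields the stated rate $\Delta E/(2v)$ directly. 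Your exponential-tail filter with a separate Paley--Wiener parameter $\beta$ would also work, but the extra optimization over $\beta$ is unnecessary here, and you would need to be a bit careful that ``letting $\beta$ shrink slowly with $r$'' actually reproduces the constant $\Delta E/(2v)$ rather than a smaller rate. One further small point: your trivial bound on the commutator at large $t$ really uses $\lVert\mathcal{O}^\dagger|E_0\rangle\rVert$ as well as $\lVert\mathcal{O}|E_0\rangle\rVert$, which is not literally in the hypothesis; the paper's $Q_T$ organization sidesteps this by keeping all the energy-eigenstate sums manifestly controlled by $\lVert\mathcal{O}|E_0\rangle\rVert\lVert\mathcal{O}'|E_0\rangle\rVert$.
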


\begin{proof}
    The proof follows earlier work such as \cite{Hastings:2005pr,kuwahara2021liebrobinson}.  Without loss of generality, set $E_0=0$.  Consider the identity (e.g. (S.29) in \cite{kuwahara2021liebrobinson}) \begin{align}\label{eq:cor=}
        \mathrm{Cor}(\mathcal{O},\mathcal{O}^\prime) = \int\limits_{-T}^{T} \mathrm{d}t K(t)\left\langle E_{0}\left|\left[\mathcal{O}(t), \mathcal{O}^\prime\right]\right| E_{0}\right\rangle - \left(\langle E_{0}|\mathcal{O} Q_{T} \mathcal{O}^\prime| E_{0}\rangle+\text {c.c.}\right),
    \end{align}
    where the parameter $T$ is to be determined, and \begin{align}
        K(t):=\frac{\mathrm{i}}{2 \pi} \lim _{\epsilon \rightarrow0^+} \frac{\mathrm{e}^{-\frac{\Delta E t^{2}}{2T}}}{t+\mathrm{i} \epsilon}, \quad Q_T:= \sum_{s\ge 1} \tilde{K}(E_s)|E_s\rangle\langle E_s|. \end{align}
    Here $c$ is an O(1) constant, $\tilde{K}(E)$ is the Fourier transform of $K(t)$, and $\{|E_s\rangle: s\ge 0\}$ denote all the eigenstates. Note that
    \begin{align}
        \lVert Q_T\rVert \le \max_{s\ge 1} |\tilde{K}(E_s)| \le \frac{c}{2} \mathrm{e}^{-T\Delta E/2}.
    \end{align}
    It follows that \begin{equation}
        \langle E_{0}|\mathcal{O} Q_{T} \mathcal{O}^\prime| E_{0}\rangle+\text {c.c.} \le \norm{\mathcal{O}|E_0\rangle}\norm{\mathcal{O}^\prime|E_0\rangle} \mathrm{e}^{-T\Delta E/2}.
    \end{equation}
    We now use Theorem \ref{thm71} to bound the first term. \begin{align}\label{eq:OO<}
        |\mathrm{tr}(\tilde{\rho} [\mathcal{O}(t), \mathcal{O}^\prime])| \le \left[ \mathrm{tr}\left(\sqrt{\tilde{\rho}}[\mathcal{O}(t),\mathcal{O}^\prime]^\dagger \sqrt{\tilde{\rho}} [\mathcal{O}(t),\mathcal{O}^\prime] \right)\right]^{1/2} \le \sqrt{C_1} \left(\frac{v t}{r}\right)^{r/(4l+2)},
    \end{align}
    where $v=(2\theta)^{8l+4}v^\prime$ as in (\ref{eq:7thm}). Then \begin{align}
        \left|\int_{-T}^{T} K(t)\left\langle E_{0}\left|\left[\mathcal{O}(t), \mathcal{O}^\prime\right]\right| E_{0}\right\rangle d t\right| \le \sqrt{C_1}\left(\frac{v T}{r}\right)^{r/(4l+2)-1} \frac{v}{r} \int_{-T}^{T} |tK(t)|d t \le \sqrt{\frac{C_1}{\pi \Delta E}}\left(\frac{v T}{r}\right)^{r/(4l+2)}.
    \end{align}
    Combining both terms in (\ref{eq:cor=}) and choosing \begin{align}
        T=\frac{r}{v}\mathrm{exp}\left(-(4l+2)\frac{\Delta E}{2v}\right)<\frac{r}{v},
    \end{align}
    we obtain \begin{align}
        |\mathrm{Cor}(\mathcal{O},\mathcal{O}^\prime)| \le \left(\sqrt{\frac{C_1}{\pi \Delta E}} + c \norm{\mathcal{O}|E_0\rangle}\norm{\mathcal{O}^\prime|E_0\rangle} \right)\mathrm{exp}\left(-\frac{\Delta E}{2v}r\right),
    \end{align}
    which reduces to (\ref{eq:9thm}).
\end{proof}

(\ref{eq:9thm}) improves a previous recent result in  \cite{kuwahara2021liebrobinson}, where both the form of operators $\mathcal{O},\mathcal{O}^\prime$ are more restricted, and the bound on the correlation decays subexponentially as $\mathrm{exp}(-c\sqrt{r/\ln r})$.  Our improvement arises due to the tight tails in our linear light cone.

Generalizing this result to models with degenerate ground states appears straightforward (see e.g. \cite{Hastings:2005pr}).  A standard application of clustering theorems for finite-dimensional quantum systems has been the proof of an entanglement area law.   In the bosonic case, this appears to be more subtle because all existing bounds have explicit constants which depend on the Hilbert space dimension \cite{brandao2013area,cho18area}.  Thus, it is not entirely straightforward to use Theorem \ref{thm9} to prove an entanglement area law for bosonic models.  Nevertheless we anticipate that further generalizing the results of Section \ref{sec:complexity}, it may be possible to project $|E_0\rangle\langle E_0|$ into a subspace with bounded boson number on each site, at which case Theorem \ref{thm9} would also lead to an entanglement area law.



\section{Outlook}
 Inspired by earlier work \cite{Nachtergaele_2008,kuwahara2021liebrobinson}, we have proven that correlators and out-of-time-ordered correlators, measured in the infinite temperature grand canonical ensemble defined in (\ref{eq:rho}), vanish outside of a ``linear light cone" in a broad family of interacting boson models:   $\langle [A_0(t),B_r]\rangle \rightarrow 0$ if $vt<r$, with asymptotics encapsulated in (\ref{eq:main}).   As we highlighted in the introduction, our bound on $v$  is qualitatively optimal for the Bose-Hubbard model, for commutators involving single boson creation/annihilation operators in any dimension, and for all commutators in one dimension.  In one dimension, we generalized this result to prove that all matrix elements of a commutator $\langle \psi_1|[A_0(t),B_r] |\psi_2\rangle$ are vanishingly small outside of a light cone $v^\prime t < r$, with a slightly larger velocity $v^\prime$, whenever the states $|\psi_{1,2}\rangle$ have a bounded number of bosons on each site.  This latter result could then be used to demonstrate the computational complexity of simulating Bose gases, along with the exponential decay of correlations in gapped ground states, in a broad range of experimentally relevant states of one-dimensional Bose gases.

We hope that our result will be generalized in important directions.  Firstly, (though of less general interest), we anticipate likely order of magnitude improvements in the O(1) coefficients in our bound (\ref{eq:thm2}).   Secondly and more importantly, we were not able to prove that all local correlators are bounded by the \emph{same velocity}, outside of one dimensional models.  We believe this to be a physically reasonable property, yet the quantum walk formalism we developed is not sufficiently developed to prove this property, which may rely on some more sophisticated clustering approximations.  We hope that this issue can be resolved in the near future.   Thirdly, we have only proven in one spatial dimension that there does not exist \emph{any} finite density state where quantum information cannot spread with arbitrarily large velocity.  In higher dimensions, our bound only shows that such states are vanishingly rare in the grand canonical ensemble.  The technical reason why we were unable to prove that no such state with ``superluminal" propagation can exist is essentially that the density matrix $\rho$ defined in (\ref{eq:rho}) is unique in that it commutes with all number-conserving $H(t)$ \emph{and} is a tensor product: namely, the density matrix has a strict form of locality.  These properties of $\rho$ are crucial to the anti-Hermitian nature of $\mathcal{L}$ (in our non-trivial inner product), and to spatial locality in our operator growth formalism (we can build an orthonormal operator basis by taking the tensor product of single-site operators).  We expect that no state with superluminal propagation exists; however, techniques which combine ours with those of \cite{kuwahara2021liebrobinson} may be required to definitively resolve this issue.

Looking forward, we anticipate our formalism will find wide applicability and generalizations.  First and foremost, our bound on information spreading in the 1d Bose-Hubbard model asymptotically agrees with previous numerics \cite{Light08,barmettler,Light14,Light18,Takasueaba9255,kennett} on the velocity of correlations both at low and high boson density.   Remarkably, this implies that there is no complicated, time-dependent protocol that can transmit information parametrically faster than simple time-evolution in the canonical 1d Bose-Hubbard model, perturbed away from the insulating state (which we would normally think of as having very slow dynamics)!  Our strong form of linear light cone, proven in 1d, also implies that (within the linear light cone) the Bose-Hubbard model is not much harder to simulate than an interacting spin model on a lattice;  this result may be somewhat surprising, as simulating Bose-Hubbard-like models has been conjectured to be a good experimental test of quantum supremacy \cite{Aaronson2011,Tran:2020xpc,maskara2020complexity,deshpande,Muraleedharan2018}. 

Since Corollary \ref{cor63} holds for arbitrary one-dimensional states with a finite number-density of bosons, our result rules out the possibility of using bosons to parametrically speed up quantum information transfer or signaling.   Regardless of the details of the microscopic time-dependent protocol, this result holds so long as the Hamiltonian only includes density-dependent interactions.  We anticipate this result can be extended to higher dimensions, but leave a proof to future work. 

There are many further scenarios where bounds on bosonic quantum information dynamics are highly desirable.  In trapped ion crystals \cite{Britton_2012} or cavity quantum electrodynamics \cite{thompson2020,Leroux_2010}, the Hamiltonian involves spins coupled to bosons.   Since these models do not typically conserve the number of bosons, our methods will need to be modified somewhat to remain applicable \cite{LRion}.  As both trapped ions and cavity-QED have been proposed as platforms for quantum computation or metrology, a fundamental speed limit on the time to implement a quantum gate (e.g.) is highly desirable.  Understanding bosonic dynamics in the presence of long-range hopping or interactions \cite{carleo} could also be important in generalizing our methods to these systems.  

Outside of quantum technologies, there are also interesting conjectures about fundamental speed limits on interacting phonons in metals \cite{Mousatov_2020}, which typically exhibit more complicated Hamiltonians than the Bose-Hubbard model.  Our new methods for studying bosonic dynamics may help to prove conjectured bounds in \cite{Mousatov_2020}.

Finally, there are many fundamental open questions about the nature of speed limits in \emph{finite temperature correlators}, which cannot be addressed using standard Lieb-Robinson techniques.  Recent results from gauge-gravity duality \cite{Maldacena:2015waa} have suggested universal temperature-dependent bounds on the emergent light cone that arises in finite temperature correlators. We expect that similar methods to those developed here -- namely, developing \emph{thermal} inner products on operator space \cite{lucasT} by replacing our $\rho=\mathrm{e}^{-\mu N}$ with $\rho = \mathrm{e}^{-\beta H}$ -- will aid in rigorously proving bounds on the thermal butterfly velocity that controls information spreading at finite temperature, which is believed exhibit universal dependencies on temperature \cite{blake,swingle}.  We have made early progress towards this question by affirmatively proving the universality of this conjectured temperature dependence in the dynamics of a single quantum particle \cite{chaofuture}; however, the extension to many-body systems remains an important open problem \cite{Han:2018bqy}.  If our methods can instead give strong bounds on butterfly velocities, they may further lead to a solution of longstanding challenges associated with whether -- and why -- the time scale of quantum dynamics at low temperature is always bounded by the ``Planckian time" $\hbar/k_{\mathrm{B}}T$ \cite{kss, Hartnoll_2014, lucasT}.   

While there remain many critical outstanding questions on the ``speed limits" on quantum dynamics, the quantum walk methods we have developed in this paper lead to a qualitatively new way of thinking about constraining quantum dynamics.  The methods introduced in this paper were particularly well-suited to tackling two important and common challenges that have arisen in the past: the unboundedness of the Hamiltonians of bosonic systems, and the desire to bound dynamics not in the entire Hilbert space, but only in an experimentally relevant (here, finite density) subspace.  We anticipate they could aid progress on the challenging mathematical physics problems highlighted above, in the near future.
\section*{Acknowledgements}
We thank Chi-Fang Chen and Abhinav Deshpande for useful feedback on a draft.  This work was supported by a Research Fellowship from the Alfred P. Sloan Foundation under Grant FG-2020-13795, and by the U.S. Air Force Office of Scientific Research under Grant FA9550-21-1-0195.

\bibliography{boson_OG}

\end{document}